\documentclass[11pt, a4paper]{article}

\RequirePackage{amsthm,amsmath,amsfonts,amssymb}
\RequirePackage[authoryear]{natbib}
\RequirePackage[colorlinks,citecolor=blue,urlcolor=blue]{hyperref}
\RequirePackage{graphicx}
\usepackage{booktabs}
\usepackage{multirow}
\usepackage{algpseudocode}
\usepackage[ruled]{algorithm2e}
\usepackage{subcaption} 
\usepackage{dsfont}
\usepackage{geometry}
\usepackage{authblk}

\theoremstyle{plain}

\newtheorem{theorem}{Theorem}[section]
\newtheorem{lemma}[theorem]{Lemma}
\newtheorem{proposition}[theorem]{Proposition}

\theoremstyle{definition}
\newtheorem{definition}[theorem]{Definition}
\newtheorem*{example}{Example}

\newtheorem*{remark}{Remark}

\begin{document}

\title{Credible rectangles for high-dimensional posterior comparison}

\author[1]{Alice Chevaux}
\author[1]{Julyan Arbel}
\author[2]{Guillaume Kon Kam King}
\author[1]{Sophie Achard}
\affil[1]{Univ. Grenoble Alpes, CNRS, Inria, LJK, F-38000 Grenoble, France}
\affil[2]{Université Paris-Saclay, INRAE, MaIAGE, 78350 Jouy-en-Josas, France}
\date{} 

\maketitle

\begin{abstract}
We propose a Bayesian framework for uncertainty quantification and comparison in brain connectivity graph analysis. Standard graph-based approaches typically rely on point estimates of correlation matrices, overlooking the uncertainty induced by high-dimensional estimation from limited data. Our methodology constructs and compares credible hyperrectangles derived from posterior distributions, providing interpretable tools for subject-level inference and longitudinal monitoring. We develop scalable algorithms for estimating these regions in high dimensions and establish theoretical guarantees in the inverse-Wishart model for resting-state fMRI data, including a Bernstein--von Mises theorem for correlation matrices and control of a Bayesian family-wise error rate. The proposed framework enables principled detection of significant connectivity differences both globally and locally while preserving joint dependency structures. While demonstrating competitive performance against multiple-testing procedures on synthetic datasets, our approach also facilitates the direct comparison of two distinct scans from a single patient, a capability currently absent from the literature. We leverage this novelty on real datasets to improve interpretability. Beyond fMRI data, the approach provides a general framework for comparison problems in high-dimensional dependent settings.
\end{abstract}

\section{Introduction}

Understanding and comparing brain connectivity at the individual level is a central challenge in modern neuroimaging. While graph-based representations derived from  functional Magnetic Resonance Imaging (fMRI) data have become standard, they often rely on point estimates that overlook the uncertainty inherent in their construction. In this work, we propose a Bayesian framework to quantify and propagate this uncertainty, enabling principled comparisons between connectivity patterns at both the individual and group levels. Crucially, our methodology is designed to be adaptable across various models, addressing a critical unmet clinical need: the absence of existing tools for robust decision-making when comparing single-subject scans, either against one another or against a reference value for instance derived from healthy control subjects. We believe this approach significantly enhances interpretability for physicians, directly translating complex statistical uncertainties into actionable clinical insights.

\subsection{Motivations}

Graph theory has emerged as a well-established framework for modeling and interpreting the complex connectivity patterns inherent in brain imaging data. Yet, the construction of representative brain graphs remains a delicate, multi-step procedure. The standard pipeline typically entails estimating a correlation matrix from the timeseries obtained with preprocessed imaging signals, followed by the application of a graph estimation model. While numerous estimators have been developed to stabilize the correlation matrix, the variance intrinsic to this estimation step is often underestimated or neglected when the matrix is subsequently treated as a graph.

This statistical limitation is exacerbated by the escalating heterogeneity and scale of contemporary neuroimaging datasets and the lack of standardized preprocessing procedures. Modern studies must account for variable acquisition durations, diverse scanner hardware, distinct noise profiles, and expanding population cohorts. Moreover, the increasing accessibility of longitudinal data has elevated the importance of follow-up analyses for clinical applications, facilitating both group-level comparisons and individual patient monitoring. Consequently, whether for comparative studies or longitudinal tracking, there is a critical need to quantify the uncertainty propagating through the acquisition and analysis pipeline from a purely statistical standpoint, independent of the system's physical constraints.

To gain a clearer understanding of the uncertainty that occurs and propagates during the analysis of brain imaging data, we adopt a Bayesian framework. The Bayesian approach is particularly well-suited for individual-level inference, as it enables the direct quantification and propagation of uncertainty in non-asymptotic settings, while naturally preserving the joint dependency structure intrinsic to high-dimensional brain connectivity networks. Key to our approach is the exploitation of a simple Bayesian model using credible regions. Contributions include a general methodology leveraging credible hyperrectangles to furnish statistical tools for comparing two brain connectivity graphs and enhancing interpretability at the subject level. We further provide algorithms to compute these credible hyperrectangles in high-dimensional settings. Specifically, within the context of an inverse-Wishart model for resting-state fMRI (rs-fMRI) data, we establish theoretical result, including a Bernstein--von Mises theorem and a procedure to control a Bayesian analogue of the Family-Wise Error Rate (FWER).

Although this work primarily focuses on rs-fMRI data due to its prominence in connectivity research, the proposed methodological framework is broadly applicable. Most of our findings generalize to other high-dimensional neuroimaging modalities, such as electroencephalography (EEG), which face analogous challenges in estimating stable correlation structures from noisy, high-dimensional time series.

\subsection{Data specificities}

This paper concerns the analysis of rs-fMRI datasets. While we do not discuss preprocessing or parcellation of the brain, we focus on wavelet decomposition of the signal \citep{whitcher_wavelet_2000}. The properties of wavelets provide a context in which assuming that the observations are independent and identically distributed (i.i.d.) is a reasonable hypothesis, as they erase short-range dependencies \citep{achard_multivariate_2016}. The choice of wavelet scale is made to ensure that frequencies remain below 0.1 Hz, as resting-state oscillations relevant to functional connectivity are considered most salient within this low-frequency range \citep{biswal_functional_1995,fox_spontaneous_2007,achard_efficiency_2007}.

We have chosen to study datasets which have already been analysed using wavelet analysis. Some conclusions have already been drawn from these datasets at a global level (group comparisons), but our aim is to assess significant results at an individual level.

The first dataset is the Human Connectome Project (HCP) test-retest database \citep{smith_resting-state_2013,glasser_minimal_2013} that has been studied by \cite{termenon_reliability_2016}. It consists of $100$ healthy volunteers. Each subject underwent two rs-fMRI acquisitions on different days. This dataset is useful for assessing the reliability of the method: for a reliable method, we would expect similar results when studying the same subject. To provide a point of reference, the similarity within a subject is compared to the similarity between two different subjects. This within-subject and between-subject comparison determine whether a method can be used to characterise a subject. Preprocessing is described in \cite{termenon_reliability_2016}, and we focus on scale 4 of wavelets to obtain the desired frequency. The parcellation using the modified anatomic-automatic labeling (AAL) atlas \citep{tzourio-mazoyer_automated_2002} gives us 89 Regions of Interest (ROI) to represent the brain, and the number of observations is $n = 68$. 

The second dataset, which is the main focus of this project, consists of scans of patients who have experienced a brain injury resulting in a coma. Patients were recruited in the ICU at Grenoble-Alpes University Hospital; ten of them were screened twice: once 30 days and once 60 days after the brain injury. We focus on these $10$ patients and $10$ healthy volunteers scanned in the same set-up to study the recovery of the patients after brain injury. Details of the preprocessing and parcellation choices are available in \cite{oujamaa_functional_2023}. This time, there are $107$ regions of interest (ROIs) using modified Anatomic-Automatic-Labelling 3 (AAL3) \citep{rolls_automated_2020} and scale 3 of the wavelet decomposition was used which gives us $95$ observations per ROI. We also had access to ten control volunteers who underwent the same protocol, providing data on controls under the same conditions. This raises several clinical questions: can we characterise the evolution of a patient? Given the variety of brain injuries, is it possible to provide individual follow-up? Is there a way to identify remission? These clinical questions form the basis of this work. 

\subsection{Contributions and organization}
To address the lack of uncertainty-aware subject-level comparison tools in high-dimensional connectivity analysis, we propose a general Bayesian framework based on credible rectangles derived from posterior distributions. Although motivated by rs-fMRI connectivity analysis, the methodology is broadly applicable to high-dimensional comparison problems involving dependent parameters.

Our main contributions are fourfold:

\noindent\textit{Construction of credible rectangles in high dimension.}
    We introduce quantile-based credible hyperrectangles adapted to high-dimensional posterior distributions and study their statistical properties for uncertainty quantification and decision-making.
    
\noindent\textit{A comparison framework for posterior objects.}
    We develop global and local comparison procedures based on overlap properties of credible rectangles, enabling interpretable subject-level inference and longitudinal analysis.
    
\noindent\textit{Scalable algorithms for high-dimensional settings.}
    We propose several algorithms for estimating credible rectangles under severe computational and memory constraints, including online and sliced-quantile strategies designed for dimensions encountered in neuroimaging applications.
    
\noindent\textit{Theoretical guarantees and Bayesian error control.}
    In the inverse-Wishart model for correlation matrices, we establish a Bernstein--von Mises theorem for correlations, consistency results for support recovery, and control of a Bayesian analogue of the family-wise error rate (BFWER).

The remainder of the paper is organized as follows. Section~\ref{sec:credibleregions} introduces the general framework for credible rectangles and posterior comparison. Section~\ref{sec:correlation} specializes the methodology to correlation matrices and establishes the associated theoretical guarantees. Section~\ref{sec:resultssimu} evaluates the proposed procedures on synthetic datasets, while Section~\ref{sec:realdatasets} presents applications to longitudinal and subject-level rs-fMRI connectivity analysis.

All proofs are deferred to Appendix, code is available on \hyperlink{https://gricad-gitlab.univ-grenoble-alpes.fr/chevaual/credible_rectangles_for_comparison_fmri}{Gitlab}.

\newpage
\section{Related works in fMRI study}

\paragraph*{Group comparison}

Several methods have been proposed for comparing groups of patients across various pathologies. Most approaches rely on statistical tests applied to specific terms or network metrics, such as multiple testing procedures on correlation matrix coefficients or graph-theoretical measures \citep{kim_comparison_2014, zalesky_network-based_2010}. Other studies, notably \cite{ginestet_hypothesis_2017}, have advanced hypothesis testing at the global graph level, with potential post-hoc interpretation at the edge level. However, to achieve sufficient statistical power, these group-centric frameworks inherently rely on a more or less explicit assumption of homogeneity within the studied populations. This requirement poses a significant limitation in clinical contexts where patient variability is high. Indeed, the presence of heterogeneous subgroups or individual outliers can dilute group-level effects, leading to reduced sensitivity and potentially explaining the modest performance of such methods in certain studies. By averaging signals across subjects, these approaches inevitably overshadow individual specificities, making it difficult to derive biomarkers that are representative of single patients. Consequently, we argue that shifting the paradigm towards subject-level methods is essential. Such an approach would not only better account for group heterogeneity and robustness against outliers but also provide the granular tools necessary for precise individual patient monitoring and personalized clinical follow-up.

\paragraph*{Continuous individual models}
At the individual level, continuous models primarily serve as a robust preprocessing step rather than a tool for direct interpretation. Their main objective is to provide a stable estimator of the correlation matrix, significantly reducing the estimation variance inherent to empirical correlation matrices, especially when the number of time points is limited relative to the number of regions. This shrinkage towards a structured target is achieved through frequentist approaches, such as the linear shrinkage estimator proposed by \cite{ledoit_honey_2003}, or within a Bayesian framework by specifying an inverse-Wishart prior on the covariance matrix, often centered on the identity matrix \citep{gelman_bayesian_1995}. While these methods yield full matrices that can be used to sample posterior distributions of transformations like partial correlations \citep{marrelec_using_2007}, the resulting dense connectivity patterns remain difficult to interpret biologically without further thresholding or processing.

\paragraph*{Sparse individual models}
Conversely, sparse models aim to enhance interpretability by explicitly identifying a subset of non-zero edges, effectively defining the support of the connectivity network. While a substantial body of literature exists on sparse graphical models based on the precision matrix, our focus here remains strictly on the correlation structure, thus excluding precision-based approaches. Within the realm of sparse correlation estimation, two distinct strategies emerge. The first relies on multiple testing procedures to determine the network support; methods controlling the Family-Wise Error Rate (FWER), such as Bonferroni \citep{bonferroni_teoria_1936} or Holm--Bonferroni \citep{holm_simple_1979} corrections, offer rigorous statistical control over false positives but can be considered too conservative in practice. The second strategy involves direct estimation of a sparse correlation matrix, ensuring the semi-definite positive property of the estimator \citep{bien_sparse_2011}. However, these penalized estimation techniques do not offer dedicated tools for comparing two individuals, and they rely on a strong sparsity assumption that may not always reflect the complexity of real-world biological networks, potentially overlooking weak yet meaningful connections as it was shown in previous benchmark \citep{chevaux_benchmarking_2025}.

Ultimately, while these individual-level methods effectively address covariance estimation and sparsity, they fail to provide statistical tools for directly comparing two scans from the same patient. This gap in the literature precludes rigorous longitudinal analyses and the detection connectivity changes at the subject level.

\section{Credible rectangle for comparisons problems} \label{sec:credibleregions}

In this section, we consider the multivariate setting where $\theta = (\theta_1,\dots,\theta_d) \in \mathbb{R}^d$, temporarily setting aside the specific matrix constraints that will be addressed in Section~\ref{sec:correlation}. We assume a Bayesian framework characterized by a prior distribution $\pi(\theta)$ and a posterior distribution $\pi(\theta|\mathbf{y})$ for the parameter of interest. To quantify the uncertainty associated with a Bayesian estimator, it is necessary to summarize this $d$-dimensional distribution. A standard approach is to construct a credible region $\mathcal{R} \subset \mathbb{R}^d$. For a given significance level $\alpha \in (0,1)$, a credible region of level $1-\alpha$ is defined as a subset satisfying:

\[
\mathbb{P}(\theta \in \mathcal{R} \mid \mathbf{y}) = \int_{\mathcal{R}} \pi(\theta|\mathbf{y}) \, d\theta \geq 1 - \alpha.
\]

While working in a Bayesian framework typically provide ways to sample easily from the posterior distribution, defining such a region in high dimensions remains a non-trivial challenge. The following section explores various constructions of multidimensional credible regions and their specific properties. This problem bears some similarity with the construction of multivariate confidence regions in frequentist statistics. However, whereas frequentist methods often rely on asymptotic normality or restrictive parametric assumptions about the estimator's distribution, the Bayesian approach leverages the full posterior distribution directly. This allows for uncertainty quantification under the assumed model for finite samples, without resorting to asymptotic approximations.

\subsection{Credible region for high-dimensional parameters}

\subsubsection{State of the art}

From a Bayesian perspective, the construction of credible regions is largely divided into two types: highest posterior density (HPD) regions and quantile-based intervals. The former are favoured since they offer the smallest volume for a given credible level, but computing them becomes intractable in high dimensions as kernel-based methods typically fail when $d > 100$ \citep{ferrie_high_2014}. In the context of neuroscience, where dimensions reach approximately $5000$, this forces the adoption of regions parameterised by a finite number of parameters, specifically the Cartesian product of quantile-based intervals. Often termed simultaneous intervals \citep{joshi_bayesian_2023} or bands \citep{held_simultaneous_2004}, these structures are increasingly recognized as necessary for managing joint uncertainty; recent post-processing approaches \citep{griffin_expressing_2025} validate this reliance on Cartesian products (or "rectangles") for visualization and decision-making. However, while existing methods leverage this structure primarily for variable selection using $50\%$ credible sets as an alternative to Posterior Inclusion Probabilities, our work addresses medical applications and global risk control. In these contexts, rigorous multiple testing corrections mandate high-coverage regions ($\geq 95\%$), rendering rectangular approximations a computational imperative rather than a mere simplification for $d \approx 5000$, overcoming the memory limitations that restrict prior simultaneous interval methods.

From a frequentist perspective, this type of uncertainty quantification can also be applied to the estimator using confidence regions. The type of region is also restrained and tends to produce similar results to the Bayesian regions, for instance with simultaneous confidence intervals \citep{tang_simultaneous_2020}. Other interval products are proposed in \cite{chang_confidence_2018}, but with intervals of the same length, which differs from quantile-based intervals.

Overall, none of these methods are computationally feasible for dimensions larger than $100$.

\subsubsection{Criteria for credible region}
Given the state of the art and the application context, we choose to work with a Cartesian product of quantile-based intervals, which we call a \textit{quantile-based rectangle}. This gives us a parameterised region that depends only on the quantile level of each dimension, making estimation feasible even in high dimensions. The choice of a rectangular shape is not optimal in terms of volume, since an ellipsoid would capture a similar posterior mass with a lower volume. However this choice allows to assess the effect of each dimension, which is useful for interpretation as we demonstrate later. Working with an ellipsoid region would complexify the parametrisation just to be once again projected as an interval on each dimension for the interpretation. For these reasons, estimating rectangle regions directly is more straightforward.

Reservations about the use of quantile-based intervals still apply: this type of region is strongly discouraged if the distribution is multimodal. However, in the case of a unimodal distribution, this could lead to good results. See \citet{hopfl_marginal_2024} for criteria under which the HPD region can resemble a quantile-based region. It could be possible to compute the HPD region and use its marginal projection to apply the methodology proposed in Section~\ref{sec:interpretability}. However, this will ultimately result in an increase of volume.

To simplify the parametrisation further, we choose to have a \textit{single quantile level} $t$ for each interval. This is particularly sensible for our neuroscience application, as we do not want to differentiate the brain regions a priori.

\begin{definition}[Quantile-Based Rectangle $\&$ Credible Rectangle]
Let $\theta=(\theta_1, \dots, \theta_{d})$ be the parameter of interest. For each $\theta_i$, let $q_i(t)$ be the quantile of level $t$ for the marginal posterior distribution of $\theta_i$. The corresponding \textit{quantile-based rectangle} at the level $t$ is defined as:
\[
R(t)= \prod_{i=1}^{d}\left[q_{i}\left(\frac{t}{2}\right),q_{i}\left(1-\frac{t}{2}\right)\right]. \]

Such a region is called a \textit{credible rectangle} if, for some $t=t_{\alpha}$, it also verifies:
\[\mathbb{P}(\theta \in R(t_{\alpha})|\mathbf{y})\geq 1-\alpha.\]
\end{definition}

The existence and estimation of such rectangles are discussed Section~\ref{sec:construction}. 

\subsection{Methodological versatility of credible rectangles}
\label{sec:interpretability}

The originality of this work lies in leveraging credible regions as a unified framework for statistical comparison at the individual level. In this section, we develop two distinct methodological approaches: first, comparing an estimated parameter vector $\theta$ against a fixed reference $\theta_{\text{ref}}$ such as a population template, and second, directly comparing two parameter vectors $\theta_A$ and $\theta_B$ derived from distinct datasets (e.g., longitudinal scans of the same subject or data from two different subjects). While the conceptual basis of these comparisons applies broadly to credible sets, their practical implementation and interpretation in high-dimensional settings relies on the rectangular shape. Consequently, some methods presented herein are specifically tailored for credible rectangles.

\subsubsection{Comparison to a reference parameter}
\label{sec:comparaisonreference}

We first address the fundamental question of whether the parameter of interest $\theta$ deviates globally from a reference value $\theta_{\text{ref}}$. The following proposition establishes some links between the probability of $\theta$ and $\theta_{\text{ref}}$ being close if $\theta_{\text{ref}}$ does not belong to the credible rectangle of $\theta$.

\begin{proposition}[Global difference]
\label{prop:comparisonthetaref}
Let $R_{\theta} \subset \mathbb{R}^d$ be a credible rectangle for $\theta$ with credibility level $1-\alpha$, satisfying $\mathbb{P}(\theta \in R_{\theta} \mid \mathbf{y}) \geq 1-\alpha$. Let $\theta_{\text{ref}} \in \mathbb{R}^d$ be a reference value such that $\theta_{\text{ref}} \notin R_{\theta}$. Let $\| \cdot \|$ be any norm on $\mathbb{R}^d$. Define the minimal distance $d$ between $\theta_{\text{ref}}$ and $R_{\theta}$ with respect to this norm as:
\[
r = \textup{dist}(\theta_{\text{ref}}, R_{\theta}) = \inf_{\theta' \in R_{\theta}} \| \theta' - \theta_{\text{ref}} \|.
\]
Then, $r$ is positive and the posterior probability that $\theta$ lies strictly closer to $\theta_{\text{ref}}$ than $r$ is bounded by $\alpha$:
\[
\mathbb{P}(\| \theta - \theta_{\text{ref}} \| < r \mid \mathbf{y}) \leq \alpha.
\]
\end{proposition}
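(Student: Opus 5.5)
The plan has two parts: first show that $r>0$, then show that the event $\{\|\theta-\theta_{\text{ref}}\|<r\}$ is disjoint from $\{\theta\in R_\theta\}$ and apply the credibility bound. For positivity, I will use that $R_\theta$ is a Cartesian product of compact intervals $[q_i(t/2),q_i(1-t/2)]$, each with finite endpoints for $t\in(0,1)$. Hence $R_\theta$ is closed (indeed compact) in $\mathbb{R}^d$.

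The map $\theta'\mapsto\|\theta'-\theta_{\text{ref}}\|$ is continuous, since every norm on $\mathbb{R}^d$ is equivalent to the Euclidean one and hence continuous. On the nonempty compact set $R_\theta$ it therefore attains its infimum at some $\theta^\star\in R_\theta$, so $r=\|\theta^\star-\theta_{\text{ref}}\|$. Because $\theta_{\text{ref}}\notin R_\theta$, we have $\theta^\star\neq\theta_{\text{ref}}$, and so $r>0$. An alternative that avoids compactness is to note that the complement of a closed set is open. Then some norm ball $B(\theta_{\text{ref}},\varepsilon)$ misses $R_\theta$, which gives $r\ge\varepsilon>0$.

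For the probability bound, I will show the set inclusion $\{\theta:\|\theta-\theta_{\text{ref}}\|<r\}\subseteq \mathbb{R}^d\setminus R_\theta$. If $\theta\in R_\theta$, then by definition of the infimum $\|\theta-\theta_{\text{ref}}\|\ge r$, which is exactly the contrapositive. Monotonicity of the posterior measure then gives
\[
\mathbb{P}(\|\theta-\theta_{\text{ref}}\|<r\mid\mathbf{y})\le \mathbb{P}(\theta\notin R_\theta\mid\mathbf{y}) = 1-\mathbb{P}(\theta\in R_\theta\mid\mathbf{y})\le\alpha .
\]
Both events are Borel (an open ball and the complement of a closed set), so these probabilities are well defined.

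The only genuinely delicate point is the positivity of $r$, which relies on $R_\theta$ being closed. I will make explicit that the marginal quantiles are finite real numbers, so each interval is a closed bounded interval; this holds whenever $t\in(0,1)$, which is implicit in the definition of a credible rectangle. The inequality itself is pure set inclusion and holds even if $r$ were zero, in which case the event would be empty.
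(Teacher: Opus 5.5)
Your proof is correct and follows essentially the same route as the paper: closedness of the rectangle gives an attained, strictly positive distance $r$, and the open norm ball of radius $r$ around $\theta_{\text{ref}}$ is disjoint from $R_\theta$, so its posterior mass is at most $\mathbb{P}(\theta\notin R_\theta\mid\mathbf{y})\le\alpha$. You also spell out a detail the paper leaves implicit, namely why the infimum is attained: the quantiles are finite for $t\in(0,1)$, which makes $R_\theta$ compact (and it is nonempty).
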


This proposition establishes that if the reference value is excluded from the credible region for $\theta$, we have a control over the probability of $\theta$ being close to $\theta_{\text{ref}}$. Thereby it validates the use of $\theta_{\text{ref}}$ being inside a credible region as a rigorous decision rule for global difference. Framing the decision in terms of closeness (i.e., $\| \theta - \theta_{\text{ref}} \| < r$) rather than strict equality ($\theta = \theta_{\text{ref}}$) resolves a fundamental issue in continuous inference that is that the probability of equality is zero, rendering direct probability statements about equality uninformative. By shifting the focus to whether $\theta$ lies within a $r$-neighborhood of the reference, we define a non-trivial event with positive measure, providing a meaningful quantification of difference that avoids the degeneracy of point-null hypotheses in continuous spaces. This also provide an indicator $r$ that quantify the separation of $\theta_{\text{ref}}$ from $\theta$. This result can be extended to any closed credible region.

While the global test confirms the presence of an anomaly, it does not identify its specific sources, to enable precise biological interpretation, we now decompose the decision into component-wise decision of differences. We aim to construct a decision vector $\delta \in \{0,1\}^d$ characterizing significant deviations from $\theta_{\text{ref}}$ in each dimension. 

Directly testing for equality ($\theta_i = \theta_{\text{ref},i}$) is ill-posed in a continuous framework, as the posterior probability of exact equality is zero. To circumvent this, we formulate the problem as a signed decision process. We introduce two distinct decision vectors, $\delta^+$ and $\delta^-$, both in $\{0,1\}^d$:
\begin{itemize}
    \item $\delta_i^+ = 1$ indicates a significant positive shift (i.e., we conclude $\theta_i > \theta_{\text{ref},i}$).
    \item $\delta_i^- = 1$ indicates a significant negative shift (i.e., we conclude $\theta_i < \theta_{\text{ref},i}$).
\end{itemize}

We define the loss functions associated with these directional decisions. A loss occurs if we declare a directional difference that is contradicted by the true parameter value. Specifically, for the positive decisions:
\[
L^+(\theta, \delta^+) = \mathds{1}\left\{ \bigcup_{i=1}^d \left( \{ \delta_i^+ = 1 \} \cap \{ \theta_i \leq \theta_{\text{ref},i} \} \right) \right\},
\]
which equals 1 if at least one false positive increase is declared (claiming $\theta_i > \theta_{\text{ref},i}$ when in fact $\theta_i \leq \theta_{\text{ref},i}$), and 0 otherwise. Similarly, for the negative decisions:
\[
L^-(\theta, \delta^-) = \mathds{1}\left\{ \bigcup_{i=1}^d \left( \{ \delta_i^- = 1 \} \cap \{ \theta_i \geq \theta_{\text{ref},i} \} \right) \right\}.
\]

The overall decision vector for significant differences (two-sided) is defined as the union of the directional decisions. Let $\delta_i = \max(\delta_i^+, \delta_i^-)$. The global loss function, representing the occurrence of any directional error, is:
\[
L(\theta, \delta) = \max\left( L^+(\theta, \delta^+), L^-(\theta, \delta^-) \right).
\]
In the Bayesian framework, controlling the posterior expected loss $\mathbb{E}_{\theta \mid \mathbf{y}}[L(\theta, \delta)] \leq \alpha$ ensures that the probability of making at least one erroneous directional claim s bounded by $\alpha$.

\begin{proposition}[Local differences]
\label{prop:localdiff}
Let $R(t) = \prod_{i=1}^d [q_i(t/2), q_i(1-t/2)]$ be a credible rectangle constructed from marginal posterior quantiles, with global credibility level $1-\alpha$, i.e., $\mathbb{P}(\theta \in R(t) \mid \mathbf{y}) \geq 1-\alpha$.
Let $\theta_{\text{ref}} \in \mathbb{R}^d$ be a reference value. We define the global decision vector $\delta \in \{0,1\}^d$ as:
\[
\forall i, \delta_i = \mathds{1}\left\{ \theta_{\text{ref},i} \notin \left[q_i\left(\frac{t}{2}\right), q_i\left(1-\frac{t}{2}\right)\right] \right\}.
\]
This decision $\delta_i$ corresponds to the union of two directional decisions: $\delta_i^+ = \mathds{1}\{ \theta_{\text{ref},i} < q_i(t/2) \}$ and $\delta_i^- = \mathds{1}\{ \theta_{\text{ref},i} > q_i(1-t/2) \}$. Then the loss defined earlier can be used and this procedure controls the posterior expected loss at level $\alpha$:
\[
\mathbb{E}_{\theta \mid \mathbf{y}}[L(\theta, \delta)] \leq \alpha.
\]
\end{proposition}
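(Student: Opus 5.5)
The plan is to show that the error event $\{L(\theta,\delta)=1\}$ is contained in the complement of $\{\theta\in R(t)\}$. Taking posterior probabilities then gives
\[
\mathbb{E}_{\theta\mid\mathbf{y}}[L(\theta,\delta)] = \mathbb{P}(L(\theta,\delta)=1\mid\mathbf{y}) \le \mathbb{P}(\theta\notin R(t)\mid\mathbf{y}) \le \alpha,
\]
where the last inequality is the credibility assumption on $R(t)$. The first equality holds because $L$ is an indicator, so its posterior expectation is the posterior probability of the error event. Note that the decision vector $\delta$ is a deterministic function of $\mathbf{y}$ through the quantiles $q_i(t/2)$ and $q_i(1-t/2)$, so it is fixed under the posterior.

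The core step is the inclusion, which I would check one directional error at a time. First, $L=\max(L^+,L^-)$ equals $1$ exactly when some index $i$ carries a positive error or a negative error.

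Suppose the positive error occurs at index $i$, so $\delta_i^+=1$ and $\theta_i\le\theta_{\text{ref},i}$. By the definition of $\delta_i^+$ we have $\theta_{\text{ref},i}<q_i(t/2)$. Combining the two gives $\theta_i<q_i(t/2)$, so $\theta_i\notin[q_i(t/2),q_i(1-t/2)]$ and therefore $\theta\notin R(t)$.

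Suppose instead the negative error occurs, so $\delta_i^-=1$ and $\theta_i\ge\theta_{\text{ref},i}>q_i(1-t/2)$. Then $\theta_i$ again falls outside its interval, and $\theta\notin R(t)$.

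These are pointwise set inclusions, so the union over $i$ and over the two directions remains inside $\{\theta\notin R(t)\}$.

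There is no real obstacle here; the only point needing care is the strict inequalities. The directional decisions use strict comparisons, $\theta_{\text{ref},i}<q_i(t/2)$ and $\theta_{\text{ref},i}>q_i(1-t/2)$. The loss events use non-strict ones, $\theta_i\le\theta_{\text{ref},i}$ and $\theta_i\ge\theta_{\text{ref},i}$. This combination yields a strict exit from the closed interval, which is exactly what is needed for $\theta\notin R(t)$, so the argument does not depend on any continuity of the marginal posteriors.

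I would also note in passing that $\delta_i^+$ and $\delta_i^-$ cannot both equal $1$, since $q_i(t/2)\le q_i(1-t/2)$ for $t\le 1$. Hence $\delta_i=\max(\delta_i^+,\delta_i^-)$ agrees with the stated indicator $\mathds{1}\{\theta_{\text{ref},i}\notin[q_i(t/2),q_i(1-t/2)]\}$, which confirms that the two formulations of the decision rule coincide.
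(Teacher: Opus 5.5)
Your proof is correct and is essentially the paper's argument: show that the directional-error event is contained in $\{\theta\notin R(t)\}$, then apply the credibility bound $\mathbb{P}(\theta\notin R(t)\mid\mathbf{y})\le\alpha$. Your version is the more careful one, because the paper writes this step as an equality of probabilities, whereas only the inclusion you prove holds in general (for instance, if every $\theta_{\text{ref},i}$ lies inside its interval then $L\equiv 0$, while $\mathbb{P}(\theta\notin R(t)\mid\mathbf{y})$ may be positive).
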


Although $\delta_i$ is defined as a simple test of difference, it inherently controls the directionality as well: if $\theta_{\text{ref},i} < q_i(t/2)$, we infer a significant increase from $\theta_{\text{ref},i}$ and if $\theta_{\text{ref},i} > q_i(1-t/2)$, we infer a significant decrease.

\subsubsection{Comparison between two parameters}
\label{sec:comparaisontwoparameters}

Having established a rigorous framework for comparing a single parameter $\theta$ against a fixed reference, we now extend this methodology to the direct comparison of two distinct parameter vectors, $\theta_A$ and $\theta_B$ (e.g., longitudinal scans of the same patient or data from two different individuals).

\begin{proposition}
\label{prop:comparisontworegions}
Let $R_A$ and $R_B$ be $(1-\alpha)$ rectangle credible regions for $\theta_A$ and $\theta_B$ respectively. Then if $R_A \cap R_B = \emptyset$ the probability of equality of the two parameters is controlled at a level $2 \alpha$:
\[ \mathbb{P}(\theta_A = \theta_B|\mathbf{y})\leq 2 \alpha. \]
\end{proposition}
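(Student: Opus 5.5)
The plan is a set-inclusion argument followed by a union bound, applied to the joint posterior of $(\theta_A,\theta_B)$ given $\mathbf{y}$. Here $\mathbf{y}$ denotes all the data, so that both $\mathbb{P}(\theta_A \in R_A \mid \mathbf{y}) \ge 1-\alpha$ and $\mathbb{P}(\theta_B \in R_B \mid \mathbf{y}) \ge 1-\alpha$ hold. These are statements about marginals of the joint posterior, and no independence between $\theta_A$ and $\theta_B$ is needed.

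\emph{First step: a set inclusion.} I would show
\[
\{\theta_A = \theta_B\} \subseteq \{\theta_A \notin R_A\} \cup \{\theta_B \notin R_B\}.
\]
Suppose $\theta_A = \theta_B$ and, for contradiction, $\theta_A \in R_A$ and $\theta_B \in R_B$. Then the common value lies in $R_A \cap R_B$, which contradicts $R_A \cap R_B = \emptyset$. Hence on the event $\{\theta_A = \theta_B\}$ at least one of the two parameters lies outside its rectangle.

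\emph{Second step: the union bound.} Monotonicity and subadditivity of the joint posterior measure give
\[
\mathbb{P}(\theta_A = \theta_B \mid \mathbf{y}) \le \mathbb{P}(\theta_A \notin R_A \mid \mathbf{y}) + \mathbb{P}(\theta_B \notin R_B \mid \mathbf{y}) \le \alpha + \alpha = 2\alpha,
\]
where the last inequality uses the credibility level of each rectangle.

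\emph{Main obstacle.} The only delicate point is measure-theoretic rather than computational. I need $\{\theta_A=\theta_B\}$ to be measurable in the joint space, which holds because it is closed as the preimage of $0$ under the continuous map $(a,b)\mapsto a-b$. I also need both credibility statements to refer to the same conditioning information $\mathbf{y}$. If $R_A$ and $R_B$ were built from separate datasets $\mathbf{y}_A,\mathbf{y}_B$, I would state that the posterior of $\theta_A$ given $\mathbf{y}$ coincides with that given $\mathbf{y}_A$ under the model, and likewise for $\theta_B$. Finally, if the joint posterior is absolutely continuous, the left-hand side is trivially $0$; the bound is informative mainly when the joint law places mass on the diagonal. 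The argument does not depend on that and holds in general.
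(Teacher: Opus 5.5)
Your proposal is correct and matches the paper's proof: the inclusion $\{\theta_A=\theta_B\}\subseteq\{\theta_A\notin R_A\}\cup\{\theta_B\notin R_B\}$ from disjointness, then the union bound giving $\alpha+\alpha=2\alpha$. Your extra remarks, on measurability of the diagonal and on conditioning both credibility statements on the same data $\mathbf{y}$, are sound points that the paper leaves implicit.
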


Analogous to the reference comparison case, the global test of difference between $\theta_A$ and $\theta_B$ can be decomposed into component-wise decisions of difference by checking if intervals overlap for each dimension. Disjoint intervals are seen as significant differences, and the relative position tell us if it was a significant increase or decrease.

\subsection{Constructing credible rectangle}
\label{sec:construction}
The objective in this section is to construct a credible rectangle. For this the first step is to find $t$ such that: 
\begin{align}
    \mathbb{P}(\theta \in R(t)|\mathbf{y})\geq 1-\alpha.
    \label{eq:rectangle}
\end{align}

\subsubsection{Quantile level choice}
Certain properties of the region $R(t)$ can be derived for specific values of $t$ without imposing further assumptions on the prior distribution of the parameters. To establish a framework for selecting an appropriate quantile level, we first define three specific types of rectangles.

\begin{definition}[Special Rectangles]
We define the following three quantile levels and their corresponding rectangular regions:

(i) The \textit{Bonferroni-type rectangle}, denoted as $R(t^{\text{bonf}}_{\alpha})$, is based on the quantile level:
\[
t^{\text{bonf}}_{\alpha} = \frac{\alpha}{d}.
\]

(ii) The \textit{Naive rectangle} at a level $1-\alpha$, denoted as $R(t^{\text{naive}}_{\alpha})$, is defined by the quantile level:
\[
t^{\text{naive}}_{\alpha} = \alpha.
\]

(iii) The \textit{Optimal rectangle} at level $1-\alpha$, denoted as $R(t_{\alpha})$, corresponds to the optimal quantile level $t_{\alpha}$ satisfying:
\[
\mathbb{P}\left( \theta \in R(t_{\alpha}) \mid \mathbf{y} \right) = 1 - \alpha.
\]
\end{definition}

Having established these definitions, we can now characterize the coverage properties of the Bonferroni and Naive approaches, which subsequently guarantees the existence of the optimal level.

\begin{proposition}
\label{prop:specialrectangles}
Without further assumptions on the posterior distribution $\pi(\theta \mid \mathbf{y})$, the following properties hold:

(i) The Bonferroni-type rectangle satisfies the coverage requirement:
\[
\mathbb{P}\left( \theta \in R(t^{\text{bonf}}_{\alpha}) \mid \mathbf{y} \right) \geq 1-\alpha.
\]

(ii) The Naive rectangle satisfies:
\[
\mathbb{P}\left( \theta \in R(t^{\text{naive}}_{\alpha}) \mid \mathbf{y} \right) \leq 1-\alpha.
\]

(iii) Consequently, the optimal quantile level $t_{\alpha}$ exists and lies within the interval:
\[
t_{\alpha} \in \left[\frac{\alpha}{d}, \alpha\right].
\]
\end{proposition}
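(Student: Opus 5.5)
The plan is to study the coverage function $f(t) = \mathbb{P}(\theta \in R(t) \mid \mathbf{y})$ and prove (i) and (ii) by elementary bounds on unions and intersections. Part (iii) then follows from an intermediate value argument.

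For (i), take $t = \alpha/d$. I will bound the complement of $R(t)$ by a union of marginal exclusion events:
\[
\{\theta \notin R(t)\} = \bigcup_{i=1}^d \Bigl(\{\theta_i < q_i(t/2)\} \cup \{\theta_i > q_i(1-t/2)\}\Bigr).
\]
By definition of the marginal quantiles, each of the two one-sided events has posterior probability at most $t/2$. This holds for a suitable convention: $q_i(s) = \inf\{x : F_i(x) \geq s\}$, which gives $\mathbb{P}(\theta_i < q_i(s)) \leq s$ and $\mathbb{P}(\theta_i > q_i(1-s)) \leq s$. The union bound then gives $\mathbb{P}(\theta \notin R(t)\mid\mathbf{y}) \leq d \cdot t = \alpha$, which is (i).

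For (ii), I will use that $R(\alpha)$ is contained in the slab $\{\theta : \theta_1 \in [q_1(\alpha/2), q_1(1-\alpha/2)]\}$ for a single coordinate. Hence
\[
f(\alpha) \leq \mathbb{P}\bigl(\theta_1 \in [q_1(\alpha/2), q_1(1-\alpha/2)] \mid \mathbf{y}\bigr) = 1-\alpha.
\]
The last equality holds when the marginal posterior of $\theta_1$ is continuous. I would state this assumption explicitly, since for atoms the closed interval can carry more mass. This is implicit in the paper's use of exact quantiles, and it also underlies the definition of $t_\alpha$ via an equality.

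For (iii), I will show that $f$ is nonincreasing and continuous in $t$ on $(0,1)$. Monotonicity is immediate because the intervals shrink as $t$ grows, so the rectangles are nested. Continuity needs continuous, strictly increasing marginal CDFs, so that each $q_i$ is continuous and the boundary faces of $R(t)$ carry zero posterior mass. We then have $f(\alpha/d) \geq 1-\alpha \geq f(\alpha)$, and the intermediate value theorem yields $t_\alpha \in [\alpha/d, \alpha]$ with $f(t_\alpha) = 1-\alpha$.

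The main obstacle is this continuity step. It requires dominated convergence for indicator functions of the nested rectangles, together with the null-boundary property. Where these assumptions fail, I would instead define $t_\alpha = \sup\{t : f(t) \geq 1-\alpha\}$. This still lies in $[\alpha/d, \alpha]$ by (i) and monotonicity, and it satisfies the requirement with $\geq$ in place of equality.
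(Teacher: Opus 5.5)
Your proof is correct and follows the paper's argument step for step: Boole's inequality over the $2d$ one-sided tail events for (i), containment of $R(\alpha)$ in a single-coordinate slab for (ii), and the two bounds plus monotonicity of $f$ for (iii). The atomless-marginal hypothesis you add is genuinely necessary. The paper's equality $\mathbb{P}(\theta_1\in[q_1(\alpha/2),q_1(1-\alpha/2)]\mid\mathbf{y})=1-\alpha$ and its exact-level claim in (iii) both use it silently, since monotonicity alone gives no exact level. Strict increase of the marginal CDFs is not needed, because jumps of $q_i$ occur over null sets.

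The only slip is in your fallback. The bound $\sup\{t:f(t)\geq 1-\alpha\}\leq\alpha$ comes from (ii), not from (i) and monotonicity, so it can fail when no marginal is atomless; a point-mass posterior gives $f\equiv 1$. Also, without left-continuity of $f$, the supremum need not itself reach coverage $1-\alpha$.
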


The terminology in point (i) draws from the multiple-testing correction procedure proposed by \cite{bonferroni_teoria_1936}. The existence of the optimal level stated in point (iii) follows directly from the bounds established in (i) and (ii). Since the probability mass of the quantile-based region $R(t)$ is a monotonic function of $t$ (increasing as $t$ decreases), it ensures there exists a unique $t_{\alpha}$ within the specified interval such that the coverage probability is exactly $1 - \alpha$. Detailed proofs are provided in the Appendix.

An explicit form for the optimal rectangle $R(t_{\alpha})$ is usually not available. The remaining objective is to estimate such a rectangle. Since the parametrisation depends solely on the scalar $t$ which is bounded, numerical exploration is straightforward. However, estimating extreme quantiles remains computationally prohibitive, particularly due to the substantial memory required to store sufficient Monte Carlo samples. This challenge is addressed in Section~\ref{sec:computation}.

\paragraph*{Representation of the type of rectangles}

To illustrate the limitations of Naive and Bonferroni rectangles, as well as the behaviour of the optimal rectangle,  toy distributions and projections in two dimensions can be used.

\begin{example}
The first toy distribution is a multivariate normal distribution:
\begin{align}
\label{eq:sigmarho}
    \theta|\mathbf{y} \sim \mathcal{N}_d\left(\mathbf{0}, \mathbf{\boldsymbol{\Sigma}}(\rho) \right), \text{ with } \quad \mathbf{\boldsymbol{\Sigma}}(\rho)= \rho \mathbf{1}_d+(1-\rho)I_d
\end{align}

 where $\mathbf{1}_d$ is a $d \times d$ matrix of $1$ and $I_d$ is the identity matrix.
\end{example}

\begin{example}
The second toy distribution is a distribution on correlation matrices. The parameter of the inverse-Wishart distribution is a covariance matrix, which is then transformed into a correlation matrix.

\begin{align}
\label{eq:inversewishartexemple}
    \boldsymbol{\Sigma} |\mathbf{y} \sim \mathcal{IW}_p(\mathbf{\boldsymbol{\Sigma}}(\rho),\nu=50+p+2), \quad \theta =  \boldsymbol{D}^{-\frac{1}{2}} \boldsymbol{\Sigma} \boldsymbol{D}^{-\frac{1}{2}} \quad \text{with } \boldsymbol{D}=\textup{diag}(\boldsymbol{\Sigma}).
\end{align}
\end{example}

\begin{figure*}[ht!]
    \centering
    \hspace{-1.2cm} 
    \begin{subfigure}[t]{0.49\textwidth}
        \centering
        
        \vspace{-0.2cm}
        \begin{subfigure}[t]{3.5cm}
            \centering
            \includegraphics[width=\linewidth, trim=15 15 15 15, clip]{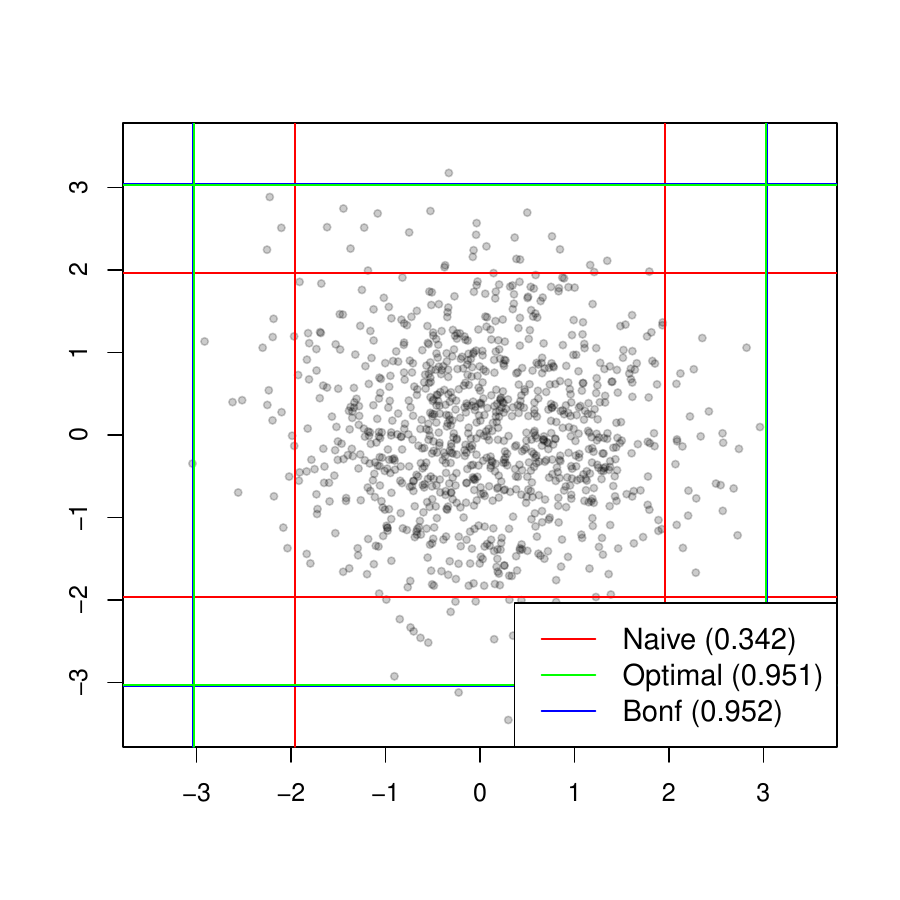}
        \end{subfigure}%
        \hspace{-0.25cm} 
        \begin{subfigure}[t]{3.5cm}
            \centering
            \includegraphics[width=\linewidth, trim=15 15 15 15, clip]{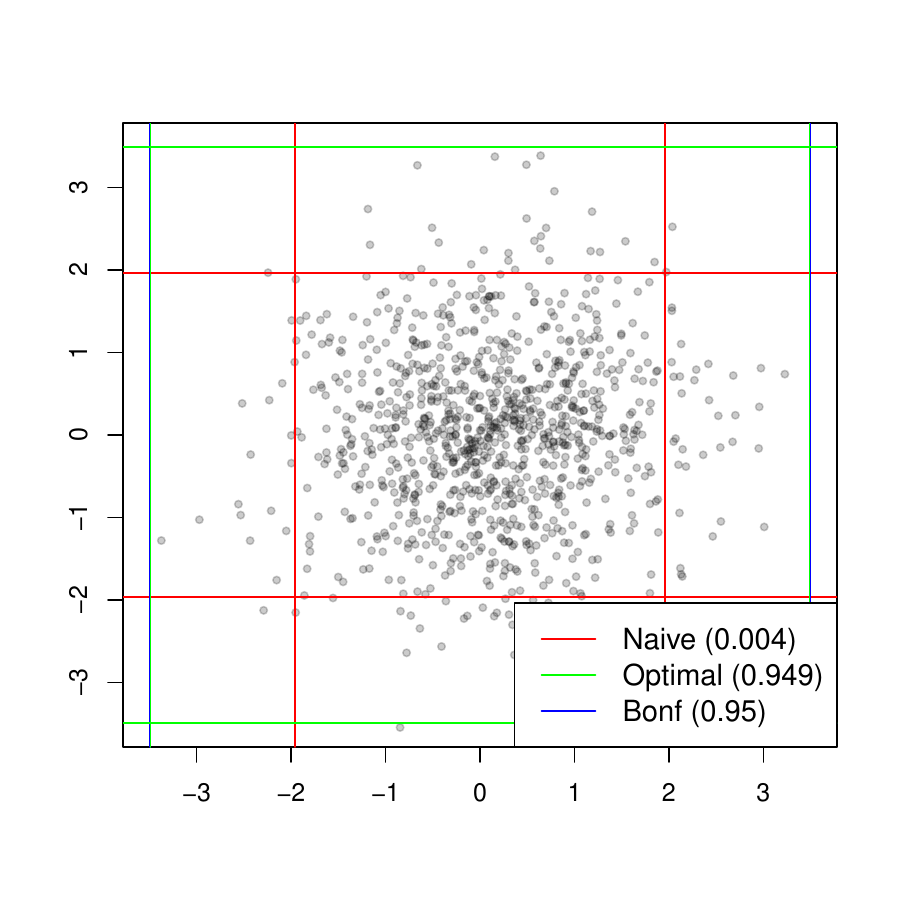}
        \end{subfigure}
        
        \vspace{-0.35cm} 
        \begin{minipage}{3.5cm}
            \centering\scriptsize $d=21, \rho=0$
        \end{minipage}%
        \hspace{-0.25cm}
        \begin{minipage}{3.5cm}
            \centering\scriptsize $d=105, \rho=0$
        \end{minipage}
        
        \vspace{-0.2cm} 
        
        \begin{subfigure}[t]{3.5cm}
            \centering
            \includegraphics[width=\linewidth, trim=15 15 15 15, clip]{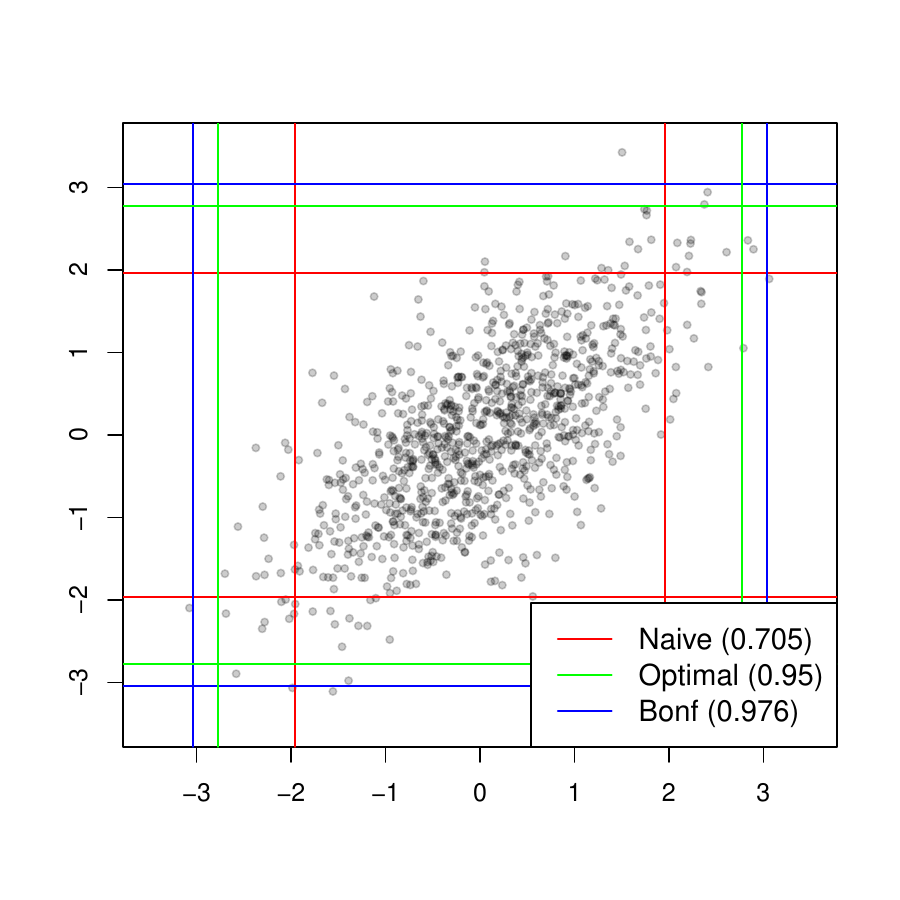}
        \end{subfigure}%
        \hspace{-0.25cm}
        \begin{subfigure}[t]{3.5cm}
            \centering
            \includegraphics[width=\linewidth, trim=15 15 15 15, clip]{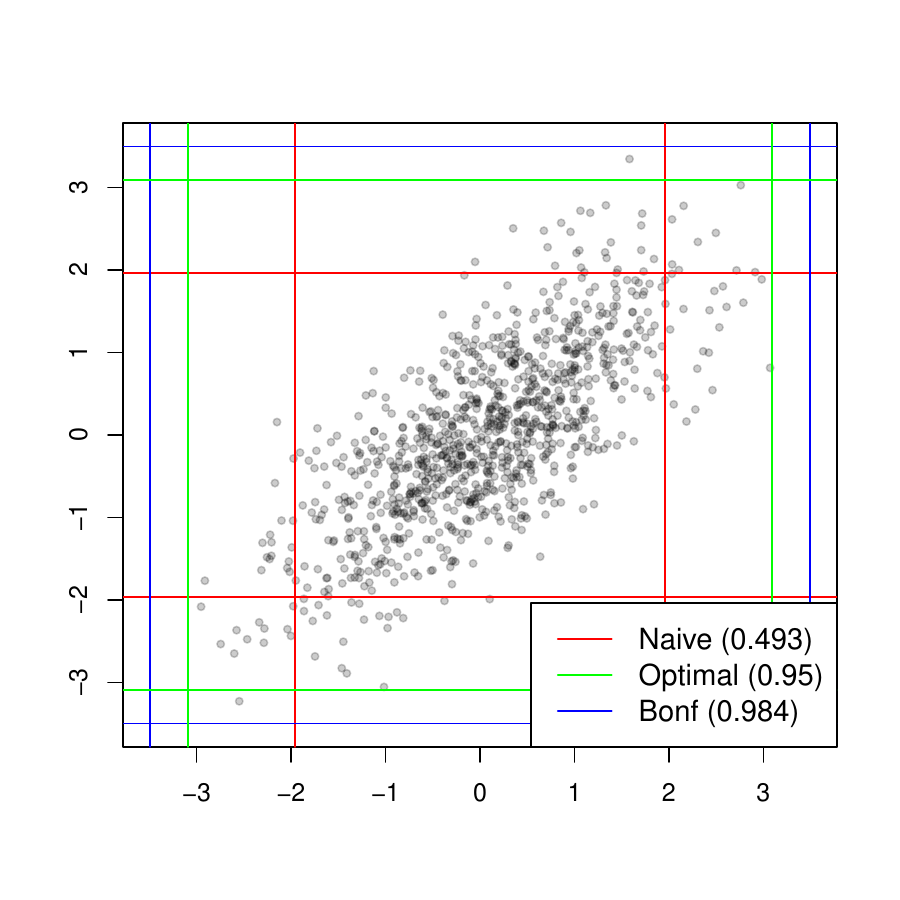}
        \end{subfigure}
        
        \vspace{-0.35cm}
        \begin{minipage}{3.5cm}
            \centering\scriptsize $d=21, \rho=0.7$
        \end{minipage}%
        \hspace{-0.25cm}
        \begin{minipage}{3.5cm}
            \centering\scriptsize $d=105, \rho=0.7$
        \end{minipage}
        
        \caption{$\mathcal{N}_d(0,\mathbf{\Sigma}(\rho))$.}
        \label{fig:normal_all}
    \end{subfigure}%
    ~
    \begin{subfigure}[t]{0.49\textwidth}
        \centering
        
        \vspace{-0.2cm}
        \begin{subfigure}[t]{3.5cm}
            \centering
            \includegraphics[width=\linewidth, trim=15 15 15 15, clip]{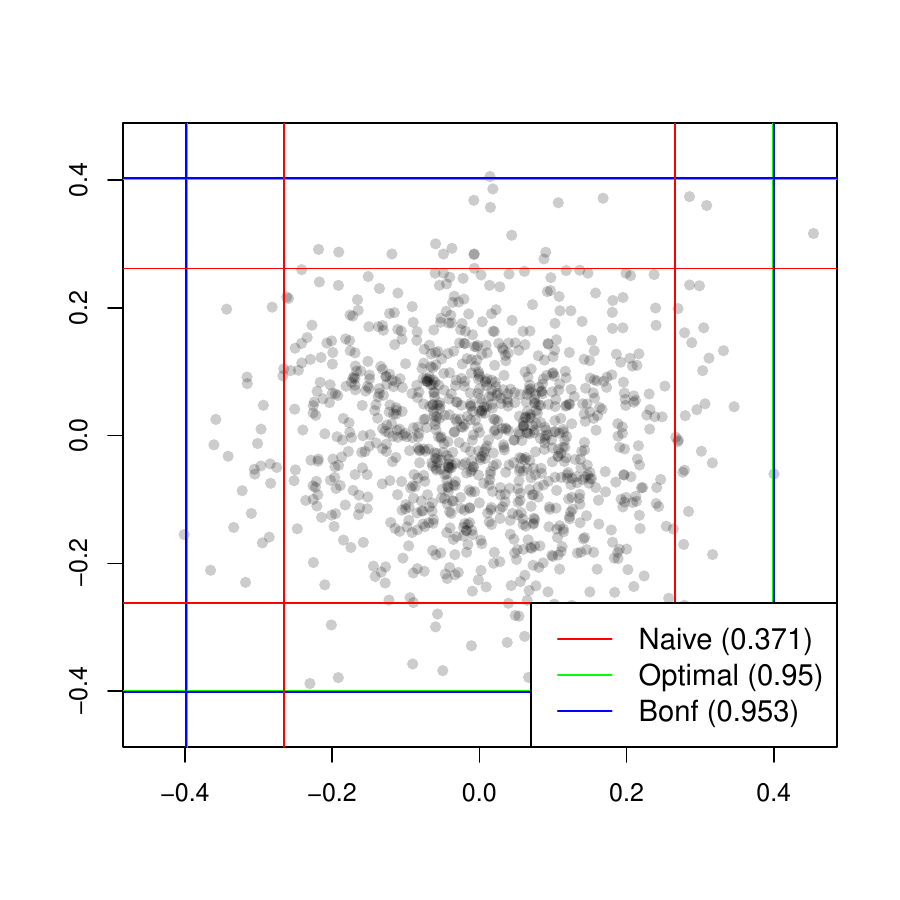}
        \end{subfigure}%
        \hspace{-0.25cm}
        \begin{subfigure}[t]{3.5cm}
            \centering
            \includegraphics[width=\linewidth, trim=15 15 15 15, clip]{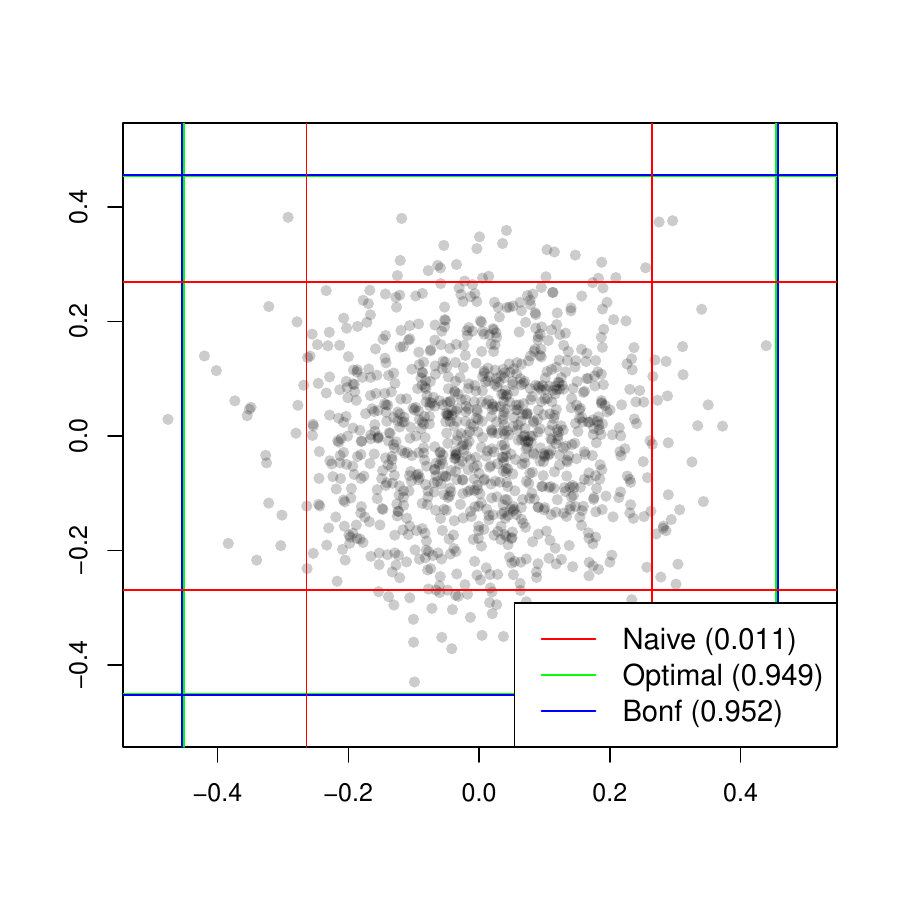}
        \end{subfigure}
        
        \vspace{-0.35cm}
        \begin{minipage}{3.5cm}
            \centering\scriptsize $p=7, \rho=0$
        \end{minipage}%
        \hspace{-0.25cm}
        \begin{minipage}{3.5cm}
            \centering\scriptsize $p=15, \rho=0$
        \end{minipage}
        
        \vspace{-0.2cm}
        
        \begin{subfigure}[t]{3.5cm}
            \centering
            \includegraphics[width=\linewidth, trim=15 15 15 15, clip]{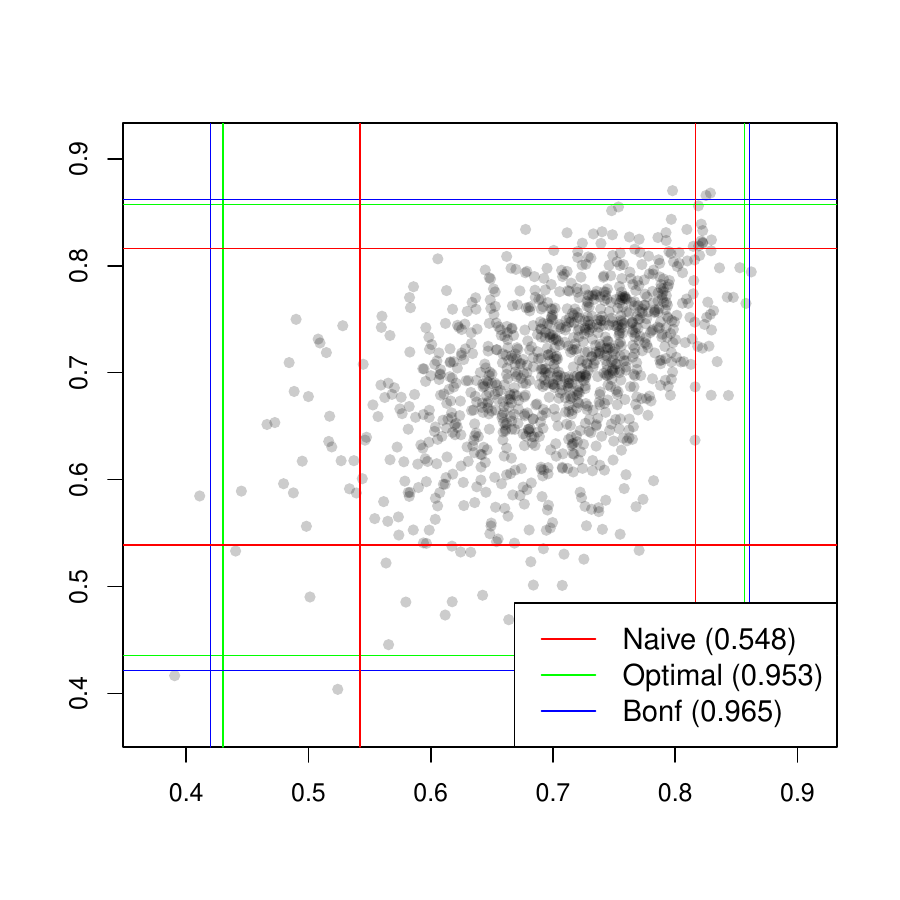}
        \end{subfigure}%
        \hspace{-0.25cm}
        \begin{subfigure}[t]{3.5cm}
            \centering
            \includegraphics[width=\linewidth, trim=15 15 15 15, clip]{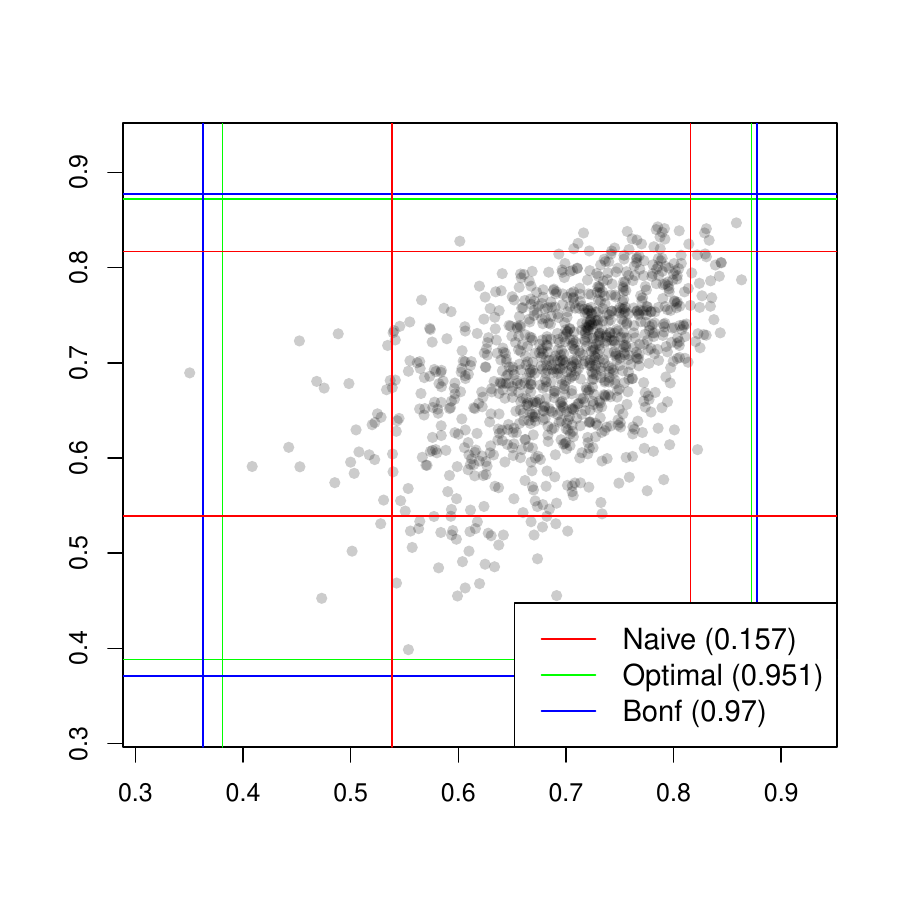}
        \end{subfigure}
        
        \vspace{-0.35cm}
        \begin{minipage}{3.5cm}
            \centering\scriptsize $p=7, \rho=0.7$
        \end{minipage}%
        \hspace{-0.25cm}
        \begin{minipage}{3.5cm}
            \centering\scriptsize $p=15, \rho=0.7$
        \end{minipage}
        
        \caption{$\mathcal{IW}_p(\mathbf{\Sigma}(\rho), \nu=50)$.}
        \label{fig:IW_all}
    \end{subfigure}
    
    \caption{Projection of Naive (red), Bonferroni-type (blue) and Optimal (green) rectangles. Points represent $1000$ samples projected in the first two dimensions. Coverage probabilities are estimated via Monte Carlo ($10^5$ samples). In subfigure \textbf{(a)} Multivariate Normal distribution is used, parameterised like in Equation~\ref{eq:sigmarho}, subfigure \textbf{(b)} use the Inverse-Wishart distribution defined in Equation~\ref{eq:inversewishartexemple}. Both are simulated for different dimension and different values of $\rho$ that controls the dependencies.}
    \label{fig:rectangles_comparison}
\end{figure*}

These toy examples help us to understand the effect of dimension and the dependencies among the vector $\theta$ on the rectangles. The two types of distribution are chosen to illustrate Gaussian and non-Gaussian behaviour. The hyperparameters $d$ and $p$ are chosen so that the dimension of the parameter of interest is the same overall. Finally, the value of the hyperparameter $\rho$ for both should affect the dependencies similarly. After computing the different rectangles, we project them, along with the samples, onto the first two dimensions in Figure~\ref{fig:rectangles_comparison}. 

Figure~\ref{fig:normal_all} shows that, in the case of independence where $\rho$ is equal to zero, the naive rectangle has a probability of $(1 - \alpha)^d$. For example, we have $36\%$ coverage in dimension 20. As $d$ increases, this illustrates how the naive rectangle can lead to catastrophic results, with none of the samples inside it. Conversely, the Bonferroni example is optimal in this setup of independence, with a coverage of nearly $95\%$ in these cases. However, as the correlation increases, the Bonferroni rectangle becomes overly conservative. This resonates with the criticism of the Bonferroni procedure in multiple testing, where it is considered too conservative if the variables are dependent.

In our application, we work with correlation matrices with an inverse-Wishart distribution on the covariance. This second toy distribution is more suitable for demonstrating what can happen in our study. Figure~\ref{fig:IW_all} shows that the same results are observed in the coverage of the rectangles. However, it is important to note that the rectangles are no longer symmetric, as the correlation transformation makes the distribution skewed. It should be noted that a higher value for the scale matrix parameter in the inverse-Wishart distribution induces higher dependencies between the coefficients; hence the need to use the optimal rectangle. In real datasets, these coefficients are around 0.4, which induces strong dependencies between the coefficients. Therefore, working with the Bonferroni rectangle is inadequate and is not used in the experiments on real datasets in Section~\ref{sec:realdatasets}.

\subsubsection{Navigating computational complexity} \label{sec:computation}

This section focuses on estimating the optimal rectangle, as the naive rectangle fails to constitute a valid credible region and the Bonferroni rectangle often lacks precision. The primary computational bottleneck arises from estimating extreme quantiles. Since the optimal level $t_{\alpha}$ lies within $[\frac{\alpha}{d}, \alpha]$, accurate estimation may require targeting levels as low as $\frac{\alpha}{2d}$. Achieving this precision typically demands $\mathcal{O}(\frac{d}{\alpha})$ Monte Carlo samples. Storing the full $d$-dimensional vector for each sample results in a memory complexity of $\mathcal{O}(d^2)$, which quickly becomes infeasible as $d$ increases. Here, we present algorithms designed to estimate the optimal rectangle while mitigating these memory constraints. In Section~\ref{sec:realdatasets} where studying coma datasets with $p=107$ ROIs, the correlation matrix can be flatten as a vector with $d = \frac{107 \times 106}{2}$ dimensions. For a level of credibility $\alpha=0.05$, the target quantile level can drop to approximately $4.10^{-6}$. Estimating such extreme quantiles is notoriously unstable; with fewer than $10^6$ realizations, empirical estimators collapse to the sample minimum or maximum, providing little statistical information. This instability necessitates specialized handling, as detailed in the following algorithms.

\paragraph*{BGHM \citep[Besag, Green, Higdon, Mengersen,][method for low dimensions]{besag_bayesian_1995}}

A method to infer an empirical credible region is proposed in \cite{besag_bayesian_1995}, it has been popularised in \cite{held_simultaneous_2004} and implemented in the package \textit{BayesSurv} \citep{komarek_package_2015}. It estimates a credible region for $\theta \in \mathbb{R}^d$ at a level $1-\alpha$ using marginal order statistics based on samples. This method is of particular interest as it provides an estimation of both the region $R(t_{\alpha})$ and the threshold $t_{\alpha}$ directly. This spares us from the two-step exploratory method that relies on first estimating  $\widehat{R}(t)$ for several values of $t$, and then estimating the probability of such region. The algorithm is described in Algorithm~\ref{algo:besag}.

\begin{algorithm}
\caption{BGHM's rectangle (low dimensions)}
\label{algo:besag}
\KwIn{$M$ samples $\theta_1,\dots,\theta_M$, $k$ an integer lower than $M$}
\KwOut{A hyperrectangle covering $\ge k$ samples}
  \BlankLine
  \For{each dimension $i$}{
    $r_i^{(j)} \gets \text{rank of sample } \theta_j \text{ along dimension $i$}$ \;
  }
  \BlankLine
\For{each sample $j$}{
    $S^{(j)} \gets \max\bigl(M + 1 - \min_i r_i^{(j)},\; \max_i r_i^{(j)}\bigr)$
}
  \BlankLine
$j^* \gets k\text{-th smallest } S^{(j)}$
  \BlankLine
\For{each dimension $i$}{
    $L_i \gets (M+1-j^*)\text{-th smallest value in dim } i$\;
    $U_i \gets j^*\text{-th smallest value in dim } i$\;
}
  \BlankLine
\Return $R=\prod_i [L_i, U_i]$
\end{algorithm}

\begin{proposition}
\label{prop:bghmoptimal}
The rectangle constructed by the BGHM algorithm in the case where $k$ is equal to $\lceil (1-\alpha) M \rceil$ provides an estimation of the optimal rectangle $R(t_{\alpha})$:
\[
\prod_{i=1}^{d}\bigl[\theta_i^{[M+1-j^{\star}]},\ \theta_i^{[j^{\star}]}\bigr]=\prod_{i=1}^{d}\bigl[\,\widehat{q}_i\left(\frac{t_{\alpha}}{2}\right),\;\widehat{q}_i\left(1-\frac{t_{\alpha}}{2}\right)\bigr].
\]
It also provides an estimation of $t_{\alpha}$: 
\[\hat{t}_{\alpha}=\frac{2(M+1-j^{\star})}{M+1}.\]
\end{proposition}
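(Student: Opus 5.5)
The plan is to show that the BGHM output is exactly an empirical quantile-based rectangle $\widehat{R}(t)$ for a single common level $t$, and that this $t$ is the empirical analogue of $t_\alpha$. Throughout, $\theta_i^{[m]}$ denotes the $m$-th order statistic of the $M$ samples in dimension $i$. We use the empirical quantile convention $\widehat{q}_i(s)=\theta_i^{[s(M+1)]}$, which is natural for ranks in $\{1,\dots,M\}$. We also assume no ties, which holds almost surely for continuous posteriors.

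\textbf{Step 1: reduction to a common level.} Fix an integer $j$ with $(M+1)/2\le j\le M$ and set
\[
\widehat{R}_j=\prod_{i=1}^d\bigl[\theta_i^{[M+1-j]},\theta_i^{[j]}\bigr].
\]
Writing $t=2(M+1-j)/(M+1)$, we have $M+1-j=(t/2)(M+1)$ and $j=(1-t/2)(M+1)$. Hence $\widehat{R}_j=\prod_i[\widehat{q}_i(t/2),\widehat{q}_i(1-t/2)]=\widehat{R}(t)$. So every rectangle the algorithm can return is an empirical quantile-based rectangle, with the same level $t$ in every dimension, as required by the definition of $R(t)$. The map $j\mapsto t$ is decreasing, so the family is nested in exactly the way $R(t)$ is nested in $t$.

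\textbf{Step 2: the score $S^{(j)}$ is the entry index.} Sample $\theta_m$ belongs to $\widehat{R}_j$ if and only if $M+1-j\le r_i^{(m)}\le j$ for all $i$. This is equivalent to $j\ge \max_i r_i^{(m)}$ and $j\ge M+1-\min_i r_i^{(m)}$, that is, $j\ge S^{(m)}$. Therefore $S^{(m)}$ is the smallest index at which sample $m$ enters the nested family. The number of samples in $\widehat{R}_j$ is then
\[
N(j)=\#\{m: S^{(m)}\le j\},
\]
which is the empirical distribution function of the scores.

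\textbf{Step 3: identify $j^\star$ with $t_\alpha$.} Take $j^\star$ to be the $k$-th smallest score with $k=\lceil(1-\alpha)M\rceil$. Then $N(j^\star)\ge k$ and $N(j^\star-1)<k$. So $j^\star$ is the smallest $j$ whose rectangle contains at least a fraction $1-\alpha$ of the samples. Equivalently, $\widehat{t}_\alpha=2(M+1-j^\star)/(M+1)$ is the largest $t$ on the grid such that the empirical coverage of $\widehat{R}(t)$ is at least $1-\alpha$. This mirrors the definition of $t_\alpha$: by Proposition~\ref{prop:specialrectangles} and the monotonicity of $t\mapsto \mathbb{P}(\theta\in R(t)\mid\mathbf{y})$, $t_\alpha$ is the largest $t$ with coverage at least $1-\alpha$, where coverage equals $1-\alpha$ exactly. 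Plugging $j^\star$ into Step 1 gives both displayed identities.

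\textbf{Main obstacle.} The delicate part is what ``provides an estimation'' means, because the identities above are exact only at the level of empirical quantiles. If a consistency statement is wanted, I would argue as follows. The empirical quantiles converge uniformly on compact level ranges. The empirical coverage $t\mapsto \frac1M N$ converges uniformly to the true coverage, which is a Glivenko--Cantelli argument over the monotone class of rectangles $R(t)$. Then continuity and strict monotonicity of the true coverage near $t_\alpha$ give $\widehat{t}_\alpha\to t_\alpha$. I expect this uniform convergence step to be the hardest, particularly since the target level may be as small as $\alpha/d$. I would state it only under a continuity assumption on the marginals and treat the exact identity as the main content of the proposition.
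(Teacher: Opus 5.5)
Your argument is correct. Its core identity is the same as the paper's: both use the convention $\theta_i^{[r]}=\widehat{q}_i\bigl(r/(M+1)\bigr)$ to read the BGHM output as $\widehat{R}(t)$ with $t=2(M+1-j^\star)/(M+1)$. Where you differ is in how $j^\star$ is tied to the coverage target.

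The paper cites Besag et al.\ for two facts: the output is the smallest rectangle of the nested family containing at least $k$ samples, and it contains between $k$ and $k+2d$ of them. It then simply asserts $\lim_{M\to\infty}\mathbb{P}(\theta\in R_{M,k})=1-\alpha$. You instead prove the minimality directly, by showing that $S^{(m)}$ is the index at which sample $m$ enters the nested family, so that $N(j)$ is the empirical distribution function of the scores. This makes the proof self-contained and makes precise the sense in which $\widehat{t}_\alpha$ mirrors $t_\alpha$.

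You do not state the overshoot bound, but it follows from your Step 2 in one line. In each coordinate exactly one sample has rank $j^\star$ and one has rank $M+1-j^\star$. So at most $2d$ samples have score exactly $j^\star$, giving $k\le N(j^\star)\le k-1+2d$, i.e.\ empirical coverage $1-\alpha+O(d/M)$.

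The paper leaves the passage from empirical proportion to posterior probability unjustified, and your Glivenko--Cantelli step is what actually supplies it. One correction there: $\widehat{R}(t)$ is built from the same samples, so uniform convergence over the fixed family $\{R(t)\}$ is not enough. Use instead the class of all axis-aligned rectangles, which is a VC class. Combine it with $\mathbb{P}\bigl(\widehat{R}(t)\,\triangle\,R(t)\mid\mathbf{y}\bigr)\to 0$ uniformly in $t$, which follows from the DKW inequality on each marginal when the marginals are continuous.
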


This estimator requires storing all samples in memory. However, as previously established, the number of samples required for an accurate estimation increases quadratically with the dimension. For low dimensions (d < 20), the implementation of this algorithm (e.g., \cite{held_simultaneous_2004,komarek_package_2015}) works extremely well, but it becomes infeasible in higher dimensions. For moderate dimensions ($d<200$), we have developed an online version of the previous algorithm.

\paragraph*{Online BGHM adaptation (method for moderate dimensions)}
The BGHM rectangle is based on identifying the $\alpha M$  points with the largest values of the function $S$ applied to the entire set of samples $M$. In this online version, we propose identifying these extreme points in batches, which allows us to keep only the necessary number of points in memory. However, since $S$ is based on ranks, it cannot be determined for each point separately.

To obtain an algorithm that determines just the k points with the largest $S$ values without keeping all the samples in memory, it is possible to work with a new function $S(.,\mathcal{D})$ defined on every set $\mathcal{D}$. For each point $\theta^{(j)}$ in a set $\mathcal{D}$, define its score as

\[
S(\theta^{(j)}, \mathcal{D}) = \max\left( |\mathcal{D}| + 1 - \min_{i=1,\dots,d} r_i^{(j)}(\mathcal{D}),\; \max_{i=1,\dots,d} r_i^{(j)}(\mathcal{D}) \right),
\]
where $r_i^{(j)}(\mathcal{D})$ denotes the rank of the $i$-th coordinate of $\theta^{(j)}$ among the points in $\mathcal{D}$.

\begin{lemma}[Extreme points preservation]
\label{prop:extremepoints}
Let $A, B \subset \mathbb{R}^d$ be two disjoint batches of points.  
For any integer $m \ge 1$, define:
\[
\mathcal{E}_m(A) = \text{the set of $m$ points in $A$ with the largest values of } S(\cdot, A),
\]
and similarly $\mathcal{E}_m(B)$ for $B$ and $\mathcal{E}_m(A \cup B)$ for $A \cup B$.

Then:
\[
\mathcal{E}_m(A \cup B) \subseteq \mathcal{E}_m(A) \cup \mathcal{E}_m(B).
\]

In other words, the global extreme points of the union are contained in the union of the batch-wise extremes. 
\end{lemma}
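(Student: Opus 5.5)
The plan is to rewrite the score as a \emph{depth} and then argue with the nested rectangles formed by order statistics. For a finite set $\mathcal{D}$ and a point $x \in \mathcal{D}$, define
\[
h(x,\mathcal{D}) = |\mathcal{D}| + 1 - S(x,\mathcal{D}) = \min_{i=1,\dots,d} \min\bigl( r_i(x,\mathcal{D}),\ |\mathcal{D}|+1-r_i(x,\mathcal{D}) \bigr).
\]
This is the smallest number of points of $\mathcal{D}$ lying weakly on the outer side of $x$ in some coordinate. Since $|\mathcal{D}|+1$ is constant within $\mathcal{D}$, $\mathcal{E}_m(\mathcal{D})$ is exactly the set of $m$ points of smallest depth. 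I will assume continuous coordinates, so there are no ties in ranks, and fix one deterministic tie-breaking rule for equal depths, used consistently in $A$, $B$ and $A\cup B$.

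The first step is a level-set description. For an integer $k$, the points of $\mathcal{D}$ with $h(\cdot,\mathcal{D}) \ge k$ are exactly the points lying in the rectangle
\[
Q_k(\mathcal{D}) = \prod_i \bigl[\mathcal{D}_i^{[k]}, \mathcal{D}_i^{[|\mathcal{D}|+1-k]}\bigr],
\]
where $\mathcal{D}_i^{[j]}$ denotes the $j$-th order statistic of coordinate $i$. These rectangles are nested and decrease in $k$, which mirrors the BGHM construction of Proposition~\ref{prop:bghmoptimal}. Hence $\mathcal{E}_m(\mathcal{D})$ is the set of $m$ points that fall outside the largest such rectangle that excludes at most $m$ points.

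The main step is a contrapositive. Take $x \in A$ with $x \notin \mathcal{E}_m(A)$; the case $x \in B$ is symmetric. Then there are at least $m$ points $z \in A$ that beat $x$ in $A$-depth, and the goal is to find at least $m$ points $w \in A\cup B$ that beat $x$ in $(A\cup B)$-depth. This would give $x \notin \mathcal{E}_m(A\cup B)$. The natural move is to keep the same witnesses $z$ and compare coordinatewise counts. Write $j = h(x,A)$ and let $i^\ast$ be a coordinate realising $h(x,A\cup B)$, say on the lower side, so that
\[
h(x,A\cup B) = \#\{y\in A: y_{i^\ast}\le x_{i^\ast}\} + \#\{y\in B: y_{i^\ast}\le x_{i^\ast}\}.
\]
Each witness $z$ is extreme in $A$ in some coordinate $i_z$, and its union depth is bounded by its count in coordinate $i_z$, that is, its $A$-count plus its $B$-count there.

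I expect this to be the main obstacle. The $B$-counts of $z$ in coordinate $i_z$ and of $x$ in coordinate $i^\ast$ refer to different coordinates, so $h(z,A)<h(x,A)$ does not obviously imply $h(z,A\cup B)<h(x,A\cup B)$. I would first try to supplement the witnesses with points of $B$. Points of $B$ lying outside $Q_{h(x,A\cup B)}(A\cup B)$ beyond $x_{i^\ast}$ in coordinate $i^\ast$ automatically have union depth at most that of $x$. Adding them to the $A$-witnesses that survive the comparison should then give $m$ points by a counting argument over the nested rectangles. If this counting fails, I would test small configurations with $d=2$ and $m=1$ to see whether the inclusion actually requires $\mathcal{E}_m$ to be taken with $m$ at least the target $\alpha M$ plus slack. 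In that case the proof would establish the weaker inclusion that the online algorithm really needs.
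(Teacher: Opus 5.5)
The proposal does not prove the statement, and no proof can: the inclusion is false. Your reformulation via the depth $h=|\mathcal{D}|+1-S$ and the level sets $Q_k(\mathcal{D})$ is correct. Your contrapositive with the same witnesses is exactly the paper's route. You also located the real difficulty: $h(z,A)<h(x,A)$ says nothing about how $h(z,A\cup B)$ compares with $h(x,A\cup B)$. The paper's proof passes over this point. It shows $S(\theta',A\cup B)\ge S(\theta',A)>S(\theta,A)$ and concludes that the witnesses ``maintain larger scores than $\theta$ in $A\cup B$''. But $S(\theta,A\cup B)\ge S(\theta,A)$ grows as well, so the comparison is never made.

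Your planned repair (supplementing with points of $B$ and counting over nested rectangles) therefore cannot work. Take $d=2$, $m=6$ and
\[
A=\{(0,0),(1,1),(2,2),(3,3),(4,9),(5,8),(8,5),(9,4)\},\qquad B=\{(10,10),(11,11),(12,12)\}.
\]
In $A$ the depths are as follows: $1$ for $(0,0),(4,9),(9,4)$; $2$ for $(1,1),(5,8),(8,5)$; $3$ for $x=(2,2)$; $4$ for $(3,3)$. So $\mathcal{E}_6(A)$ is the six points of depth at most $2$, and $x\notin\mathcal{E}_6(A)$. Since $|B|<6$, $\mathcal{E}_6(B)=B$.

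In $A\cup B$ the depths are: $1,2,3$ for $(0,0),(1,1),(2,2)$ and likewise for $(12,12),(11,11),(10,10)$; $4$ for $(3,3),(4,9),(9,4)$; $5$ for $(5,8),(8,5)$. Hence $\mathcal{E}_6(A\cup B)=\{(0,0),(1,1),(2,2),(10,10),(11,11),(12,12)\}\ni x$. There are no ties at either cut, so no tie-breaking convention rescues the statement, including the tie-inclusive retention of Algorithm~\ref{algo:online_besag}.

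In score terms, the witness $(9,4)$ has $S=8>6=S(x,A)$ in $A$, but $S=8<9=S(x,A\cup B)$ in the union. The batch $B$ merges the right and top extremes into a single chain. This pushes the four right/top witnesses of $A$ inward, while adding nothing on the lower-left side where $x$ is shallow. Only $(0,0),(1,1),(11,11),(12,12)$ beat $x$ in $A\cup B$, so any counting argument falls short. Your fallback test with $m=1$ would not reveal this: with ties retained, every depth-one point of $A\cup B$ already has depth one in its own batch.

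What does hold is the monotonicity underlying both your argument and the paper's. For $y\in A$, each directional count of $y$ in $A$ is at most the corresponding count in $A\cup B$, so $h(y,A)\le h(y,A\cup B)$. Let $k$ be the $m$-th smallest depth in $A\cup B$. Then
\[
\mathcal{E}_m(A\cup B)\subseteq \bigl(A\setminus Q_{k+1}(A)\bigr)\cup\bigl(B\setminus Q_{k+1}(B)\bigr).
\]
That is, one must retain a depth level set of each batch (up to $2dk$ points, with $k\le m$) rather than its $m$ most extreme points. This is the correct form of the weaker inclusion you anticipated. The lemma, its proof and the memory claim for the online BGHM algorithm would all need to be restated in these terms.
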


\begin{algorithm}[h!]
  \caption{Online BGHM Algorithm (moderate dimensions)}
  \label{algo:online_besag}
  \KwIn{Stream of batches $\mathcal{B}_1, \mathcal{B}_2, \dots$, coverage target $1-\alpha$, total size $M$}
  \KwOut{Rectangle $R^*$ covering exactly $\lfloor (1-\alpha)M \rfloor$ samples of $\theta$}
  \BlankLine
  $m \gets \lceil \alpha M \rceil$ \tcp*{lower bound of the number of extreme points to keep}
  $\mathcal{E} \gets \emptyset$ 
  $M_{\text{total}} \gets 0$ \;
  \BlankLine
  \For{each batch $\mathcal{B}$}{
    $M_{\text{total}} \gets M_{\text{total}} + |\mathcal{B}|$ \;
    $\mathcal{E} \gets \text{at least m elements of } \mathcal{E}\cup \mathcal{B} \text{ with largest values of} S(.,\mathcal{E}\cup \mathcal{B})$ 
    \tcp*{due to possible equalities of $S$ it may contain more than $m$ elements}
  }
  \BlankLine
  $M \gets M_{\text{total}}$ \;
  $m' \gets |\mathcal{E}|$  \tcp*{size of the extreme subset}
  \BlankLine

  $t \gets m' - \lfloor \alpha M \rfloor$ \tcp*{exact number of extreme points to retain}
  \BlankLine
  \For{$i = 1$ \KwTo $d$}{
    $\theta_i^{\min} \gets t\text{-th smallest value of coordinate } i \text{ in } \mathcal{E}$ \;
    $\theta_i^{\max} \gets t\text{-th largest value of coordinate } i \text{ in } \mathcal{E}$ \;
  }
  \BlankLine
  $R^* \gets [\theta_1^{\min}, \theta_1^{\max}] \times \dots \times [\theta_d^{\min}, \theta_d^{\max}]$ \;
  \BlankLine
  \Return $R^*$
\end{algorithm}

Consequently, evaluating $S$ on the reduced set, $\mathcal{E}_m(A) \cup \mathcal{E}_m(B)$, allows us to recover $\mathcal{E}_m(A \cup B)$ without accessing all the points of $A \cup B$. This spares us the need to evaluate $S(., A \cup B)$, which is the step where we had to remember all the points of $A \cup B$.

Using the previous lemma, we can compute the set of extreme points, denoted by $\mathcal{E}$, iteratively. Since this set contains only the $m$ most extreme points, if $m > \lfloor\alpha M\rfloor$, we can identify the $m - \lfloor\alpha M\rfloor$-th order statistic of $S(.,\mathcal{E})$, which corresponds to the $M - \lfloor\alpha M\rfloor$-th order statistic of the set of $S(\theta_i), i = 1, ..., M$, which is used to construct the BGHM rectangle in Algorithm~\ref{algo:besag}. The only remaining step is to compute the corresponding rectangle. This reduces the memory requirement to just $\alpha M$ samples of $\theta$. This process is detailed is described in Algorithm~\ref{algo:online_besag}. The online BGHM does not reduce the asymptotic memory complexity compared to the BGHM algorithm, as it only divides the memory by a constant factor of  $\frac{1}{\alpha}$. However, this online version enables us to estimate the BGHM rectangles when $d\approx100$ on a regular computer.

\paragraph*{Sliced-quantile grid algorithm (method for high dimensions)}
To reduce memory costs when handling $d\approx1000$, we propose a two-stage exploratory approach that is described in Algorithm~\ref{algo:sliced_quantile}.

\textit{Stage 1 – Sliced quantile estimation:}  Based on a grid of values $t_1, ..., t_K$, where $t_1 = \alpha/d < t_2 < ... < t_K = \alpha$, we first estimate the quantiles for each dimension separately: $q_i(t_1), ..., q_i(t_K)$. This means that, for each marginal $i$, we need to sample many points, but we only need to keep the $M$ samples of this coefficient in memory, not the d-dimensional vector. We can then process these estimation by batches using order statistics. In this case the maximum storage requirement for each batch is $\text{max}(t_k) M=\alpha M$, where $M$ is the number of samples. At the end of this process, we have an estimation for each region, $R(t_1),...,R(t_K)$.

\textit{Stage 2 – Coverage‑based selection:} Once we have the candidate rectangles, we evaluate each one by estimating its actual coverage via a second Monte Carlo sample from the posterior. The smallest rectangle achieving at least $1-\alpha$ coverage is selected. This accounts for the multivariate dependencies ignored in the first stage. This step also stabilises the result because coverage estimation is statistically more reliable than extreme quantile estimation. It serves as a validation step and ensures that we select a rectangle with the correct property, even if the quantiles are very unstable.

This method sacrifices some precision in the estimated quantile level $t_{\alpha}$ due to the grid required. However, the size of this one-dimensional grid for $t$ is not limited, as adding a value to it has a negligible computational impact. In practice the only limitation is to take $t_{\alpha}^{\text{bonf}}$ as a lower bound and $t_{\alpha}^{\text{naive}}$ as a upper bound for the grid, thanks to the results from Proposition~\ref{prop:specialrectangles}. 

\begin{algorithm}
  \caption{Sliced-quantile grid algorithm (high dimensions)}
  \label{algo:sliced_quantile}
  \KwIn{Batches $\mathcal{B}_1,\dots$, target $1-\alpha$, grid $t_1,\dots,t_K\in[\frac{\alpha}{d},\alpha]$, memory $m=\lceil\max(t_k)M\rceil$}
  \KwOut{$\widehat{R}(t_{\alpha})$ with coverage $\geq 1-\alpha$}
  \BlankLine
  \textbf{Stage 1 – Sliced quantiles estimation} \;
  \For{$j=1..d$}{
  $N\gets0$\;
  $\mathcal{L}_j,\mathcal{U}_j\gets\emptyset$\;
  
    \For{each batch $\mathcal{B}$}{
        $N\gets N+|\mathcal{B}|$ \;
        $\mathcal{L}_j\gets m/2 \text{ lower values in } \mathcal{L}_j\cup\mathcal{B}^{(j)} $ \;
      $\mathcal{U}_j\gets m/2 \text{ upper values in }\mathcal{U}_j\cup\mathcal{B}^{(j)}$ \;
    }
    \For{$k=1..K$}{
    \tcp*{quantile estimation using order statistic by batches}
    $\hat{q}_j(t_k)\gets N t_k/2 \text{-th largest value in } \mathcal{L}_j$ 
    $\hat{q}_j(1-t_k/2)\gets  m-Nt_k/2 \text{-th largest value in } \mathcal{U}_j$\; 
    }
  }
  
  \For{$k=1..K$}{
    $\widehat{R}(t_k)\gets\prod_j[\hat{q}_j(t_k/2),\hat{q}_j(1-t_k/2)]$ \;
    }
  \BlankLine
  \textbf{Stage 2 – Coverage selection} \;
  Draw validation sample $\mathcal{D}_{\text{val}}$ \;
  \For{$k=1..K$}{
    $\widehat{\text{Cov}}_k\gets\frac{1}{|\mathcal{D}_{\text{val}}|}\sum_{\theta\in\mathcal{D}_{\text{val}}} \mathds{1}[\theta\in\widehat{R}(t_k)]$ \;
  }
  $k^\star\gets\min\{k:\widehat{\text{Cov}}_k\geq 1-\alpha\}$ \;
  \Return $\widehat{R}(t_{k^\star})$ \;
\end{algorithm}

In the loop on $j \in [1,d]$ we keep in memory $\alpha M$ samples of each coefficient $j$. This is the same number of samples as before; however, we do not store the d-dimensional vector in its entirety, only the coefficient. This reduces the asymptotic memory complexity from $\mathcal{O}(d^2)$ to $\mathcal{O}(d)$, enabling us to scale well even if $d > 1000$. However, this comes at a higher time cost: indeed, the rectangle is estimated using $M$ samples, but the loop forces us to simulate these $M$ samples $d$ times.

\paragraph*{Recommendations}
\begin{table}[h]

\centering
\caption{Comparison of the 3 algorithms complexity in memory and simulation. The constant in the complexity $\mathcal{O}$ if the number of the sample $M=\mathcal{O}(d)$ is the same for each algorithm. The column feasible $d$ establishes the value of $d$ such that a reasonable estimation of the rectangle can be estimated on a regular computer.}
\label{tab:algos}
\begin{tabular}{lcccc}
\toprule
{Algorithm} & {Maximum memory used} & {Number of simulations} & {Feasible $d$}\\
\midrule
{BGHM \citep{besag_bayesian_1995}} & $\mathcal{O}(d^2)$ & $\mathcal{O}(d)$ & $d\approx 10$ \\
{Online BGHM (ours)} & $\mathcal{O}(\alpha d^2)$ & $\mathcal{O}(d)$ & $d\approx 10^2$ \\
{Sliced quantile grid algorithm (ours)} & $\mathcal{O}(d)$ & $\mathcal{O}(d^2)$ & $d \approx 10^4$\\
\bottomrule
\end{tabular}

\end{table}

Table~\ref{tab:algos} summarises the memory and sample costs. This table represents the asymptotic complexity; however, it does not clearly show the difference between the BGHM algorithm and our online adaptation of this algorithm. Overall, the online adaptation has little impact on time complexity and enables us to reduce the memory cost by $\alpha$. Even if this is just a constant, this is a meaningful improvement, as it can be used for dimensions around 100, which could represent, for instance, the nodes of the brain in neuroscience. For the experiments in Section~\ref{sec:realdatasets}, the algorithm used is Sliced Quantile Estimation due to the dimension of $5000$. The introduction of Online BGHM provides tools for modelling lighter datasets.

\section{Specificities for correlation matrices} \label{sec:correlation}

This section focuses on the case of fMRI study where we need to model the connectivity between brain regions. In this set-up we have a $p$-dimensionnal vector $X=(X_1,\dots,X_p)$ that represents the signal coming from each of $p$ brain Regions of Interest (ROIs). The parameter of interest is the correlation matrix $\boldsymbol{C}$ between these variables, where for each pair $(i,j)$, we have $C_{ij}=\text{Corr}(X_i,X_j)$.

\subsection{Model and guarantees} \label{sec:model}

\paragraph*{Model and prior choice}
Data from brain ROIs are modelled using a multivariate zero mean Gaussian distribution. The parameter to estimate is the covariance matrix, denoted by $\boldsymbol{\Sigma}$. We choose to work with an inverse-Wishart prior to model the covariance matrix, as this is conjugate with the multivariate Gaussian distribution. The model is written as follows:

\[
X_i\mid \boldsymbol{\Sigma} \overset{i.i.d.}{\sim} \mathcal{N}_p\left(\mathbf{0}, \mathbf{\boldsymbol{\boldsymbol{\Sigma}}}\right), \quad \boldsymbol{\boldsymbol{\Sigma}} \sim \mathcal{IW}(\Phi,\nu).
\]

For fMRI analysis, our parameter of interest is the correlation matrix $\boldsymbol{C}$. Fortunately, determining the posterior distribution of the covariance matrix $\boldsymbol{\Sigma}$ is equivalent to determining the posterior distribution of the correlation matrix $\boldsymbol{C}$, since it is fully determined by $\boldsymbol{\Sigma}$ thanks to the following transformation:

\begin{equation}
    \boldsymbol{C} = \boldsymbol{D}^{-\frac{1}{2}} \boldsymbol{\boldsymbol{\Sigma}} \boldsymbol{D}^{-\frac{1}{2}} \quad \text{with} \quad \boldsymbol{D}=\mathrm{diag}(\boldsymbol{\boldsymbol{\Sigma}}).
    \label{eq:cor_transform}
\end{equation}

We have $n$ realisations of $X$ that is denoted $\mathbf{X}_n \in \mathbb{R}^{n \times p}$.  The conjugate nature of the inverse-Wishart distribution with the Gaussian multivariate \citep{haff_identity_1979} give us a closed-form expression for the posterior distribution of $\boldsymbol{\Sigma}$: \[ \boldsymbol{\boldsymbol{\Sigma}} |\mathbf{X}_n \sim \mathcal{IW}(\Phi+\mathbf{X}_n^T \mathbf{X}_n,\nu+n).\]  

Using the transformation in Equation~\ref{eq:cor_transform} we have access to the posterior distribution of $\boldsymbol{C}$.
The hyperparameters $\Phi$ and $\nu$ can be chosen according to expert knowledge, if available. However, in fMRI connectivity analysis, prior information about the covariance structure of the data is usually unavailable. For this reason, we choose to work with a low-information prior that slightly shrinks the posterior distribution towards the identity matrix. This prior favours independence between brain regions, which is a reasonable assumption when no information is available about the connectivity structure of the brain. To achieve this, we select: 
\[\nu=p+2, \quad \Phi=\mathrm{diag}(\hat{\sigma}_1,\dots,\hat{\sigma}_p), \quad \text{ which ensures } \quad \mathbb{E}[\mathbf{\boldsymbol{C}}]=I.\]

Choice of hyperparameters and their impact on prior and posterior moments are discussed in \cite{jiang_prior_2026}.

\paragraph*{Bernstein--von Mises Theorem for correlation matrices}

Under a Gaussian likelihood with zero mean, when the prior on the covariance matrix $\boldsymbol{\boldsymbol{\Sigma}}$ belongs to a class satisfying appropriate regularity conditions, and the true covariance $\boldsymbol{\boldsymbol{\Sigma}}_0$ is positive definite, Theorem~4.4 of \citet{sarkar_high-dimensional_2025} establishes a Bernstein--von Mises theorem for covariance matrices. This theorem guarantees the asymptotic normality of the posterior distribution of $\boldsymbol{\boldsymbol{\Sigma}}$ under their Assumptions~1--3. These assumptions are verified, for instance, in the case of an inverse-Wishart prior when the eigenvalues of $\boldsymbol{\boldsymbol{\Sigma}}_0$ are bounded below by a strictly positive constant. The theorem can be reformulated in terms of the vector of unique elements of a symmetric matrix. Define the half-vectorization operators $\textup{vech} : \mathbb{R}^{p \times p} \to \mathbb{R}^{p(p+1)/2}$ (resp. $\textup{vech}_0 : \mathbb{R}^{p \times p} \to \mathbb{R}^{p(p-1)/2}$) which stack the columns of the upper triangular part (resp. strict upper triangular part) of its argument. Then we have:
\[
\sqrt{n}\,\textup{vech}(\boldsymbol{\boldsymbol{\Sigma}} - \boldsymbol{S}_n) | \mathbf{X}_n 
\;\stackrel{\mathrm{TV}}{\approx}\; 
\mathrm{N}_{p(p+1)/2}\bigl( \mathbf{0},\, 2 \boldsymbol{B}_p^\top (\boldsymbol{S}_n \otimes \boldsymbol{S}_n) \boldsymbol{B}_p \bigr),
\]
where the approximation is in the sense that the total variation distance between the two distributions converges to zero in probability as $n \to \infty$, and where $\boldsymbol{S}_n = \frac{1}{n}\sum_{i=1}^n \boldsymbol{X}_i\boldsymbol{X}_i^\top$ is the empirical covariance matrix (assuming zero mean), which is a consistent estimator of $\boldsymbol{\boldsymbol{\Sigma}}_0$, and $\boldsymbol{B}_p$ denotes the duplication matrix (whose explicit form is given in \cite{sarkar_high-dimensional_2025}).

In our setting, our primary interest lies in the posterior distribution of the correlation matrix $\boldsymbol{C}$. Thus the following theorem establishes similar results for correlation matrices, proof can be seen in the appendix.

\begin{theorem}[Bernstein--von Mises theorem for correlation matrices]
\label{theorem:bmv}

Under a Gaussian likelihood with zero mean, when the prior on the covariance matrix is an inverse-Wishart distribution and if the eigenvalues of $\boldsymbol{\boldsymbol{\Sigma}}_0$ are bounded below by a strictly positive constant, then the strict upper-triangular vectorization $\textup{vech}_0 : \mathbb{R}^{p \times p} \to \mathbb{R}^{p(p-1)/2}$ of the corresponding correlation matrix $\boldsymbol{C}$ has the following property:
\[
\sqrt{n}\,\textup{vech}_0(\boldsymbol{C} - \boldsymbol{C}_n) | \mathbf{X}_n 
\;\stackrel{\mathrm{TV}}{\approx}\; 
\mathrm{N}_{p(p-1)/2}\bigl( \mathbf{0}, \boldsymbol{V} \bigr),
\]

where $\boldsymbol{C}_n$ is the empirical correlation matrix derived from $\boldsymbol{S}_n$, for some PSD matrix $\boldsymbol{V}$.
\end{theorem}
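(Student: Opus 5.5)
The plan is a delta-method argument at the level of posterior distributions. Under the stated assumptions we take the covariance BvM of \citet{sarkar_high-dimensional_2025} as given:
\[
\sqrt{n}\,\textup{vech}(\boldsymbol{\Sigma}-\boldsymbol{S}_n)\mid\mathbf{X}_n \stackrel{\mathrm{TV}}{\approx} \mathrm{N}\bigl(\mathbf{0},\boldsymbol{W}_n\bigr), \qquad \boldsymbol{W}_n=2\boldsymbol{B}_p^\top(\boldsymbol{S}_n\otimes\boldsymbol{S}_n)\boldsymbol{B}_p .
\]
We then push this through the smooth map $g:\textup{vech}(\boldsymbol{\Sigma})\mapsto \textup{vech}_0(\boldsymbol{D}^{-1/2}\boldsymbol{\Sigma}\boldsymbol{D}^{-1/2})$. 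Componentwise this map is $g_{ij}(\boldsymbol{\Sigma})=\Sigma_{ij}/\sqrt{\Sigma_{ii}\Sigma_{jj}}$. It is $C^\infty$ on the open set $\{\Sigma_{ii}>0\ \forall i\}$, which contains $\boldsymbol{\Sigma}_0$ because the eigenvalues of $\boldsymbol{\Sigma}_0$ are bounded below. Its Jacobian $J(\boldsymbol{\Sigma})$ is explicit:
\[
\partial g_{ij}/\partial\Sigma_{ij}=(\Sigma_{ii}\Sigma_{jj})^{-1/2},\qquad \partial g_{ij}/\partial\Sigma_{ii}=-\tfrac12\Sigma_{ij}\Sigma_{ii}^{-3/2}\Sigma_{jj}^{-1/2},
\]
and symmetrically for $\Sigma_{jj}$. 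Since $g(\boldsymbol{S}_n)=\textup{vech}_0(\boldsymbol{C}_n)$, the candidate limit covariance is $\boldsymbol{V}=J(\boldsymbol{\Sigma}_0)\,2\boldsymbol{B}_p^\top(\boldsymbol{\Sigma}_0\otimes\boldsymbol{\Sigma}_0)\boldsymbol{B}_p\,J(\boldsymbol{\Sigma}_0)^\top$. This is PSD as a congruence of a PSD matrix. Equivalently one can use the plug-in $\boldsymbol{V}_n$ built from $\boldsymbol{S}_n$.

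The proof has four steps.

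\textbf{Step 1 (change of measure).} Total variation is non-increasing under measurable maps. So if $Z_n=\sqrt n\,\textup{vech}(\boldsymbol{\Sigma}-\boldsymbol{S}_n)$, then $h_n(Z_n)$ with $h_n(z)=\sqrt n\,[g(\boldsymbol{S}_n+z/\sqrt n)-g(\boldsymbol{S}_n)]$ is TV-close to $h_n(G)$, where $G\sim\mathrm{N}(0,\boldsymbol{W}_n)$. Note that $h_n(Z_n)=\sqrt n\,\textup{vech}_0(\boldsymbol{C}-\boldsymbol{C}_n)$ exactly. This reduces the claim to a statement about the fixed Gaussian $G$: show that $\|\mathcal{L}(h_n(G))-\mathrm{N}(0,\boldsymbol{V})\|_{TV}\to0$ in probability.

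\textbf{Step 2 (linearization).} Write $h_n(G)=J(\boldsymbol{S}_n)G+R_n$. By Taylor's theorem on a neighbourhood of $\boldsymbol{\Sigma}_0$, where $\boldsymbol{S}_n\to\boldsymbol{\Sigma}_0$ in probability, the remainder satisfies $\|R_n\|\le C\|G\|^2/\sqrt n$ on $\{\|G\|\le n^{1/4}\}$. Hence $R_n\to0$ in probability.

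\textbf{Step 3 (covariance convergence).} $J(\boldsymbol{S}_n)G$ is exactly Gaussian with covariance $\boldsymbol{V}_n=J(\boldsymbol{S}_n)\boldsymbol{W}_nJ(\boldsymbol{S}_n)^\top\to\boldsymbol{V}$ in probability. For nondegenerate Gaussians the TV distance is controlled by the covariance discrepancy (for instance via Pinsker and the Gaussian KL formula). This handles the Gaussian part.

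\textbf{Step 4 (the main obstacle).} Passing from weak closeness (a vanishing remainder $R_n$) to \emph{total variation} closeness is where the real difficulty lies. A small additive perturbation does not by itself give a small TV distance. I would instead argue with densities. On the high-probability ball $\|z\|\le n^{1/4}$, $h_n$ is a diffeomorphism onto its image: its Jacobian $J(\boldsymbol{S}_n+z/\sqrt n)$ is uniformly close to $J(\boldsymbol{S}_n)$. This requires showing that $J(\boldsymbol{\Sigma}_0)$ has full row rank $p(p-1)/2$, which holds because the $\partial/\partial\Sigma_{ij}$ block is diagonal and invertible. Since $g$ maps from dimension $p(p+1)/2$ to dimension $p(p-1)/2$, I would complete it with the coordinates $\log\Sigma_{ii}$ to get a local diffeomorphism. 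I would then apply the change of variables formula and compare the density of $h_n(G)$ with the Gaussian density pointwise; the ratio tends to $1$ uniformly on the ball. Finally, Scheffé's lemma gives TV convergence, and marginalizing out the extra diagonal coordinates does not increase TV. This also yields that $\boldsymbol{V}$ is positive definite when $\boldsymbol{\Sigma}_0$ is, which is stronger than the PSD claim.
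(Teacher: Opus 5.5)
Your proposal follows the paper's route. You take the covariance BvM as given and push it through $g:\textup{vech}(\boldsymbol{\Sigma})\mapsto\textup{vech}_0(\boldsymbol{C})$, which gives a limiting covariance of the form $\nabla g\,\boldsymbol{V}_{\boldsymbol{\Sigma}}\,\nabla g^\top$.

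The difference is in how the total-variation claim is justified. The paper simply asserts that ``the delta method preserves convergence in total variation for differentiable transformations.'' You correctly note that this is not automatic: a vanishing additive remainder gives weak closeness, not TV closeness, and a map from dimension $p(p+1)/2$ to $p(p-1)/2$ also needs a nondegenerate limit. You then supply the missing argument:
\begin{itemize}
\item TV contraction under measurable maps;
\item completion of $g$ by the diagonal coordinates into a genuine diffeomorphism of $\{\Sigma_{ii}>0\}$ (it is even global), with block-triangular Jacobian whose diagonal blocks are $\mathrm{diag}(1/\Sigma_{ii})$ and $\mathrm{diag}((\Sigma_{ii}\Sigma_{jj})^{-1/2})$;
\item a change of variables with a density-ratio and Scheff\'e argument;
\item marginalization.
\end{itemize}
This buys a rigorous proof. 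It also gives the stronger conclusion that $\boldsymbol{V}$ is positive definite, which your Step 3 actually needs. Finally, it fixes $\boldsymbol{V}$ at $\boldsymbol{\Sigma}_0$, matching the statement's ``some PSD matrix $\boldsymbol{V}$,'' whereas the paper's covariance is data-dependent through $\boldsymbol{S}_n$.

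One detail needs repair. On the ball $\|z\|\le n^{1/4}$ the density ratio does not tend to $1$ uniformly. Write the preimage of $u$ as $\tilde J^{-1}u+r$ with $\|r\|\lesssim\|u\|^2/\sqrt n$. The log-ratio of the Gaussian densities then contains the cross term $(\tilde J^{-1}u)^\top \boldsymbol{W}_n^{-1} r$, which is of order $\|u\|^3/\sqrt n$. At radius $n^{1/4}$ this is of order $n^{1/4}$, not small. Take instead any radius $r_n\to\infty$ with $r_n^3=o(\sqrt n)$, for example $r_n=\log n$. The Gaussian mass outside the ball still vanishes, and the Scheff\'e argument then goes through unchanged.
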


In the context of this paper, we use the Bernstein--von Mises result to demonstrate that our credible regions become increasingly accurate approximations of frequentist confidence regions as the sample size increases. However, we do not rely on the asymptotic distribution to construct the regions, and prefer to use an approach that is valid for any sample size. The asymptotic Gaussian behaviour also motivates using a rectangular region as a multivariate Gaussian and its marginals are all unimodal.

At first glance, one might worry that the vectorization loses information, as many vectors in $\mathbb{R}^{p(p-1)/2}$ do not correspond to a valid positive-definite correlation matrix when applying the inverse of the $\textup{vech}_0$ function. However, the Bernstein--von Mises theorem guarantees that, asymptotically, the posterior distribution of $\sqrt{n}\,\textup{vech}_0(\boldsymbol{C} - \widehat{\boldsymbol{C}}_n)$ is Gaussian and concentrated on a $\sqrt{n}$-neighborhood of the true value. In this local regime, the positive-definiteness constraint is asymptotically negligible, and the mapping between the vector of off-diagonal elements and the full correlation matrix is bijective onto its image.

\subsection{Estimation of the support of the matrices}

In fMRI connectivity analysis, identifying significant correlations between brain regions is crucial for understanding functional connectivity. We propose to use the previously defined credible regions to detect non-zero correlations, which can be seen as significant edges of the connectivity graph. Let $\boldsymbol{C}_0$ be the true $p \times p$ correlation matrix, and denote its support by 
\begin{equation*}
    A_0 = \{(i,j),i<j: \boldsymbol{C}_{0,ij} \neq 0\}.
\end{equation*}

$A_0$ represents the set of index pairs corresponding to non-zero correlations, which can be interpreted as the adjacency matrix of the true underlying graph.

Given $n$ observations, let $\pi(\textup{vech}_0(\boldsymbol{C}) | \mathbf{X}_n)$ be the posterior distribution of the upper coefficients of $\boldsymbol{C}$, 
and denote by $R(t_{\alpha_n})$ a credible rectangle at a level $1-\alpha_n$ for $\textup{vech}_0(\boldsymbol{C})$. This rectangle is in $\mathbb{R}^{\frac{p(p-1)}{2}}$, however to understand the link with the matrix it can be rewritten as: 
\[R(t_{\alpha_n})=\prod_{i<j} I_{ij}(t_{\alpha_n}), \quad \text{ with } I_{ij}(t_{\alpha_n})=\left[q_{ij}\left(\frac{t_{\alpha_n}}{2}\right),q_{ij}\left(1-\frac{t_{\alpha_n}}{2}\right)\right].\]

For this section we consider the support estimator:
\[
\widehat{A}_n = \left\{(i,j): 0 \notin I_{ij}(t_{\alpha_n}) \right\}, \quad \text{where $\alpha_n \to 0$ as $n \to \infty$}.
\]

\paragraph*{Consistency of the support estimator}
We ultimately establish that for a properly decaying sequence $\alpha_n$, the credible intervals achieve to reliably discriminate zero coefficients from non-zero coefficients to ensure exact support recovery. This establishes the consistency of the support estimator.

\begin{proposition}
\label{prop:consistency}
$\widehat{A}_n$ converges in probability to $A_0$ as $n \to \infty$, i.e., the support estimator is consistent and for any sequence $\alpha_n \to 0$ that respects the Gaussian tail decay implied by the BvM, the estimated support converges to the true support in probability:
\[
\mathbb{P}\bigl(\widehat{A}_n = A_0\bigr) \xrightarrow[n\to\infty]{} 1.
\]
\end{proposition}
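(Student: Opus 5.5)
We split the event $\{\widehat{A}_n \neq A_0\}$ into false positives, where some $(i,j)\notin A_0$ has $0 \notin I_{ij}(t_{\alpha_n})$, and false negatives, where some $(i,j)\in A_0$ has $0 \in I_{ij}(t_{\alpha_n})$. Since $p$ is fixed, there are finitely many pairs, so a union bound reduces the statement to showing that each pair contributes an error of probability $o(1)$. The main tool is Theorem~\ref{theorem:bmv}. It lets us replace the marginal posterior of $C_{ij}$ by $\mathrm{N}(C_{n,ij}, V_{ij,ij}/n)$ up to a total variation error $\varepsilon_n \to 0$ in probability. We also use that $\boldsymbol{C}_n \to \boldsymbol{C}_0$ almost surely with $\sqrt{n}(\boldsymbol{C}_n - \boldsymbol{C}_0) = O_P(1)$, by the delta method applied to $\boldsymbol{S}_n$.

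\emph{Step 1 (quantile localisation).} By Proposition~\ref{prop:specialrectangles}(iii), $t_{\alpha_n} \in [\alpha_n/d, \alpha_n]$ with $d = p(p-1)/2$. The TV approximation gives, marginally, $|\Pi(C_{ij} \le x \mid \mathbf{X}_n) - \Phi(\sqrt{n}(x - C_{n,ij})/\sigma_{ij})| \le \varepsilon_n$ uniformly in $x$. Hence the interval endpoints are
\[
q_{ij}(t/2) = C_{n,ij} - \frac{\sigma_{ij}}{\sqrt{n}}\, z_{t/2 \pm \varepsilon_n}.
\]
A similar bound holds for the upper endpoint. This is the place where the condition that $\alpha_n$ ``respects the Gaussian tail decay'' enters: I interpret it as $\varepsilon_n = o_P(\alpha_n/d)$, together with $\log(1/\alpha_n) = o(n)$. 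Under this reading, $t_{\alpha_n}/2 \pm \varepsilon_n \asymp t_{\alpha_n}$, and the half-width of $I_{ij}$ is of order $n^{-1/2}\sqrt{2\log(d/\alpha_n)}$. Two consequences follow. The half-width tends to $+\infty$ on the $\sqrt{n}$ scale, because $\alpha_n \to 0$. It tends to $0$ on the original scale, because $\log(1/\alpha_n) = o(n)$.

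\emph{Step 2 (false negatives).} Take $(i,j) \in A_0$. Then $|C_{n,ij}| \to |C_{0,ij}| > 0$, while the half-width of $I_{ij}$ is $o(1)$. So with probability tending to one, $I_{ij}$ lies within a shrinking neighbourhood of $C_{n,ij}$ and excludes $0$.

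\emph{Step 3 (false positives).} Take $(i,j) \notin A_0$. Then $\sqrt{n}\,C_{n,ij} = O_P(1)$, whereas the interval, centred at $C_{n,ij}$, has half-width $\sqrt{n}\cdot$(half-width) $\to \infty$ on the $\sqrt{n}$ scale. So $0 \in I_{ij}$ with probability tending to one. We need $\sigma_{ij} > 0$ here; it holds because $V_{ij,ij}$ equals $(1 - C_{0,ij}^2)^2 > 0$ in the limit for Gaussian data. If some variance were degenerate, I would handle it via a direct concentration argument rather than through the quantiles.

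\emph{Main obstacle.} The hard step is Step 1. The BvM theorem only gives TV convergence at an unspecified rate $\varepsilon_n$, while the quantiles involved sit at extreme levels $\alpha_n/d \to 0$. A TV error controls extreme quantiles only when it is small relative to the tail level, hence the need for the rate condition on $\alpha_n$. If this proves too delicate, I would bypass the Gaussian approximation for the false negatives. Instead I would use posterior concentration of $\boldsymbol{\Sigma}$, which is directly available from the inverse-Wishart form: $\Pi(|C_{ij} - C_{n,ij}| > \eta \mid \mathbf{X}_n)$ decays exponentially in $n$, so it falls below $t_{\alpha_n}/2$ whenever $\alpha_n$ does not decay exponentially. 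For the false positives, only the easy direction is needed: an $\alpha_n \to 0$ rate suffices, because the posterior mass on each side of $C_{n,ij}$ beyond $K/\sqrt{n}$ stays above a positive constant, which is eventually larger than $t_{\alpha_n}/2$.
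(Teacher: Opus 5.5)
Your argument is correct under your reading of the vague rate hypothesis. Its skeleton is the paper's: split into nonzero and null entries and let the Bernstein--von Mises theorem drive each case. The paper's proof is only a heuristic sketch, however, and your execution differs from it exactly where it matters.

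The paper asserts that $I_{ij}(t_{\alpha_n})$ has width $O_p(1/\sqrt{n})$. It also says that for $\boldsymbol{C}_{0,ij}=0$ the interval ``contracts toward zero, ensuring $0 \in I_{ij}$'', with $\alpha_n$ required to decay ``fast enough''. You instead correctly identify the width as of order $\sqrt{\log(1/\alpha_n)/n}$. You see that null entries stay out of $\widehat{A}_n$ because the $\sqrt{n}$-scaled half-width diverges. That needs only $t_{\alpha_n}\le\alpha_n\to 0$ and $\varepsilon_n\to 0$ in probability, so the only genuine rate restriction is the slow-decay one for nonzero entries.

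You also supply two links the paper leaves implicit. One is the passage from $\alpha_n$ to the data-dependent level via $t_{\alpha_n}\in[\alpha_n/d,\alpha_n]$ from Proposition~\ref{prop:specialrectangles}. The other is the union bound over the finitely many pairs.

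Finally, your fallback for false negatives uses exponential concentration of the inverse-Wishart posterior. This is a genuinely different tool from the paper's, and the more self-contained one. A total-variation approximation with unspecified rate cannot control posterior mass at level $\alpha_n/(2d)$ unless $\varepsilon_n=o_P(\alpha_n)$, and Theorem~\ref{theorem:bmv} does not provide that. The large-deviation bound instead gives the checkable condition $\log(1/\alpha_n)=o(n)$ alongside $\alpha_n\to 0$.
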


The requirement $\alpha_n \to 0$ ensures that the credible intervals cover the true parameter $\boldsymbol{C}_0$ with probability tending to one. However, for non-zero entries ($\boldsymbol{C}_{0,ij} \neq 0$), the interval must exclude zero asymptotically. This imposes a balance: while decreasing $\alpha_n$ widens the interval (via larger quantiles), this expansion must be dominated by the $1/\sqrt{n}$ contraction rate driven by the sample size. The Bernstein--von Mises theorem guarantees the necessary Gaussian tail behavior, ensuring that vanishing $\alpha_n$ eliminates false discoveries (by shrinking intervals around zero) while preserving true signals (by maintaining exclusion of zero for non-zero coefficients).


\paragraph*{Finite-sample properties: connection to frequentist multiple testing}
In practice we work with a fixed sample size $n$ and a fixed level for the credible region $1-\alpha$. The estimator $\widehat{A}_n$ still verifies some properties for $\alpha_n=\alpha$ fixed thanks to the construction of the credible region. Indeed the posterior probability of the region is valid for a finite-sample as it is not constructed using asymptotic results. Futhermore the support estimator is equivalent to the decision procedure defined in Section~\ref{sec:comparaisonreference} in the case where the reference value is the vector of zeros. This decision $\delta$ verifies:

\[
P\left( \bigcup_{i=1}^m \{ \delta_{i,j}=1, C_{i,j} = 0 \} \;\Big|\; \mathbf{X}_n \right) \leq \alpha.
\]

To illustrate what this property represents, it can be rewritten as the error rate of a multiple testing procedure. As mentioned in the introduction, the support of the correlation matrix is usually estimated by performing a hypothesis test for each coefficient $\boldsymbol{C}_{ij}$ with the null hypothesis $H_{0,ij}: \boldsymbol{C}_{ij}=0$ against the alternative $H_{1,ij}: \boldsymbol{C}_{ij} \neq 0$. One correlation test provides control over the probability of one false positive occurring, whereby the null hypothesis has been rejected incorrectly. A multiple testing procedure then provides tools to extend this control from one individual test to an error rate across a set of tests. An interesting type of error at the global level is the family-wise error rate (FWER). The FWER is a specific error control defined under the complete null hypothesis $H_0^C$, which assumes that all null hypotheses are true simultaneously. FWER is then defined as the probability of obtaining at least one false positive result under this complete null hypothesis:
\[
\text{FWER} = \mathbb{P}(\text{there is at least one false positive} | H_0^C).
\]

The usual procedure for controlling the FWER is the Bonferroni correction \citep{bonferroni_teoria_1936}, which does not work with joint distributions under $H_0^C$, but rather with marginal distributions under $H_{0,ij}$. Error control is thus strict and conservative, especially if there are many dependencies. This procedure has been heavily criticised in fMRI studies, as many believe that it does not detect anything \citep{epskamp_estimating_2018}. Other methods are known, such as the Holm--Bonferroni method \citep{holm_simple_1979}, that is slighty less conservative.

In our case we can interpret our estimator $\widehat{A}_n$ as a control on the probability of having a false positive but conditionally on the data. In a continuous Bayesian framework, the notion of a ``false positive'' (detecting a non-existent difference) cannot be defined simply by the equality $\theta_i = \theta_{\text{ref},i}$, as this event has posterior probability zero. Instead, we can interpret the false positive as an erroneous directional claim.

\begin{definition}[Bayesian False Positive]

Let the decision for component $i$ be decomposed into two directional claims: $\delta_i^+ = 1$ (claiming a positive shift, $\theta_i > \theta_{\text{ref},i}$) and $\delta_i^- = 1$ (claiming a negative shift, $\theta_i < \theta_{\text{ref},i}$). A Bayesian False Positive occurs on component $i$ if a directional claim is made that contradicts the true parameter value. Formally:
\begin{itemize}
    \item A \textit{false positive increase} occurs if $\delta_i^+ = 1$ and $\theta_i \leq \theta_{\text{ref},i}$.
    \item A \textit{false positive decrease} occurs if $\delta_i^- = 1$ and $\theta_i \geq \theta_{\text{ref},i}$.
\end{itemize}
The event of a Bayesian false positive on component $i$, denoted by $\text{FP}_i$, is defined as the union of these two mutually exclusive errors:
\[
\text{FP}_i = \left( \{ \delta_i^+ = 1 \} \cap \{ \theta_i \leq \theta_{\text{ref},i} \} \right) \cup \left( \{ \delta_i^- = 1 \} \cap \{ \theta_i \geq \theta_{\text{ref},i} \} \right).
\]
\end{definition}

This allows us to define a Bayesian version of the FWER.

\begin{definition}[Bayesian Family-Wise Error Rate (BFWER)]
\label{def:bfwer}
The Bayesian Family-Wise Error Rate (BFWER) is defined as the posterior probability of making at least one Bayesian false positive across all $d$ components. It quantifies the risk of incorrectly inferring either the existence or the direction of a difference in any part of the parameter vector.

Formally, for a decision procedure characterized by the vectors $\delta^+$ and $\delta^-$, the BFWER is:
\begin{align*}
    \text{BFWER} &= \mathbb{P}\left( \bigcup_{i=1}^d \text{FP}_i \;\Bigg|\; \mathbf{y} \right) \\
    &= \mathbb{P}\left( \bigcup_{i=1}^d \left[ \left( \{ \delta_i^+ = 1 \} \cap \{ \theta_i \leq \theta_{\text{ref},i} \} \right) \cup \left( \{ \delta_i^- = 1 \} \cap \{ \theta_i \geq \theta_{\text{ref},i} \} \right) \right] \;\Bigg|\; \mathbf{y} \right).
\end{align*}
Controlling the BFWER at a level $\alpha$ ensures that the probability of making any erroneous directional claim in the entire set of comparisons is bounded by $\alpha$.
\end{definition}

The results established in Section~\ref{sec:comparaisonreference} assures that if the decision induced by $\delta$ is based on a credible rectangle at a level $1-\alpha$, then it controls the BFWER at a level $\alpha$. Thus the support estimator $\widehat{A}_n$ for $\alpha_n=\alpha$ controls the BFWER at a level $\alpha$.

\begin{remark}
By definition this does not provide a control on the usual frequentist FWER, the probability being controlled under two different set-up. However these two probabilities are linked, and a BFWER control still holds good FWER properties, as shown empirically in the next section.

\end{remark}

\section{Evaluation on a synthetic dataset}
\label{sec:resultssimu}

To evaluate the performance of the support estimator, we adopt a frequentist validation approach, comparing our proposed methods against standard frequentist estimators that control the Family-Wise Error Rate (FWER). The four estimators assessed are:
\begin{itemize}
    \item \textit{Bayes-Optimal}: Our proposed support estimator based on the optimal credible rectangle.
    \item \textit{Bayes-Bonferroni}: Our proposed support estimator based on the Bonferroni-type credible rectangle.
    \item \textit{Multiple-Test Bonferroni}: A multiple correlation test applied to each coefficient, corrected via the Bonferroni procedure.
    \item \textit{Multiple-Test Holm--Bonferroni}: A multiple correlation test applied to each coefficient, corrected via the Holm--Bonferroni procedure.
\end{itemize}

\paragraph*{Simulation Plan}
Using a ground-truth covariance matrix $\boldsymbol{\Sigma}_0$ and its corresponding support $A_0$, we simulate $n$ observations from $X \sim \mathcal{N}(0,\boldsymbol{\Sigma}_0)$ to estimate the support $\widehat{A}_0$. This process is repeated across $S=1000$ independent seeds to ensure robust performance evaluation. We assess the estimators using two primary metrics. First, we compute the mean accuracy, defined as the average proportion of correctly estimated coefficients across the simulations:
\begin{equation*}
    \text{Acc} = \frac{1}{S} \sum_{s=1}^{S} \left( \frac{| \widehat{A}_0^{(s)} \cap A_0 | + | (\widehat{A}_0^{(s)})^c \cap A_0^c |}{p(p-1)/2} \right) \times 100,
\end{equation*}
where $p(p-1)/2$ is the total number of coefficients in the support to estimate. To quantify the stability of this performance across seeds, we also report the standard deviation of the accuracy, that is linked to the variability during the simulation process of $X \sim \mathcal{N}(0,\boldsymbol{\Sigma}_0)$: 
\begin{equation*}
    \sigma_{\text{Acc}}^2 = \frac{1}{S-1} \sum_{s=1}^{S} \left( \text{Acc}^{(s)} - \text{Acc} \right)^2,
\end{equation*}
where $\text{Acc}^{(s)}$ is the accuracy for seed $s$ and ${\text{Acc}}$ is the mean accuracy. Second, to verify error control, we estimate the \textit{empirical FWER} as the proportion of simulations containing at least one false positive:
\begin{equation*}
    \widehat{\text{FWER}} = \frac{1}{S} \sum_{s=1}^{S} \mathds{1}\left(\exists (i,j) \in \widehat{A}_0^{(s)}, (i,j) \notin A_0 \neq \emptyset \right).
\end{equation*}

We evaluate these metrics across varying sample sizes $n \in \{50, 100, 500\}$ and three distinct sparse positive-definite ground-truth matrices $\boldsymbol{\Sigma}_0$ (Figure in appendix). These matrices were selected to represent varying edge densities ($2.4\%$, $24\%$, and $48\%$). The highly sparse configuration ($2.4\%$) reflects common assumptions in fMRI sparse modeling, while the denser configurations align with proportions observed in our real-world datasets. This design allows us to determine whether specific methods offer a superior balance between error control and edge detection power under different structural and asymptotic conditions.

\begin{table}[h]
\caption{Metrics obtained for each method based on $1000$ simulations for each configuration $(\boldsymbol{\Sigma}_0,n)$. Matrices $\boldsymbol{\Sigma}_0$ are characterized by their proportion of non-zeros coefficients (True Positive). Metrics used are mean accuracy Acc and empirical FWER $\widehat{\text{FWER}}$, both in percentages.}
\label{tab:performances_support}
\centering
\begin{tabular}{ccrlllllll}
\toprule
 & & \multicolumn{2}{c}{2.4\% True Positive} & \multicolumn{2}{c}{24\% True Positive} & \multicolumn{2}{c}{48\% True Positive} \\
\cmidrule(lr){1-2} \cmidrule(lr){3-4} \cmidrule(lr){5-6} \cmidrule(lr){7-8}
$n$ & Method & {\scriptsize ${\text{Acc}} \pm\sigma_{\text{Acc}}$} &  {\scriptsize $\widehat{\text{FWER}}$} & {\scriptsize ${\text{Acc}} \pm\sigma_{\text{Acc}}$} &  {\scriptsize $\widehat{\text{FWER}}$} & {\scriptsize ${\text{Acc}} \pm\sigma_{\text{Acc}}$} &  {\scriptsize $\widehat{\text{FWER}}$} \\
\midrule
\multirow{4}{*}{50} & Bonf. rectangle (our) & \textbf{98.2 (±0.1)} & 5.9 & 80.2 (±1.0) & 4.9 & 40.1 (±3.0) & 1.5 \\
  & Optimal rectangle (our) & \textbf{98.2 (±0.1)} & 7.8 & \textbf{80.5 (±1.0)} & 7.0 & \textbf{41.2 (±3.3)} & 2.4 \\
  & Mult. test Bonf. & \textbf{98.2 (±0.1)} & 4.8 & 79.8 (±0.9) & 3.5 & 38.9 (±2.8) & 1.1 \\
  & Mult. test Holm-Bonf. & \textbf{98.2 (±0.1)} & 4.9 & 79.8 (±0.9) & 3.5 & 39.1 (±2.9) & 1.2 \\

\addlinespace
\midrule
\addlinespace
\multirow{4}{*}{100} & Bonf. rectangle (our) & \textbf{98.7 (±0.1)} & 6.2 & 85.0 (±1.1) & 4.5 & 54.1 (±3.9) & 1.4 \\
  & Optimal rectangle (our) & \textbf{98.7 (±0.1)} & 6.6 & \textbf{85.2 (±1.1)} & 5.4 & \textbf{55.1 (±4.1)} & 1.8 \\
  & Mult. test Bonf. & \textbf{98.7 (±0.1)} & 5.2 & 84.8 (±1.1) & 3.9 & 53.3 (±3.8) & 1.1 \\
  & Mult. test Holm-Bonf. & \textbf{98.7 (±0.1)} & 5.3 & 84.8 (±1.1) & 4.2 & 54.1 (±4.0) & 1.5 \\

\addlinespace
\midrule
\addlinespace
\multirow{4}{*}{500} & Bonf. rectangle (our) & \textbf{99.4 (±0.1)} & 4.8 & 94.7 (±0.5) & 4.3 & 84.8 (±1.7) & 1.8 \\
  & Optimal rectangle (our) & \textbf{99.4 (±0.1)} & 5.3 & \textbf{94.8 (±0.5)} & 5.0 & 85.1 (±1.7) & 2.0 \\
  & Mult. test Bonf. & \textbf{99.4 (±0.1)} & 5.0 & 94.7 (±0.5) & 4.1 & 84.8 (±1.7) & 1.7 \\
  & Mult. test Holm-Bonf. & \textbf{99.4 (±0.1)} & 5.1 & \textbf{94.8 (±0.5)} & 5.0 & \textbf{85.7 (±1.7)} & 3.1 \\
\bottomrule
\end{tabular}

\end{table}

\paragraph*{Results}
The results in Table~\ref{tab:performances_support}  show that the Bayes-Optimal method has the best accuracy in almost every combination of  $\boldsymbol{\Sigma}$ and $n$, and it advantage increases with the proportion of edges for $n$ fixed. First, when the proportion of edges is high, Bayes-Optimal is much more efficient. This is linked to the dependencies between the coefficients, which are stronger when there are a lot of edges and are eliminated by the zero coefficients. The Bayes-Optimal method is the only one of the four estimators to take these dependencies into account. The same reasoning explains why, on the other hand, the Bayes-Optimal method becomes nearly identical to the other three methods when the proportion of edges is low. A second observation from this table is that Bayesian estimators are more accurate when $n$ is low. This is because the prior distribution stabilises the estimation in situations where the empirical estimator is very unstable. Overall, the gain in accuracy is very low or negligible when $n = 500$, but datasets of this length are unlikely in fMRI studies. 

Finally, the empirical FWER is slightly higher for Bayes-Optimal when $n$ is low, but similar to the results obtained using the Holm--Bonferroni correction. It should be stressed that the Bayesian method controls a Bayesian FWER, which is not what we are estimating here. 

In conclusion, we believe that this simulation study demonstrate the effectiveness of our estimator in identifying significant edges while maintaining control over the BFWER, which tends to be less conservative than frequentist methods. The increased flexibility is due to the fact that the optimal rectangle takes into account the multivariate dependencies. However, this is unnecessary if $n$ is high, as procedures such as the Holm--Bonferroni procedure can already detect most edges with much lower time and memory complexity.

\section{Application to fMRI datasets}
\label{sec:realdatasets}
Here, we apply our approach (model and methodology) to well-studied datasets. This serves both to validate our results against existing findings but also to demonstrate the enhanced individual-level interpretability offered by our method.

In this section we use the inverse-Wishart model presented in Section~\ref{sec:model} with the optimal rectangle construction. The methodology used to assess significant differences between correlations is the one presented in Section~\ref{sec:comparaisontwoparameters}. This comparison is ultimately a comparison of correlation matrices, but we use it to evaluate the difference between two fMRI scans.

\subsection{Human Connectome Project dataset}

Reliability of graph metrics has frequently been studied using test-retest datasets, such as the HCP dataset, which provides data for 100 healthy controls scanned at two different times. Studies like \cite{termenon_reliability_2016} used indicators such as IntraClass Correlation (ICC) to show that for many graph metrics, within-subject variance is lower than between-subject variance. We reproduced this analysis using our method to detect significant differences. For each scan $S_i^j$ (where $i$ is the subject and $j=1,2$ is the scan number), we compute the optimal rectangle for the correlation matrix at a $95\%$ level. If two rectangles are disjoint, the fMRI scans are considered significantly different. We then calculate the proportion of significant differences within subjects (between $S_i^1$ and $S_i^2$) and between subjects (for $i_1 \neq i_2$, between $S_{i_1}^{j_1}$ and $S_{i_2}^{j_2}$). This proportion represents the frequency where at least one disjoint interval exists between two scans, indicating a significant global difference; we do not evaluate the number of differing edges.

The proportion of times two scans of the same subject differ is $41\%$, whereas the between-subject difference rate is $83\%$. This aligns with known results on this dataset \citep{termenon_reliability_2016}. Our approach is particularly interesting because it assesses whether differences are significant, whereas previous comparisons were more arbitrary. However, these results also confirm that fMRI connectivity varies due to numerous factors, including within-individual differences. We do not claim to solve this issue; our work focuses on matrix inference, and the inherent challenges of interpreting correlations remain.

\subsection{Coma dataset}

This dataset provides access to fMRI scans of $10$ control volunteers and $10$ patients who experienced a brain injury resulting in a coma. Both groups were scanned in the same setting using a brain atlas divided into $p=107$ ROIs, with wavelet analysis providing $n=95$ iid observations. This dataset is of interest because each patient was scanned twice: once 30 days after the brain injury, and again 60 days later, allowing for longitudinal analysis. Further details can be found in \cite{oujamaa_functional_2023}. 

Previous studies have compared the control group with the patient group after 30 days and again after 60 days. They concluded that patients were closer to the healthy controls after 60 days than after 30 days, indicating a recovery period for the patients. However, this approach disregarded the individual evolution of each patient. In this section, our aim is to obtain similar results at the group level and to provide new tools for longitudinal follow-up of each patient.

\paragraph*{Comparison to the control group} To mimic the results of previous studies on this dataset, the first step is to use our method to compare patient scans to the control group. It is standard to use for the control group a matrix $\boldsymbol{C}_{\text{control}}$ which is the mean of the correlation matrices of the volunteers. In our case, we want to introduce uncertainty into this matrix as it was computed on a low number of healthy subjects, so we consider a posterior distribution, $\mathcal{IW}(\boldsymbol{C}_{\text{control}},\nu+95)$, for the control group. This allows us to reproduce, for this control matrix, the uncertainty that we have for each patient with $n=95$ observations. We then compare the credible region of this posterior distribution to the credible region at the $95 \%$ level for each patient for both scans. The results are summarised in Table~\ref{tab:tablecoma}. %

\begin{table}[h]

\centering
\caption{Cousciousness of each patient (MCS for \emph{minimally conscious state}, and C for \emph{coma}) and comparison of credible rectangles based on the number of disjoint edges for each patient for both exams with the distribution $\mathcal{IW}(\boldsymbol{C}_{\text{control}},\nu+95)$, based on a group of control subjects. Bold font is used for patients whose conscious state has changed.}
\label{tab:tablecoma}
\begin{tabular}{ccccc}
\toprule
{Patient} & \multicolumn{2}{c}{Consciousness state} & \multicolumn{2}{c}{Comparison to $\boldsymbol{C}_{\text{control}}$} \\
\cline{2-5}
 & Exam 1 & Exam 2 & Exam 1 & Exam 2 \\
\midrule
\textbf{3}  & \textbf{MCS} & \textbf{C} & \textbf{14}  & \textbf{7}   \\
4  & C & C & 7   & 0   \\
7  & C & C & 3   & 6   \\
8  & C & C & 8   & 0   \\
\textbf{9}  & \textbf{MCS} & \textbf{C} & \textbf{40}  & \textbf{3}   \\
15 & C & C & 2   & 5   \\
19 & C & C & 12  & 21  \\
\textbf{22} & \textbf{MCS} & \textbf{C} & \textbf{164} & \textbf{5}   \\
25 & C & C & 3   & 2   \\
29 & C & C & 9  & 5   \\
\bottomrule
\end{tabular}

\end{table}

We are particularly interested in how the patients' scans evolve between the first and second scans compared to the control group. Analysis of Table~\ref{tab:tablecoma} reveals a general trend of decreased differences in the second examination compared to the first. Of the $10$ patients, seven exhibited a reduction in the metric, with two of them reaching a value of zero. This tendency was even more pronounced among the three patients diagnosed with a minimally conscious state (MCS). Notably, Patients 9 and 22, who displayed substantial deviations from the control state during the initial exam, showed a significant reduction in these differences by the second exam. However, it is important to note that some patients deviated from this trend; Patient 19 in particular warrants further attention.

\paragraph*{Longitudinal analysis of a patient}
The richness of our method lies in its ability to directly compare two scans of the same patient, as demonstrated with the previous dataset. When we applied this to patients in the coma dataset, we found significant differences between the two scans for each patient, resulting in a $100\%$  within-subject difference. Combined with the previous results in Table~\ref{tab:tablecoma}, this result confirms our ability to detect changes induced by the recovery of patients. 

\begin{figure}[h]
    \centering
    \begin{subfigure}[b]{0.49\textwidth}
        \centering
        \includegraphics[width=\textwidth]{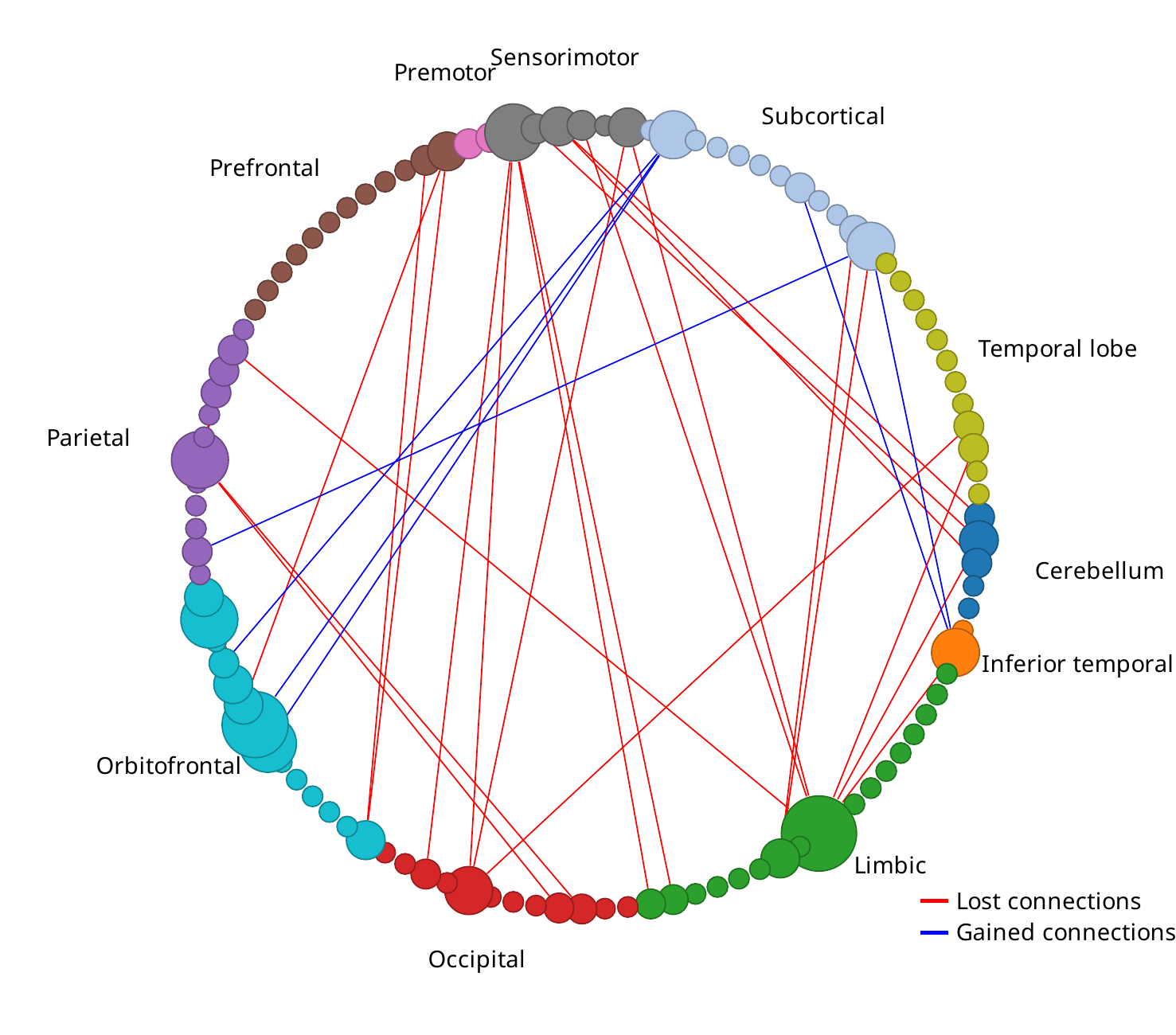}
        \subcaption{Patient 9.}
    \end{subfigure}
    \hfill
    \begin{subfigure}[b]{0.49\textwidth}
        \centering
        \includegraphics[width=\textwidth]{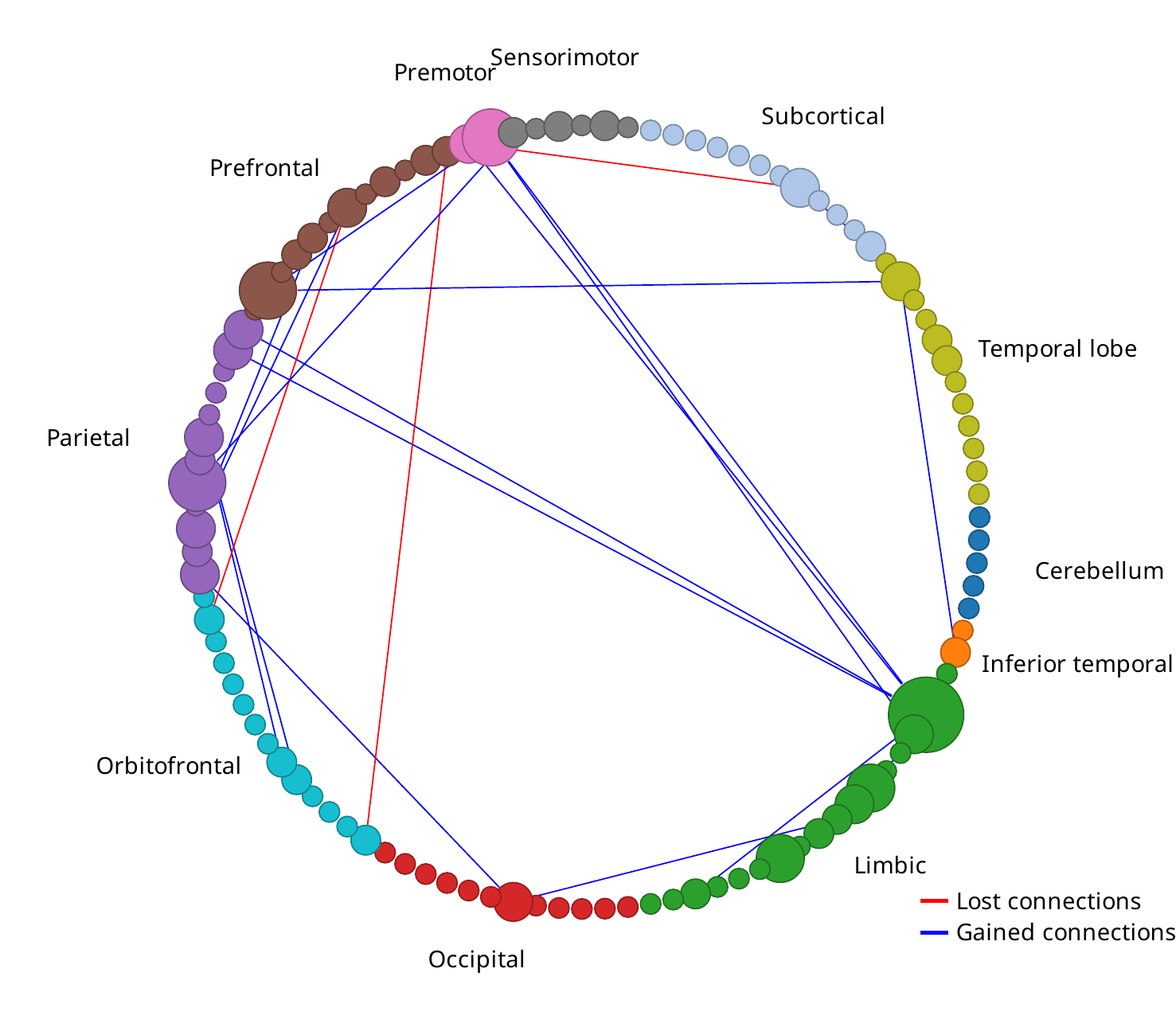}
        \subcaption{Patient 22.}
    \end{subfigure}
    \caption{Network representations of the gained and lost connections, respectively in blue and red links, between the first and second exam for two patients. The ROIs are reorganized in subnetworks. A connection is considered gained (resp. lost) if it is significantly greater (resp. lower) in the second exam compared to the first.}
    \label{fig:chords}
\end{figure}

We can take a closer look at the representation of the credible rectangles. Specifically, we will focus on Patients 9 and 22, who showed recovery between the two scans. Using the comparison between the first and second exams, we can identify where the differences occur. The methodology from Section~\ref{sec:comparaisontwoparameters} allows us to find the connections that were significantly affected between these two scans, as well as whether they were affected negatively or positively, by comparing the marginal intervals. To illustrate this, we have chosen to present them in the form of a chord diagram, reorganising the regions into subnetworks. An example can be seen in Figure~\ref{fig:chords}, where these diagrams are represented for Patients 9 and 22. The edges that appear represent those with significant differences between the first and second exams, meaning these marginal intervals were disjoint. The colour of the edges represents the direction of evolution: if the interval of the second exam is greater, the connection has appeared (blue); if it is lower, the connection has been lost (red). The interesting aspect of this figure is how the two patients' evolutions are completely different from a connectivity graph perspective. Patient 9 undergoes an overall loss of connectivity, whereas Patient 22 gains more connections. The affected subnetworks are also different. This supports our view that individual follow-up is essential.

\paragraph*{Raising awareness of uncertainty}

\begin{figure}[h]
    \centering
    \begin{subfigure}[b]{0.3\textwidth}
        \centering
        \includegraphics[width=\textwidth]{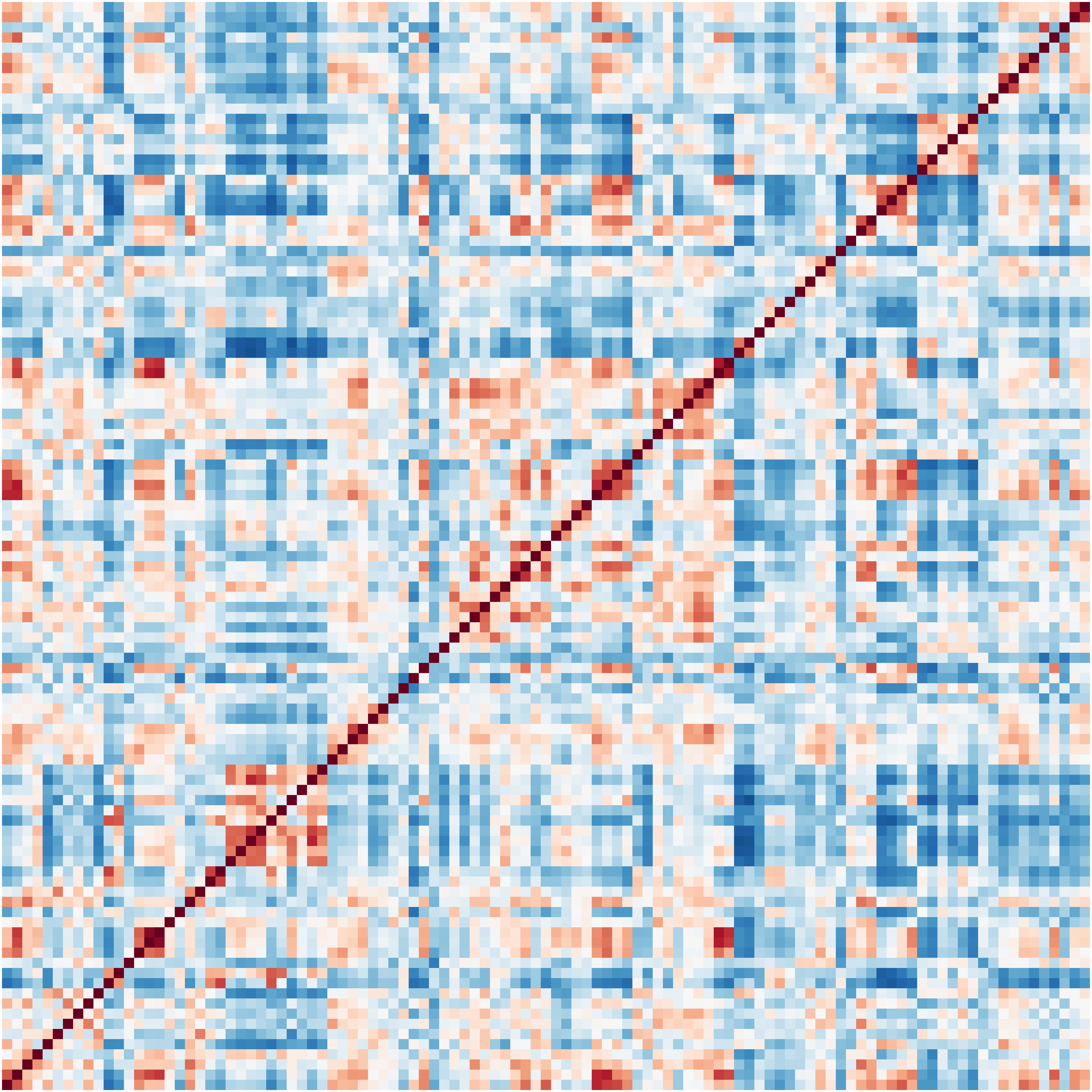}
        \subcaption{Lower Quantile 1}
    \end{subfigure}
    \hfill
    \begin{subfigure}[b]{0.3\textwidth}
        \centering
        \includegraphics[width=\textwidth]{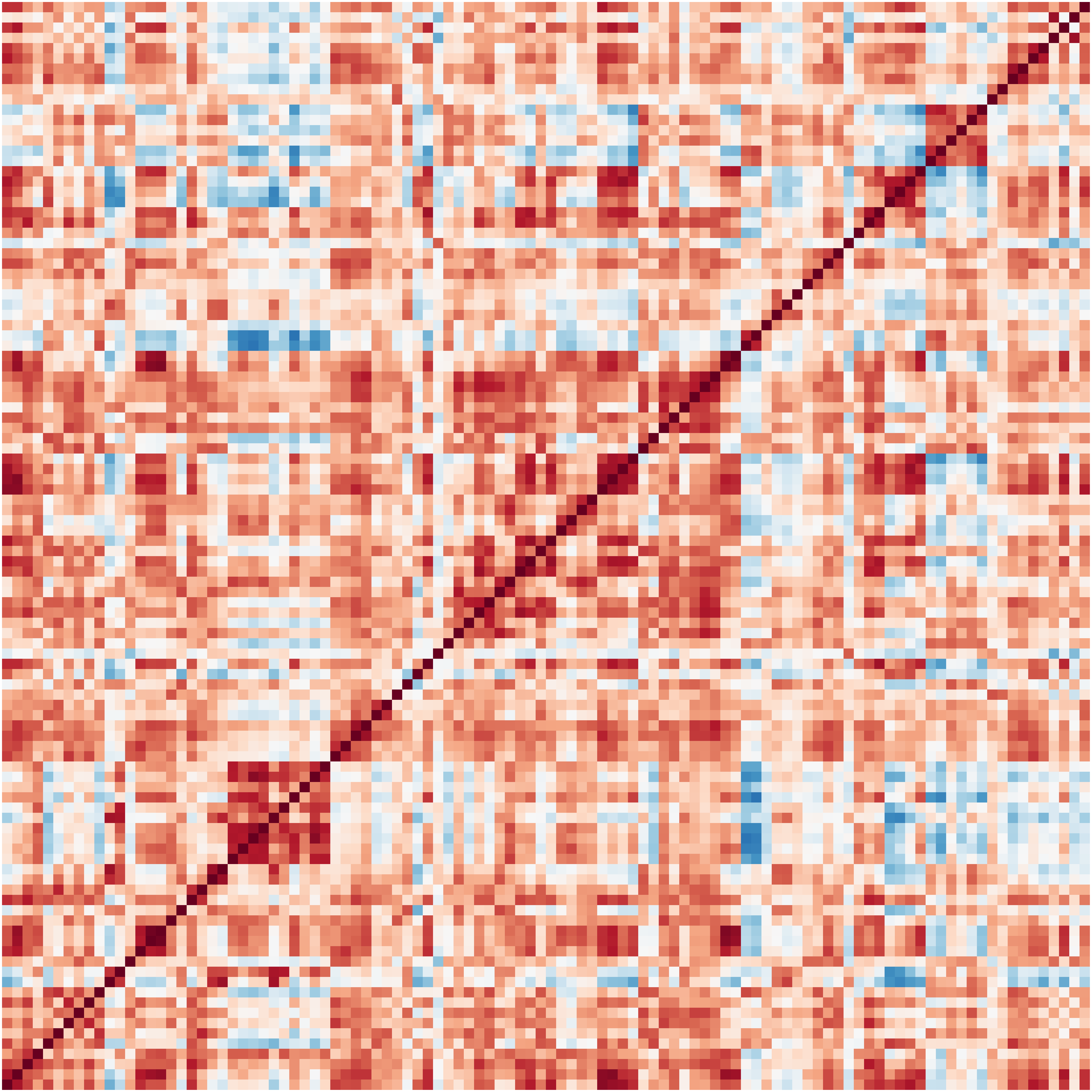}
        \subcaption{$\hat{C_n}$ Exam 1}
    \end{subfigure}
    \hfill
    \begin{subfigure}[b]{0.3\textwidth}
        \centering
        \includegraphics[width=\textwidth]{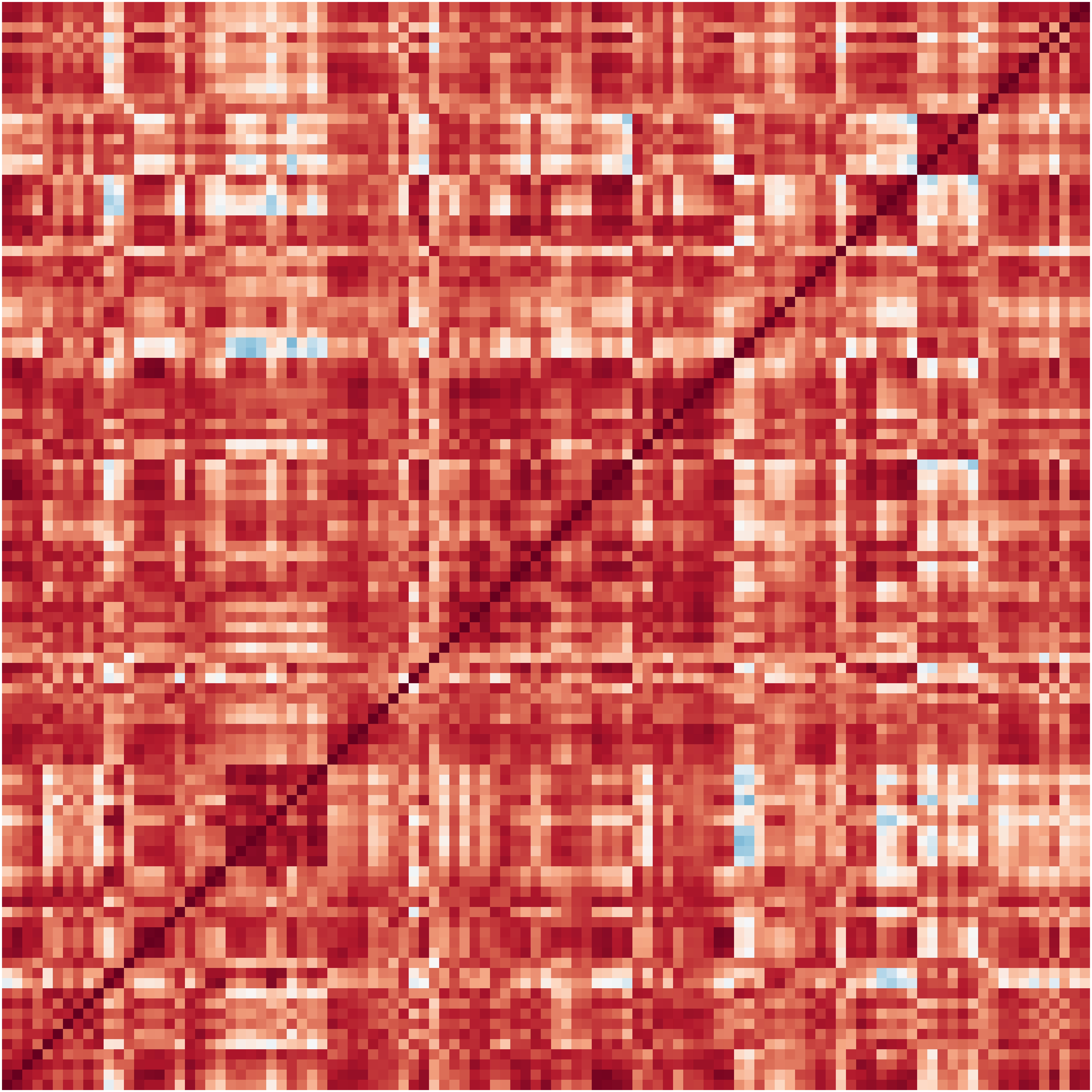}
        \subcaption{Upper Quantile 1}
    \end{subfigure}
    
    \vspace{0.5cm}
    
    \begin{subfigure}[b]{0.3\textwidth}
        \centering
        \includegraphics[width=\textwidth]{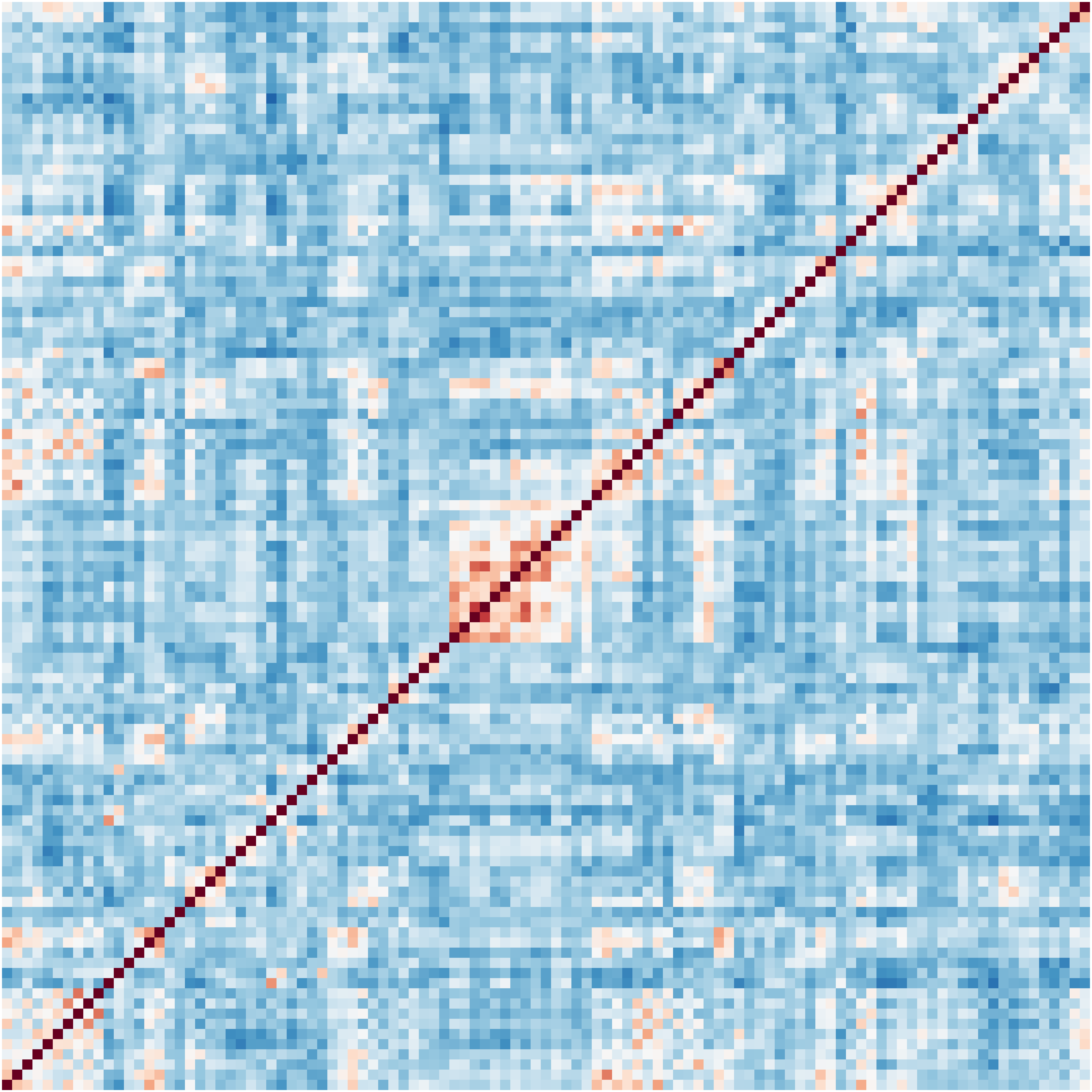}
        \subcaption{Lower Quantile 2}
    \end{subfigure}
    \hfill
    \begin{subfigure}[b]{0.3\textwidth}
        \centering
        \includegraphics[width=\textwidth]{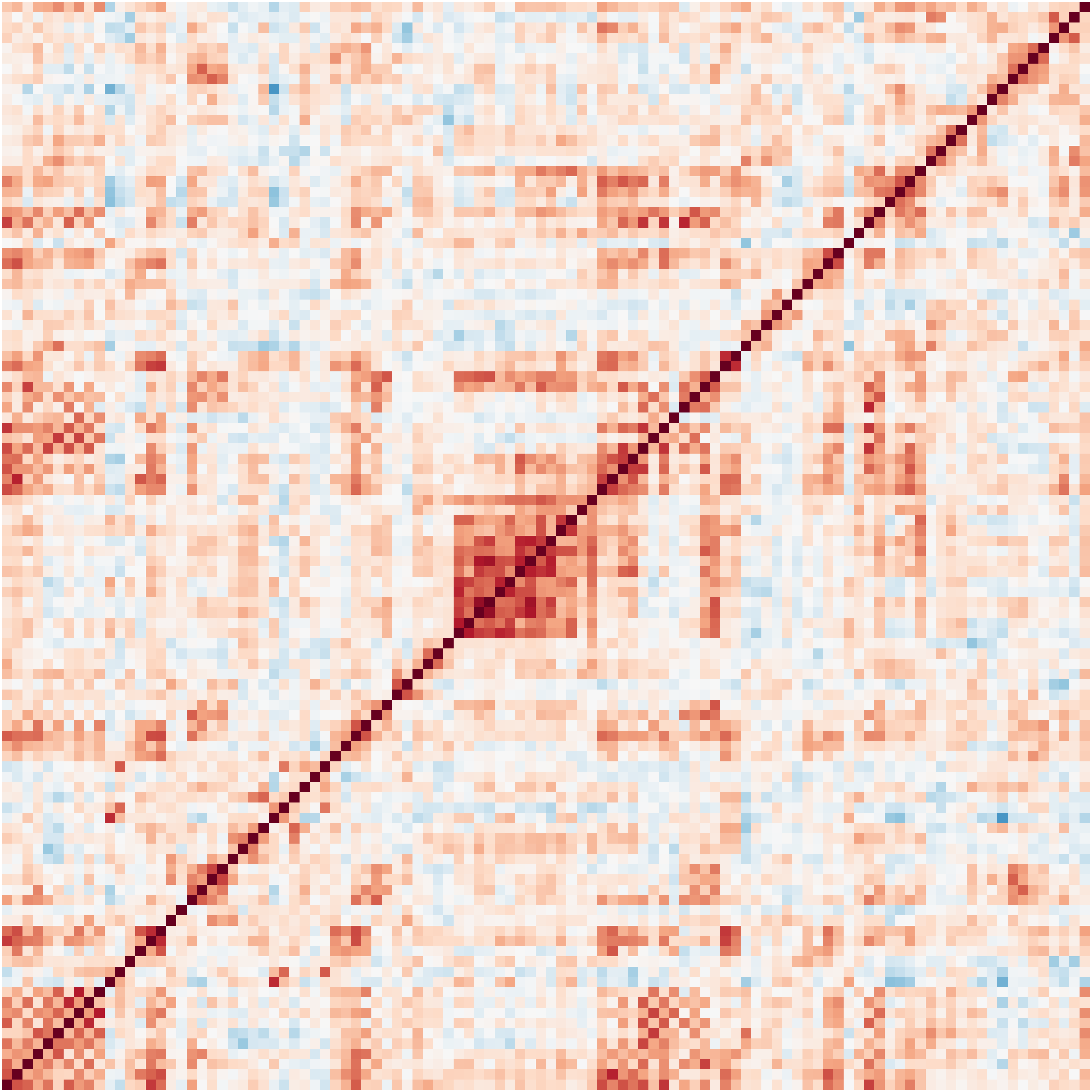}
        \subcaption{$\hat{C_n}$ Exam 2}
    \end{subfigure}
    \hfill
    \begin{subfigure}[b]{0.3\textwidth}
        \centering
        \includegraphics[width=\textwidth]{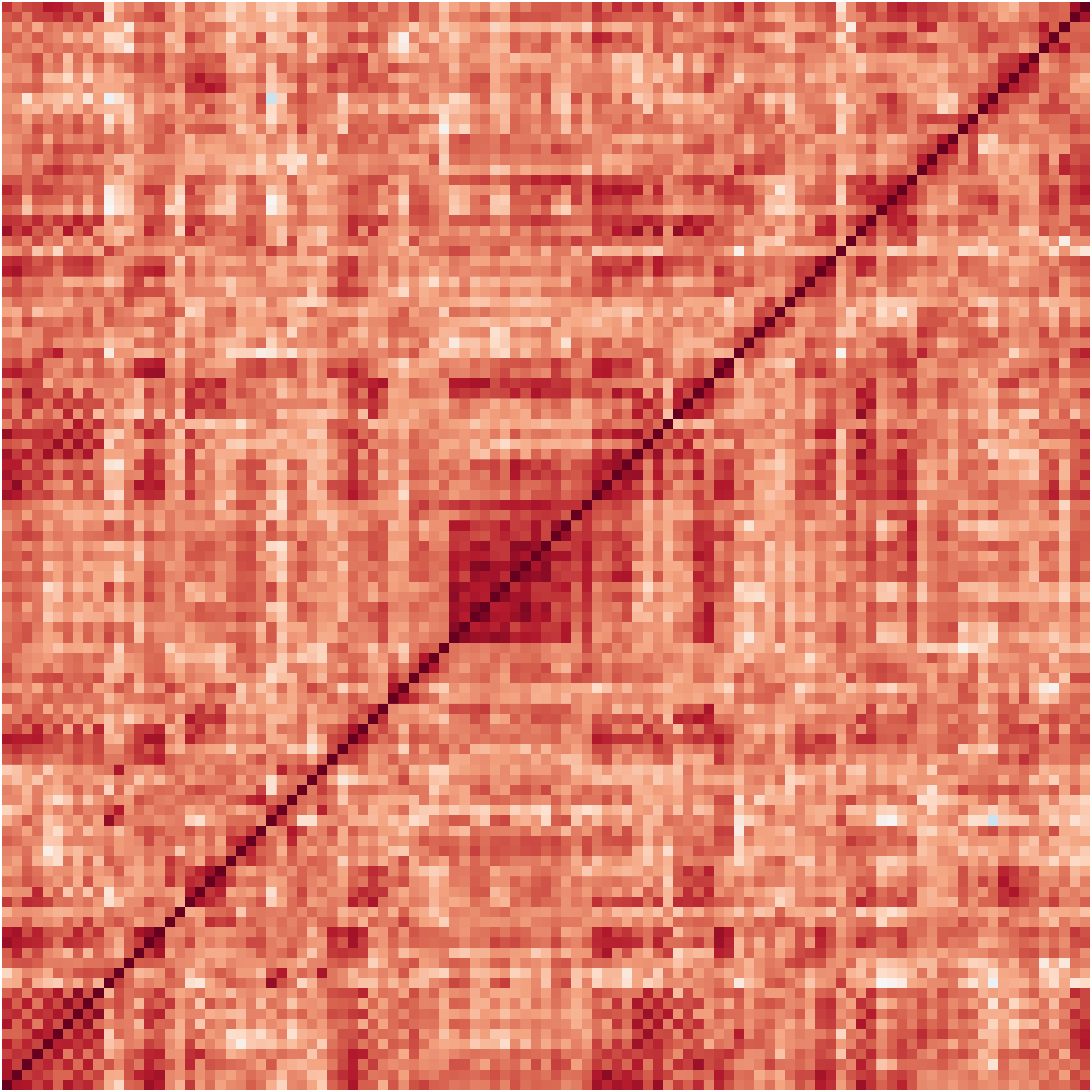}
        \subcaption{Upper Quantile 2}
    \end{subfigure}
    
    \caption{Empirical correlation matrices and quantiles representation (lower and upper) of each credible rectangle for the subject 9 at two different time (Exam 1: $30$ days after coma in first row, Exam 2: $60$ days after coma).}
    \label{fig:patient9_matrices}
\end{figure}

Thanks to their rectangular shape, it is possible to represent the credible regions directly. Each credible rectangle provides two matrices for the upper and lower quantiles. While we have seen how this can be used to detect differences, the matrices can also be used to visualise uncertainty. Figure~\ref{fig:patient9_matrices} shows both the empirical correlation matrices and the upper and lower quantile matrices for Patient 9 in both exams. This help us identify visually some zones that are highly correlated with high certainty; for instance in the second row we detect a red square in the middle that corresponds to the visual cortex, with significantly non-zero coefficients (red in the lower matrix). This also illustrates the unstable nature of correlation matrix estimation based on a small number of realisations for a high-dimensional matrix. We believe this carries a strong message in itself as it raises awareness of instability. It encourages caution when working with such estimations.

\subsection{Trade-off between data size and uncertainty}

As demonstrated throughout this paper, the credible region size depends heavily on $n$, the number of data realizations. However, in fMRI acquisition, the number of realizations is limited by exam costs and patient discomfort during prolonged scans. Consequently, increasing the number of data points is often unfeasible. Alternatively, the credible region size is also influenced by the random variable's dimension and the dependencies between coefficients. It is instructive to examine how the coefficients of $\boldsymbol{C}$, alongside $n$ and $p$, impact this size. Since the volume of rectangle of different dimensions are not comparable, we chose to compare them using the mean length of the marginal intervals instead. For a rectangle $\mathcal{R}=\prod_{i<j}[q_{ij}^l,q_{ij}^u]$, the mean length is defined as:
\begin{align}
\label{eq:meanlength}
    m_l(\mathcal{R})=\frac{\sum_{i<j} (q_{ij}^u-q_{ij}^l)}{p(p-1)/2}.
\end{align}

We evaluate $m_l$ for various inverse-Wishart distributions parametrized by $(n,p,\rho)$, where $n$ denotes the number of observations, $p$ the matrix dimension, and $\rho$ characterizes dependencies via $\boldsymbol{\Sigma}(\rho)$ as in Equation~\ref{eq:inversewishartexemple} with the distribution $\mathcal{IW}_p(\boldsymbol{\Sigma}(\rho),p+2+n)$.

\begin{figure}[h]
    \centering
    \includegraphics[width=0.99\linewidth]{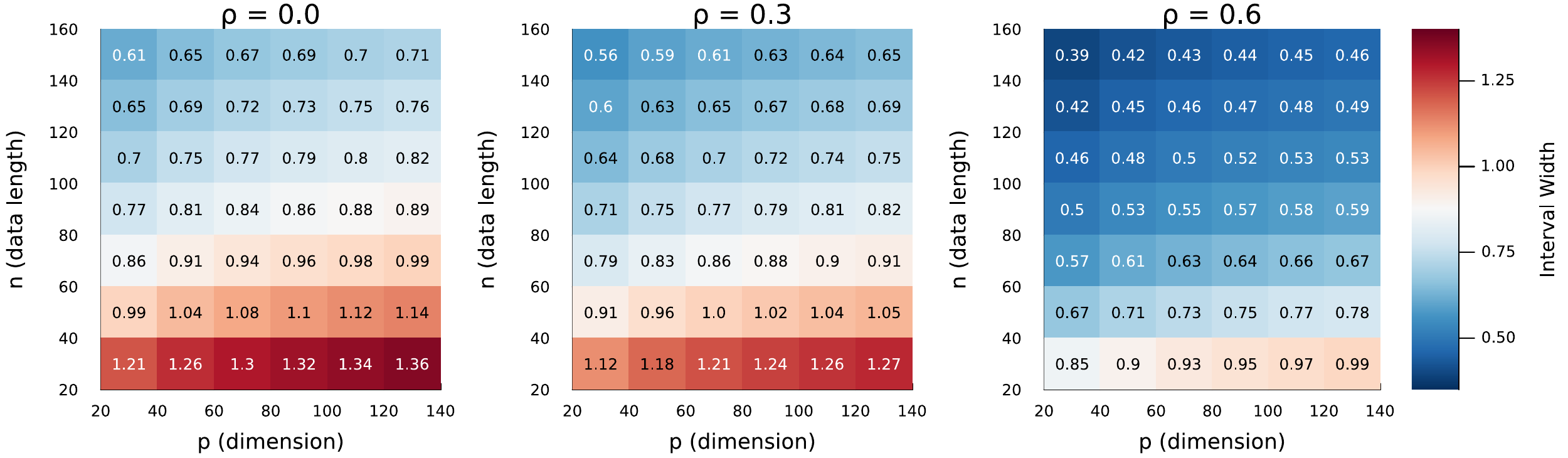}
    \caption{Heatmaps of the mean length $m_l$ defined in Equation~\ref{eq:meanlength} as a function of the parameter $\rho$, number of variables $p$, and number of time points $n$. The mean length is computed for $95\%$ credible rectangles derived for the distribution of correlation coefficients based on the distribution $\mathcal{IW}_p(\boldsymbol{\Sigma}(\rho),p+2+n)$ (defined in Equation~\ref{eq:inversewishartexemple}). Each cell represents the average interval length over 100 simulations.}
    \label{fig:heatmap_interval_length}
\end{figure}

Figure~\ref{fig:heatmap_interval_length} displays these results as a grid of $n$ and $p$ for fixed $\rho$. The results confirm the expected behavior: length decreases as $n$ or $\rho$ increases, and increases with $p$. However, the length tends to plateau as $n$ decreases, following an initial sharp gain. This suggests that a moderate increase in scan length can efficiently decreases uncertainty.

We present this grid primarily as a guide for compromise. Indeed, to detect significant differences in correlation (e.g., a difference of $0.4$), one could focus on a specific subnetwork of size $p'<p$ using this grid. Reducing the dimension increases statistical power. We argue that the abundance of information in fMRI scans often misleads researchers into analyzing the whole brain; however, focusing on specific brain subnetworks may yield more robust conclusions.

\section{Discussion}

The primary advantage of a Bayesian framework for modeling correlation matrices lies in accessing the joint distribution of coefficients, a task rarely feasible in a non-asymptotic frequentist framework. We illustrated this benefit through FWER simulations, where frequentist procedures were proved to be over-conservative due to their inability to account for dependencies between coefficients. Although computing the joint distribution incurs higher computational costs, this approach serves as a compelling proof of concept, motivating a paradigm shift toward uncertainty-aware inference.

Our central objective is the direct quantification and visualization of this uncertainty, particularly when applied to real fMRI data. This application provides tangible tools for interpreting longitudinal patient evolution, offering critical interpretability at the single-subject level. While larger studies would be required to validate these findings in a definitive clinical trial, the current framework is already encouraging in itself. Extending this core goal, the grid presented in Figure~\ref{fig:heatmap_interval_length} quantifies uncertainty as a function of $(p,n)$ to guide the design of future clinical studies. By exposing the mathematical limits of uncertainty reduction, this grid demonstrates how to optimize study protocols, specifically by reducing the number of regions of interest to focus on biologically relevant subnetworks, thereby maximizing statistical power within feasible acquisition constraints. This encourages researchers to adopt modest, mathematically grounded objectives rather than attempting to overcome intrinsic high-dimensional limitations.

We recognize that fMRI connectivity analysis encompasses more than just correlation matrix inference. Nevertheless, this study highlights the inherent instability of these matrices, cautioning against their uncritical use in deriving graph metrics. A key direction for future work involves propagating this quantified uncertainty into tools specifically designed for graph analysis.

While our model is relatively simple, a choice that may invite criticism regarding potential bias or hyperparameter sensitivity, we argue that its behavior (specifically, slight shrinkage toward zero) aligns with several existing methods. The originality of this work does not lie in proposing a superior estimator, but in establishing a general, adaptive methodology to quantify and exploit estimation uncertainty. This approach informs decision-making, even if it cannot fully resolve the instability inherent in high-dimensional settings. This methodology remains adaptable to any model allowing posterior sampling.

\section*{Acknowledgments}
This work was partly supported by the Agence Nationale de la Recherche under the France 2030 programme, references ANR-23-IACL-0006 and ANR-21-JSTM-0001.

The first dataset was provided by the Human Connectome Project, WU-Minn Consortium (Principal Investigators: David Van Essen and Kamil Ugurbil; 1U54MH091657) funded by the 16 NIH Institutes and Centers that support the NIH Blueprint for Neuroscience Research; and by the McDonnell Center for Systems Neuroscience at Washington University. Full project accessible at
\url{https://www.humanconnectome.org/study/hcp-young-adult}.

The second dataset is based on a coma patients study that was conducted at Grenoble Alpes University Hospital between February 2015 and March 2018. The protocol was approved by the Ethics committee of the University Hospital of Grenoble Alpes, France (Comité Consultatif pour la Protection des Personnes Sud Est V, ID-RCB 2014-A01873-44/1). The study was registered on ClinicalTrials.gov (Identifier: NCT02647996).

\appendix
\section{Supplementary Materials}

To ensure the reproducibility and completeness of this study, we provide complementary resources:

\begin{itemize}
    \item \textit{Appendix:} detailed mathematical proofs for all propositions and theorems, along with additional figures that complement the main text.
    
    \item \textit{Open-Source Repository:} A public GitLab repository hosting the full source code and data required to reproduce our work. This includes:
    \begin{itemize}
        \item The raw and preprocessed datasets used in the analysis.
        \item All scripts implementing the proposed algorithms (Sliced-Quantile Grid and BGHM online).
        \item Pre-computed results to facilitate immediate inspection.
        \item Specific scripts to regenerate every figure and table presented in this paper.
    \end{itemize}
    The repository is available at: \url{https://gricad-gitlab.univ-grenoble-alpes.fr/chevaual/credible_rectangles_for_comparison_fmri}
\end{itemize}

\subsection{Additional figures}
We include additional figure and table of simulation and methods mentioned in the main text. 

\begin{figure}[ht]
\centering
 \begin{subfigure}{0.3\textwidth}
 \centering
     \includegraphics[width=4cm]{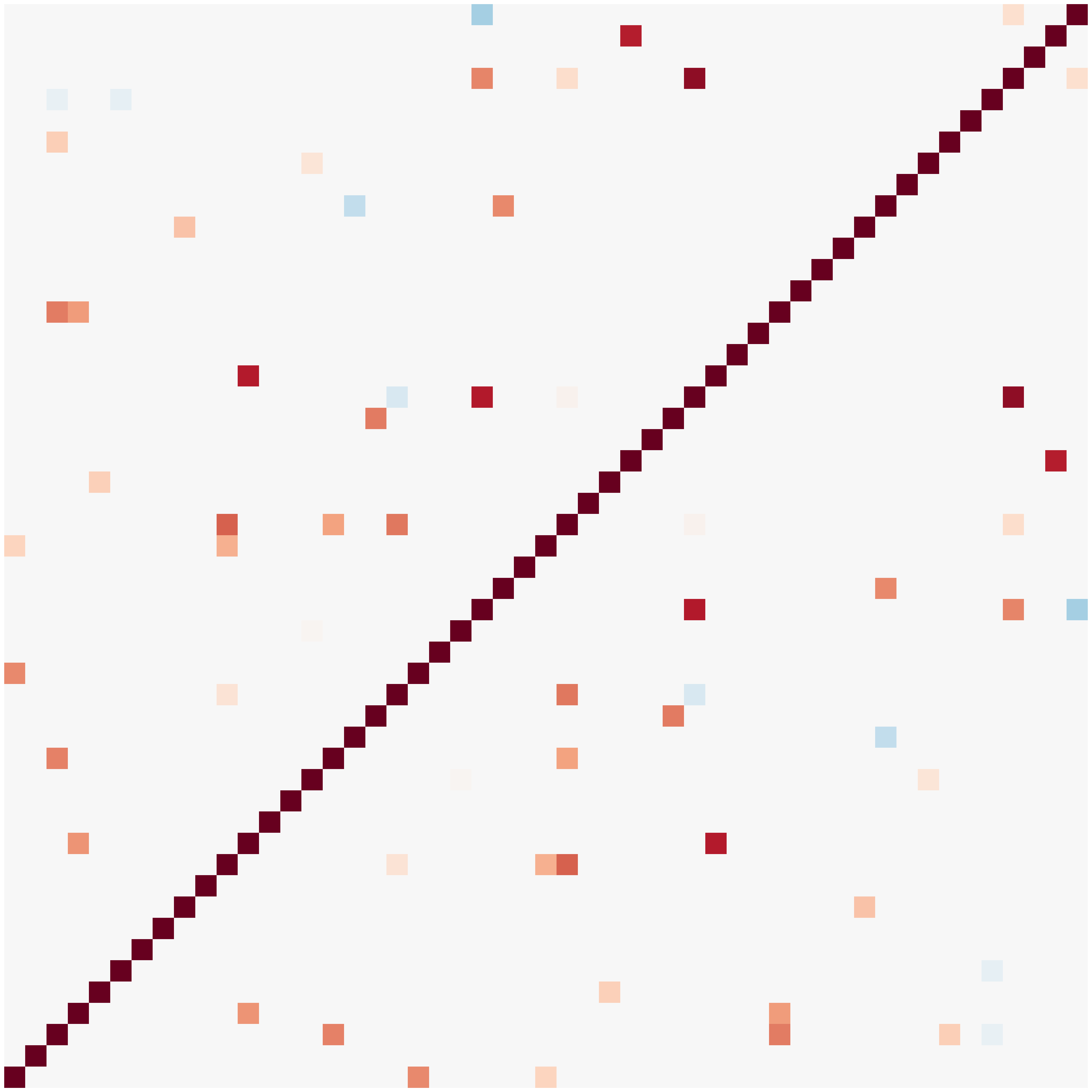}
     \caption{$2.4\%$ of edges}
     \label{fig:a}
 \end{subfigure}
 \hfill
 \begin{subfigure}{0.3\textwidth}
 \centering
     \includegraphics[width=4cm]{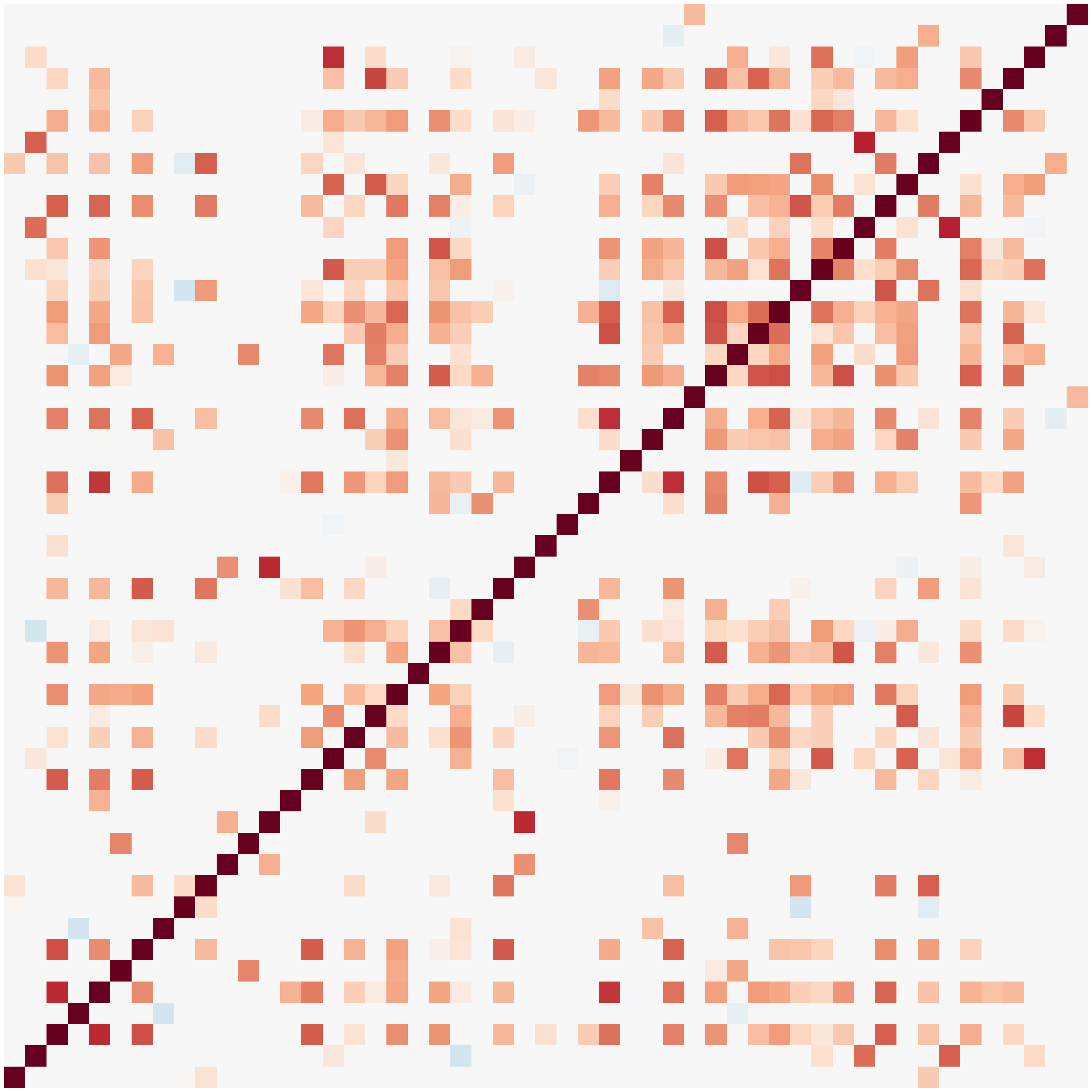}
     \caption{$24\%$ of edges}
     \label{fig:b}
     \hfill
 \end{subfigure}
 \hfill
 \begin{subfigure}{0.3\textwidth}
 \centering
     \includegraphics[width=4cm]{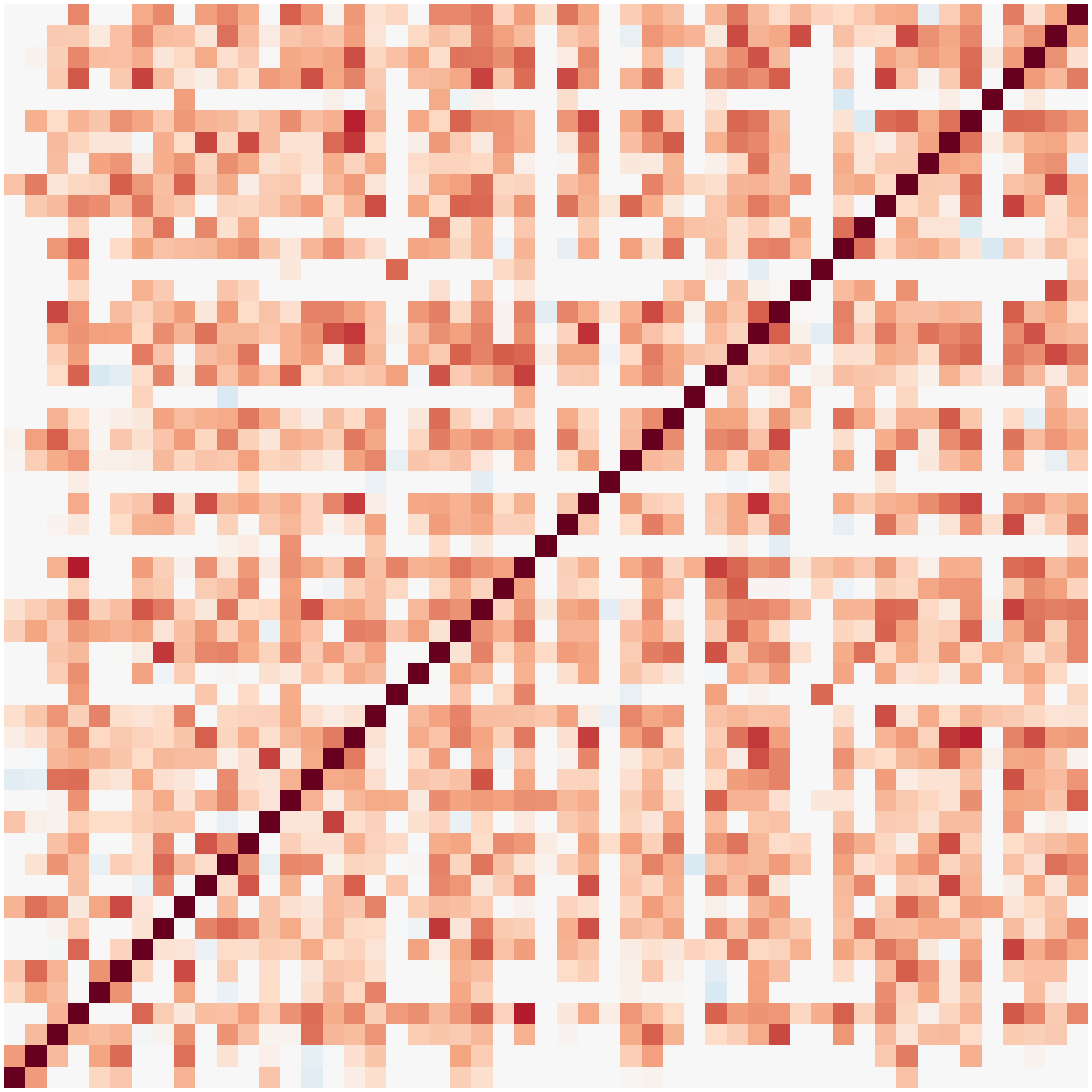}
     \caption{$48\%$ of edges}
     \label{fig:c}
 \end{subfigure}
 \caption{Ground-truth correlation matrices used for simulation study used in Section 5 for Table 2}
 \label{fig:heatmaps}

\end{figure}

\begin{table}[h]
\centering
\caption{Comparison consciousness state (CS) and distance $r$ defined in Proposition 3.2.1 between the credible region and the matrix of reference $\boldsymbol{C}_{control}$ for both exams}
\label{tab:results}
    \centering
\begin{tabular}{rrrrr}
\toprule
ID & CS 1 & CS 2 & Dist 1 & Dist 2 \\
\midrule
3  & MCS & C  & 3.05 & 3.04 \\
4  & C & C  & 3.53 & 1.82 \\
7  & C & C  & 2.36 & 2.81 \\
8  & C & C  & 3.36 & 0.72 \\
9  & MCS & C  & 5.43 & 3.13 \\
15 & C & C  & 3.28 & 3.14 \\
19 & C & C & 3.99 & 5.06 \\
22 & MCS & C & 9.38 & 3.35 \\
25 & C & C  & 2.30 & 1.30 \\
29 & C & C  & 3.13 & 2.23 \\
\bottomrule
\end{tabular}
\end{table}

\subsection{Proofs}

\begin{proof}[Proof of Proposition 3.2]
Since $R_{\theta}$ is a closed rectangle, it is a closed set in $\mathbb{R}^d$ under any norm. For any $\theta_{\text{ref}} \notin R_{\theta}$, the distance to the closed set is attained; thus, there exists $\theta^* \in R_{\theta}$ such that $\| \theta^* - \theta_{\text{ref}} \| = d$, with $d > 0$.

Consider the open ball $B_{\| \cdot \|}(\theta_{\text{ref}}, d) = \{ z \in \mathbb{R}^d : \| z - \theta_{\text{ref}} \| < d \}$. By definition of $d$ as the minimum distance, $B_{\| \cdot \|}(\theta_{\text{ref}}, d) \cap R_{\theta} = \emptyset$. This implies:
\[
\mathbb{P}(\| \theta - \theta_{\text{ref}} \| < d \mid \mathbf{y}) \leq \mathbb{P}(\theta \notin R_{\theta} \mid \mathbf{y}).
\]
Since $\mathbb{P}(\theta \in R_{\theta} \mid \mathbf{y}) \geq 1-\alpha$, we have $\mathbb{P}(\theta \notin R_{\theta} \mid \mathbf{y}) \leq \alpha$, which concludes the proof. 
\end{proof}

\begin{proof}[Proof of Proposition 3.3]
The posterior expected loss is the probability of making at least one directional error:
\[
\mathbb{E}_{\theta \mid \mathbf{y}}[L(\theta, \delta)] = \mathbb{P}\left( \bigcup_{i=1}^d \left( \{ \delta_i^+ = 1, \theta_i \leq \theta_{\text{ref},i} \} \cup \{ \delta_i^- = 1, \theta_i \geq \theta_{\text{ref},i} \} \right) \;\Bigg|\; \mathbf{y} \right).
\]
Substituting the definitions of $\delta_i^+$ and $\delta_i^-$, this probability becomes and we have:
\begin{align*}
    \mathbb{E}_{\theta \mid \mathbf{y}}[L(\theta, \delta)] &=\mathbb{P}\left( \bigcup_{i=1}^d \left( \{ \theta_{\text{ref},i} < q_i(t/2) \text{ and } \theta_i \leq \theta_{\text{ref},i} \} \cup \{ \theta_{\text{ref},i} > q_i(1-t/2) \text{ and } \theta_i \geq \theta_{\text{ref},i} \} \right) \;\Bigg|\; \mathbf{y} \right)\\
    &= \mathbb{P}\left( \bigcup_{i=1}^d \left( \{ \theta_i < q_i(t/2) \} \cup \{ \theta_i > q_i(1-t/2) \} \right) \;\Bigg|\; \mathbf{y} \right)\\
    &=\mathbb{P}( \theta \notin R(t) \mid \mathbf{y} ).
\end{align*}

Since $R(t)$ is a credible region at level $1-\alpha$, we have $\mathbb{E}_{\theta \mid \mathbf{y}}[L(\theta, \delta)] \leq \alpha$. 
\end{proof}

\begin{proof}[Proof of Proposition 3.4]
Let $R_A$ and $R_B$ be $(1-\alpha)$ credible regions for $\theta_A$ and $\theta_B$ respectively, with $R_A \cap R_B = \emptyset$.
By definition of credible regions:
\begin{align*}
\mathbb{P}(\theta_A \in R_A | \textbf{y}) &\geq 1-\alpha \\
\mathbb{P}(\theta_B \in R_B | \textbf{y}) &\geq 1-\alpha
\end{align*}

If $\theta_A = \theta_B = \theta^*$, then for equality to hold we must have $\theta^* \in R_A \cap R_B$. But since $R_A \cap R_B = \emptyset$, we have:
\[
\{\theta_A = \theta_B\} \subseteq \{\theta_A \notin R_A \} \cup \{ \theta_B \notin R_B\}
\]

Therefore:
\[
\mathbb{P}(\theta_A = \theta_B | \textbf{y}) \leq \mathbb{P}(\{\theta_A \notin R_A \} \cup \{ \theta_B \notin R_B\} | \textbf{y})
\]

By the union bound:
\[
\mathbb{P}(\{\theta_A \notin R_A \} \cup \{ \theta_B \notin R_B\} | \textbf{y}) \leq \mathbb{P}(\theta_A \notin R_A | \textbf{y}) + \mathbb{P}(\theta_B \notin R_B | \textbf{y}) \leq \alpha + \alpha = 2\alpha
\]

Thus:
\[
\mathbb{P}(\theta_A = \theta_B | \textbf{y}) \leq 2\alpha
\]
\end{proof}

\begin{proof}[Proof of proposition 3.6]
The three point of this proposition are: 

(i) The Bonferroni-type rectangle verifies Equation~\ref{eq:rectangle}:
\[
\mathbb{P}\left( (\theta_1, \dots, \theta_{d}) \in R(t^{\text{bonf}}_{\alpha}) |\mathbf{y} \right) \geq 1-\alpha. \]

(ii) The Naive rectangle based on $\pi(\theta|\mathbf{y})$ guarantees:

\[P(\theta \in R(t^{\text{naive}}_{\alpha}) |\mathbf{y}) \leq (1-\alpha) \]

(iii) The optimal quantile level $t_{\alpha}$ exists and is included inside the interval  $[\frac{\alpha}{d},\alpha]$

\textit{Proof of (i)}

By definition:

\[
R(t^{\text{bonf}}_{\alpha}) = \prod_{i=1}^d \left[ q_i\left(\frac{\alpha}{2d}\right),\; q_i\left(1-\frac{\alpha}{2d}\right) \right].
\]
The probability that the true parameter vector belongs to this region is:
\[
\mathbb{P}\left( (\theta_1, \dots, \theta_d) \in R(t^{\text{bonf}}_{\alpha}) |\mathbf{y}\right) = \mathbb{P}\left( \bigcap_{i=1}^{d} \left\{q_i\left(\frac{\alpha}{2d}\right) \leq \theta_i \leq q_i\left(1-\frac{\alpha}{2d}\right) \right\} |\mathbf{y}\right).
\]

Using the complement rule and Boole's inequality:
\[
\mathbb{P}\left( \bigcap_{i=1}^{d}  \left\{q_i\left(\frac{\alpha}{2d}\right) \leq \theta_i \leq q_i\left(1-\frac{\alpha}{2d}\right) \right\} |\mathbf{y}\right) = 1 - \mathbb{P}\left( \bigcup_{i=1}^{d}  \left\{\theta_i < q_i\left(\frac{\alpha}{2d}\right) \;\text{or}\; \theta_i > q_i\left(1-\frac{\alpha}{2d}\right) \right\} |\mathbf{y} \right).
\]

By definition of the quantile $q_i(\cdot)$, we have $\mathbb{P}(\theta_i < q_i(\frac{\alpha}{2d})|\mathbf{y}) \leq \frac{\alpha}{2d}$ and $\mathbb{P}(\theta_i > q_i(1-\frac{\alpha}{2d})|\mathbf{y}) \leq \frac{\alpha}{2d}$. Therefore:
\[
\mathbb{P}\left( (\theta_1, \dots, \theta_{d} ) \in R_{\text{bonf}}^{(1-\alpha)} |\mathbf{y}\right) \geq 1 - \sum_{i=1}^{d} \left( \frac{\alpha}{2d} + \frac{\alpha}{2d} \right) = 1 - \alpha.
\]

This proves that the Bonferroni-type rectangle is a credible region at a level $1-\alpha$, without independence assumption between the $\theta_i$. 

\textit{Proof of (ii)}

The Naive Rectangle is defined by : 
\[
R(t^{\text{naive}}_{\alpha}) = \prod_{i=1}^d \left[ q_i\left(\frac{\alpha}{2}\right),\; q_i\left(1-\frac{\alpha}{2}\right) \right].
\]

In particular we have : 
\[ \theta \in R(t^{\text{naive}}_{\alpha}) \implies \left\{q_1\left(\frac{\alpha}{2}\right) \leq \theta_1 \leq q_1\left(1-\frac{\alpha}{2}\right)\right\}.\]

Hence:

\[P(\theta \in R(t^{\text{naive}}_{\alpha}) |\mathbf{y}) \leq \mathbb{P}(\left[q_1\left(\frac{\alpha}{2}\right),q_1\left(1-\frac{\alpha}{2}\right)\right] | \mathbf{y})=1-\alpha.\]

\textit{Proof of (iii)}

The two bounds from the naive and Bonferroni-type quantile levels in (i) and (ii), combined with the fact that the probability of the quantile-based region only increases as the quantile level decreases, prove point (iii): the existence of a level $t_{\alpha} \in [\frac{\alpha}{d},\alpha]$ such that the probability of the rectangle set $R(t_{\alpha})$ is exactly $1 - \alpha$.
\end{proof}

\begin{proof}[Proof of Proposition 3.7]

We want to prove that he rectangle constructed by the BGHM algorithm in the case where $k$ is equal to $\lceil (1-\alpha) M \rceil$ provides an estimation of the optimal rectangle $R(t_{\alpha})$, as well as an estimation of $t_{\alpha}$: 
\[\hat{t}_{\alpha}=\frac{2(M+1-j^{\star})}{M+1}.\]

Results from \cite{besag_bayesian_1995} established that the rectangle defined in the Algorithm~\ref{algo:besag} is the smallest rectangle of this type containing at least $k$ of the $M$ samples. More precisely it implies that the proportion of samples inside this rectangle is between $\frac{k}{M}$ and $\frac{k+2d}{M}$. In this case if we define $k$ as a function of $M$ such that $k=\lceil=(1-\alpha)M\rceil$, then the associated rectangle $R_{M,k}$ given by the previous algorithm has the following property:
\[
\lim_{M \to \infty} \mathbb{P}(\theta\in R_{M,k})=1-\alpha.
\]

From this point we just have to prove that this hyperrectangle estimates a quantile-based rectangle to prove that this is an estimation of the optimal rectangle.

Given $M$ samples for parameter $\theta_i$, the marginal order statistics $\theta_i^{(1)}\le\cdots\le\theta_i^{(M)}$ provide empirical quantile estimates:
\[
\theta_i^{[r]}=\widehat{q}_i\!\left(\frac{r}{M+1}\right).
\]

Thus the hyperrectangle
\[
\prod_{i=1}^{d}\bigl[\theta_i^{[M+1-j^{\star}]},\ \theta_i^{[j^{\star}]}\bigr]
\]
estimates the theoretical region
\[
\widehat{R}(t)=\prod_{i=1}^{d}\bigl[\,\widehat{q}_i(t/2),\;\widehat{q}_i(1-t/2)\bigr],
\]
where the quantile level $t$ satisfies $t=\frac{2(M+1-j^{\star})}{M+1}$.

The previous result about the asymptotic probability of this rectangle tells us that this is an estimation of $R(t_{\alpha})$, and that an estimation of $t_{\alpha}$ is given by :
\[\hat{t}_{\alpha}=\frac{2(M+1-j^{\star})}{M+1}.\]
\end{proof}

\begin{proof}[Proof of Proposition 3.8]

Let $A, B \subset \mathbb{R}^d$ be two disjoint batches of points.  
For any integer $m \ge 1$, define:
\[
\mathcal{E}_m(A) = \text{the set of $m$ points in $A$ with the largest values of } S(\cdot, A),
\]
and similarly $\mathcal{E}_m(B)$ for $B$ and $\mathcal{E}_m(A \cup B)$ for $A \cup B$.

We want to prove that:
\[
\mathcal{E}_m(A \cup B) \subseteq \mathcal{E}_m(A) \cup \mathcal{E}_m(B).
\]

We prove the contrapositive: if a point $\theta \notin \mathcal{E}_m(A) \cup \mathcal{E}_m(B)$, then $\theta \notin \mathcal{E}_m(A \cup B)$.

Suppose $\theta \in A$ (the case $\theta \in B$ is symmetric). Since $\theta \notin \mathcal{E}_m(A)$, there exist at least $m$ points in $A$ with strictly larger scores than $\theta$ when computed within $A$:
\[
|\{ \theta' \in A : S(\theta', A) > S(\theta, A) \}| \ge m.
\]

Now consider the scores computed within the full set $A \cup B$. For any point $\theta' \in A$, its rank on each coordinate can only decrease or stay the same when adding points from $B$, since additional points can only insert themselves ahead in the ordering. Consequently, for any $\theta' \in A$:
\[
\min_i r_i(\theta', A \cup B) \le \min_i r_i(\theta', A), \quad \max_i r_i(\theta', A \cup B) \ge \max_i r_i(\theta', A).
\]

From the definition of $S$, this implies $S(\theta', A \cup B) \ge S(\theta', A)$. In particular, the $m$ points in $A$ with larger scores than $\theta$ in $A$ maintain larger scores than $\theta$ in $A \cup B$.

Thus, there are at least $m$ points in $A \cup B$ with scores strictly larger than $S(\theta, A \cup B)$, meaning $\theta \notin \mathcal{E}_m(A \cup B)$.
\end{proof}

\begin{proof}[Proof of Theorem 4.1]

We suppose that the model is under a Gaussian likelihood with zero mean, with a prior on the covariance matrix that is an inverse-Wishart distribution where the eigenvalues of $\boldsymbol{\boldsymbol{\Sigma}}_0$ are bounded below by a strictly positive constant. In this situation we define the strict upper-triangular vectorization $\textup{vech}_0 : \mathbb{R}^{p \times p} \to \mathbb{R}^{p(p-1)/2}$ of the corresponding correlation matrix $\boldsymbol{C}$.

We want to prove that in this case we have:
\[
\sqrt{n}\,\textup{vech}_0(\boldsymbol{C} - \boldsymbol{C}_n) | \mathbf{X}_n 
\;\stackrel{\mathrm{TV}}{\approx}\; 
\mathrm{N}_{p(p-1)/2}\bigl( \mathbf{0}, V \bigr),
\]

where $\boldsymbol{C}_n$ is the empirical correlation matrix derived from $\boldsymbol{S}_n$.

Define the transformation $g: \mathbb{R}^{p(p+1)/2} \to \mathbb{R}^{p(p-1)/2}$ mapping the unique elements of $\boldsymbol{\boldsymbol{\Sigma}}$ to the strict upper-triangular elements of the correlation matrix:
\[
g(\text{vech}(\boldsymbol{\boldsymbol{\Sigma}})) = \text{vech}_0(\boldsymbol{C}),
\]
where $\boldsymbol{C} = \boldsymbol{D}^{-1/2}\boldsymbol{\boldsymbol{\Sigma}}\boldsymbol{D}^{-1/2}$ and $\boldsymbol{D} = \text{diag}(\boldsymbol{\boldsymbol{\Sigma}})$.

The function $g$ is differentiable at $\boldsymbol{\boldsymbol{\Sigma}}$ because the diagonal entries are supposed to be bounded away from zero. Let $\nabla g$ denote its Jacobian matrix evaluated at $\boldsymbol{\boldsymbol{\Sigma}}$.

From the BvM for covariance matrices, we have
\[
\sqrt{n}\,\text{vech}(\boldsymbol{\boldsymbol{\Sigma}} - \boldsymbol{S}_n) | \mathbf{X}_n \stackrel{\mathrm{TV}}{\approx} \mathrm{N}(\boldsymbol{0},\, \boldsymbol{V}_{\boldsymbol{\Sigma}}),
\]
with $\boldsymbol{V}_{\boldsymbol{\Sigma}} = 2\boldsymbol{B}_p^\top(\boldsymbol{S}_n \otimes \boldsymbol{S}_n)\boldsymbol{B}_p$.

Applying the delta method (which preserves convergence in total variation for differentiable transformations), and using the linearity of $\textup{vech}$ and $\textup{vech}_0$ yields
\[
\sqrt{n}\,(g(\text{vech}(\boldsymbol{\boldsymbol{\Sigma}})) - g(\text{vech}(\boldsymbol{S}_n))) | \mathbf{X}_n \stackrel{\mathrm{TV}}{\approx} \mathrm{N}(\boldsymbol{0},\, \nabla g \, \boldsymbol{V}_{\boldsymbol{\Sigma}} \, \nabla g^\top),
\]

which results in:

\[
\sqrt{n}\,(\text{vech}_0(\boldsymbol{C}) - \text{vech}_0(\boldsymbol{C}_n)) | \mathbf{X}_n \stackrel{\mathrm{TV}}{\approx} \mathrm{N}(\boldsymbol{0},\, \nabla g \, \boldsymbol{V}_{\boldsymbol{\Sigma}} \, \nabla g^\top),
\]

where $\boldsymbol{C}_n$ is the empirical correlation matrix.
\end{proof}

\begin{proof}[Proof of Proposition 4.2]

We aim to prove that $\widehat{A}_n$ converges in probability to $A_0$ as $n \to \infty$, i.e., the support estimator is consistent for any sequence $\alpha_n \to 0$ that respects the Gaussian tail decay implied by the BvM.

The Bernstein-von Mises Theorem for correlation matrices vectorization~\ref{theorem:bmv} guarantees that, under standard regularity conditions, the posterior distribution $\pi(\text{vech}_0(\boldsymbol{C}) | \mathbf{X}_n)$ contracts at rate $1/\sqrt{n}$ around the true $\text{vech}_0(\boldsymbol{C}_0)$.
In particular, for every $(i,j)$:
\begin{itemize}
    \item If $\boldsymbol{C}_{0,ij} \neq 0$, the posterior of $\boldsymbol{C}_{ij}$ concentrates around $\boldsymbol{C}_{0,ij}$. The credible interval $I_{ij}(t_{\alpha_n})$ is asymptotically centered around $\boldsymbol{C}_{0,ij}$ with width of order $O_p(1/\sqrt{n})$. Provided $\alpha_n \to 0$ sufficiently slowly to maintain power against local alternatives, the interval excludes $0$ with probability tending to one, as $\boldsymbol{C}_{0,ij}$ is bounded away from zero.

    \item If $\boldsymbol{C}_{0,ij} = 0$, the posterior of $\boldsymbol{C}_{ij}$ is asymptotically normal with mean zero and variance of order $1/n$. The sequence $\alpha_n \to 0$ is chosen to decay fast enough to control false discoveries (shrinking the interval relative to fixed thresholds) but slow enough to respect the Gaussian contraction rate. Consequently, the interval contracts toward zero, ensuring $0 \in I_{ij}(t_{\alpha_n})$ with probability tending to one.
\end{itemize}
\end{proof}

\bibliographystyle{plainnat} 
\bibliography{references}

\end{document}